\definecolor{shadecolor}{rgb}{0.9, 0.9, 0.86}
\def\Re{\mathrm {Re}\,}
\def\Im{\mathrm {Im}\,}
\def\wt{\widetilde}
\def\wh{\widehat}
\def\C{\mathbb{C}}
\def\R{\mathbb{R}}
\def\G{\Gamma}
\def\1{\mathbf{1}}
\def\H{\mathrm{H}}
\def\b{\beta}
\def\a{\alpha}
\def\d{\mathrm d}
\def\O{\mathcal{O}}
\def\e{\mathrm{e}}
\def\i{\mathrm{i}}
\def\t{\mathbf{t}}
\def\u{\mathbf{u}}
\def\pa{\partial}
\def\res{\mathop{\mathrm {res}}\limits_}
\def\tr{\mathrm{tr}\,}
\def\s{\sigma}
\def\0{\boldsymbol{0}}
\def\S{\mathfrak{S}}
\def\ll{\lambda}
\def\H{\mathcal{H}}
\def\U{\mathrm{U}}
\def\J{\mathcal{J}}
\def\CC{\mathcal{C}}
\def\cyc{\mathrm{cyc}}
\def\Id{{\rm Id}}
\def\P{\mathcal{P}}
\def\be{\begin{equation}}
\def\ee{\end{equation}}
\newtheorem{theorem}{Theorem}[section]
\newtheorem{definition}[theorem]{Definition}
\newtheorem{example}[theorem]{Example}
\newtheorem{lemma}[theorem]{Lemma}
\newtheorem{remark}[theorem]{Remark}
\newtheorem{proposition}[theorem]{Proposition} 
\newtheorem{corollary}[theorem]{Corollary}
\newcommand*\pFq[6][8]{
  \begingroup 
  \pFqmuskip=#1mu\relax
  \mathchardef\normalcomma=\mathcode`, 
  \mathcode`\,=\string"8000            
  \begingroup\lccode`\~=`\,
  \lowercase{\endgroup\let~}\pFqcomma  
  {}_{#2}F_{#3}{\left(\left.\genfrac..{0pt}{}{#4}{#5}\right|#6\right)}
  \endgroup
}
\newcommand{\pFqcomma}{{\normalcomma}\mskip\pFqmuskip}
\begin{document}

\pagestyle{myheadings}
\numberwithin{equation}{section}

\title{Jacobi Ensemble, Hurwitz Numbers and Wilson Polynomials}
\date{}

\author{Massimo Gisonni\footnote{\text{SISSA, Trieste, Italy;} \texttt{massimo.gisonni@sissa.it}}, Tamara Grava\footnote{\text{School of Mathematics, Bristol University, UK} and \text{SISSA, Trieste, Italy;} \texttt{grava@sissa.it}}, and Giulio Ruzza\footnote{\text{IRMP, UCLouvain, Louvain-la-Neuve, Belgium;} \texttt{giulio.ruzza@uclouvain.be}}}

\maketitle

\begin{abstract}
We express the topological expansion of the Jacobi Unitary Ensemble in terms of triple monotone Hurwitz numbers.
This completes the combinatorial interpretation of the topological expansion of the classical unitary invariant matrix ensembles.
We also provide effective formul\ae\ for generating functions of multipoint correlators of the Jacobi Unitary Ensemble in terms of Wilson polynomials, generalizing the known relations between  one point correlators and Wilson polynomials.
\end{abstract}

\begin{flushright}
{\it 
To the memory of Boris Dubrovin,\\
our mentor, teacher, \\
and a source of ever-lasting inspiration. 
}
\end{flushright}

\section{Introduction and results}

Throughout this paper we denote $\H_N(I)$ the set of hermitian matrices of size $N=1,2,\dots$ with eigenvalues in the interval $I\subseteq \R$. In particular $\H_N(I)$ can be endowed with the Lebesgue measure
\be
\d X=\prod_{i=1}^N\d X_{ii}\prod_{1\leq i<j\leq N}\d \Re X_{ij}\,\d \Im X_{ij}.
\ee
The \emph{Jacobi Unitary Ensemble} (JUE) is defined by the following measure on $\H_N(0,1)$
\be
\label{JUEmeasure}
\d m_N^{\sf J}(X)=\frac 1{C_N^{\sf J}}{\det}^\a (X){\det}^\b(\1-X)\d X,
\ee
with parameters $\a,\b$ satifying $\Re\a,\Re\b>-1$.
The normalizing constant
\be \label{NormalizationC}
C_N^{\sf J}=\int_{\H_N(0,1)}{\det}^\a (X){\det}^\b(\1-X)\d X=\pi^{\frac{N(N-1)}{2}} \prod_{k=0}^{N-1} \frac{\Gamma (\a+k+1) \Gamma (\beta+k+1)}{\G(\a+\beta+2N-k)}
\ee
ensures that $\d m_N^{\sf J}$ has total mass $1$; the above integral can be computed by a standard formula \cite{Deift} in terms of the norming constants $h_\ell^{\sf J}$ of the monic Jacobi polynomials, see \eqref{NormingC}.

If $\a$ and $\b$ are integers, so that $M_\a=\a+N$ and $M_\b=\b+N$ are integers, the probability measure \eqref{JUEmeasure} describes the distribution of the matrix $X=(W_A+W_B)^{-1/2}W_A(W_A+W_B)^{-1/2}\in\H_N(0,1)$ where $W_A=A^\dagger A$ and $W_B=B^\dagger B$ are the \emph{Wishart matrices} associated with the random matrices $A,B$ of size $M_\a \times N,M_\beta \times N$ respectively, with i.i.d. Gaussian entries \cite{F2010}.

Given positive integers $k_1,\dots,k_\ell\geq 0$ we shall consider the expectation values
\be
\label{correlatorsintro}
\left\langle\prod_{j=1}^\ell\tr X^{\pm k_j}\right\rangle:=\int_{\H_N(0,1)}\left(\prod_{j=1}^\ell\tr X^{\pm k_j}\right)\d m_N^{\sf J}(X),\quad
\ee
which we term (respectively, positive and negative) \emph{JUE correlators}.

\begin{remark}
Although \eqref{correlatorsintro} is defined only for $\Re\a\pm\sum_{i=1}^\ell k_i>-1,\Re\b>-1$, it will be clear from the formul\ae\ of Corollary \ref{corollaryWilson} below that the JUE correlators extend to rational functions of $N,\a,\b$.
\end{remark}

\subsection{JUE correlators and Hurwitz numbers}

Our first result gives a combinatorial interpretation for the large $N$ \emph{topological expansion} \cite{ME2003,EMP2008,EO2007} of JUE correlators \eqref{correlatorsintro}.
This provides an analogue of the classical result of Bessis, Itzykson and Zuber \cite{BIZ} expressing correlators of the Gaussian Unitary Ensemble as a generating function counting ribbon graphs weighted according to their genus, see also \cite{ME2003}.
At the same time, it is more similar in spirit (and actually a generalization, see Remark \ref{LaguerreLimit}) of the analogous result for the Laguerre Unitary Ensemble, whose correlators are expressed in terms of \emph{double} monotone Hurwitz numbers \cite{CDO2018}, and (for a specific value of the parameter) in terms of Hodge integrals \cite{DY2017b,DLYZ2020,GGR2020}; in particular in \cite{GGR2020} we provide an ELSV-type formula \cite{ELSV2001} for weighted double monotone Hurwitz numbers in terms of Hodge integrals.

Our description of the JUE correlators involves triple monotone Hurwitz numbers, which we promptly define; to this end let us recall that a \emph{partition} is a sequence $\ll=(\ll_1,\dots,\ll_{\ell})$ of integers $\ll_1\geq\dots\geq\ll_\ell>0$, termed \emph{parts} of $\ll$; the number $\ell$ is called length of the partition, denoted in general $\ell(\ll)$, and the number $|\ll|=\sum_{j=1}^{\ell}\ll_j$ is called weight of the partition.
We shall use the notation $\ll\vdash n$ to indicate that $\ll$ is a partition of $n$, i.e. $|\ll|=n$.

We denote $\S_n$ the group of permutations of $\{1,\dots,n\}$; for any $\ll\vdash n$ let $\cyc(\ll)\subset\S_n$ the conjugacy class of permutations of cycle-type $\ll$.
It is worth recalling that the centralizer of any permutation in $\cyc(\ll)$ has order
\be
\label{zll}
z_\ll=\prod_{i\geq 1}i^{m_i}(m_i!),\qquad m_i=\left|\left\lbrace j:\ \ll_j=i\right\rbrace\right|,
\ee
where the symbol $|\cdot|$ denotes the cardinality of the set.

Hurwitz numbers were introduced by Hurwitz to count the number of non-equivalent branched coverings of the Riemann sphere with a given set of branch points and branch profile \cite{Hurwitz}.
This problem is essentially equivalent to count factorizations in the symmetric groups with permutations of assigned cycle-type and, possibly, other constraints.
It is a problem of long-standing  interest  in combinatorics, geometry, and physics \cite{O2000,GGPN2014,HO2015,GPH2015,GGPN2016,GGPN2017,ACEH2020}.  

The type of Hurwitz numbers relevant to our study is defined as follows.

\begin{shaded}
\begin{definition}\label{def:hurwitz}
Given $n\geq 0$, three partitions $\ll,\mu,\nu\vdash n$ and an integer $g\geq 0$, we define $h_g(\ll,\mu,\nu)$ to be the number of tuples $(\pi_1,\pi_2,\tau_1,\dots,\tau_r)$ of permutations in $\S_n$ such that
\begin{enumerate}
\item $r=2g-2-n+\ell(\mu)+\ell(\nu)+\ell(\ll)$,
\item $\pi_1\in\cyc(\mu)$, $\pi_2\in\cyc(\nu)$, 
\item $\tau_i=(a_i,b_i)$ are transpositions, with $a_i<b_i$ and $b_1\leq\cdots\leq b_r$, and
\item $\pi_1\pi_2\tau_1\cdots\tau_r\in\cyc(\ll)$.
\end{enumerate}
\end{definition}
\end{shaded}

The relation of these Hurwitz numbers to the JUE is expressed by the following result.
\begin{shaded}
\begin{theorem}\label{thmhurwitz}
Under the re-scaling $\a=(c_\a-1)N$, $\b=(c_\b-1)N$, for any partition $\ll$ we have the following Laurent expansions as $N\to\infty$;
\begin{align}
(-1)^{|\ll|}N^{\ell(\ll)}\frac {|\ll|!}{z_\ll}\left\langle\prod_{j=1}^\ell\tr X^{\ll_j}\right\rangle&=
\sum_{g\geq 0}\frac 1{N^{2g-2}}\sum_{\mu,\nu\vdash|\ll|}
\frac{c_\a^{\ell(\nu)}}{(-c_\a-c_\b)^{\ell(\mu)+\ell(\nu)+\ell(\ll)+2g-2}}
h_g(\ll,\mu,\nu),
\\
(-1)^{|\ll|}N^{\ell(\ll)}\frac {|\ll|!}{z_\ll}\left\langle\prod_{j=1}^\ell\tr X^{-\ll_j}\right\rangle&=
\sum_{g\geq 0}\frac 1{N^{2g-2}}\sum_{\mu,\nu\vdash|\ll|}
\frac{\left(1-c_\a-c_\b\right)^{\ell(\nu)}}{(c_\a-1)^{\ell(\mu)+\ell(\nu)+\ell(\ll)+2g-2}}
h_g(\ll,\mu,\nu),
\end{align}
where $z_\ll$ is given in \eqref{zll} and $h_g(\ll,\mu,\nu)$ are the monotone triple Hurwitz numbers of Definition \ref{def:hurwitz}.
\end{theorem}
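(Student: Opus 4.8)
The plan is to connect the JUE correlators to the Jacobi orthogonal polynomials, then to extract the combinatorics via the $1/N$ expansion of the relevant recurrence coefficients, and finally to identify the resulting sums with the monotone triple Hurwitz numbers of Definition~\ref{def:hurwitz}. First I would express the correlators $\left\langle\prod_{j}\tr X^{\pm\ll_j}\right\rangle$ through the Christoffel--Darboux kernel of the monic Jacobi polynomials on $(0,1)$; equivalently, one obtains (by the standard determinantal structure, as invoked after \eqref{NormalizationC}) a closed formula for the generating function of multipoint correlators in terms of the three-term recurrence data, or in terms of the norming constants $h_\ell^{\sf J}$ appearing in \eqref{NormingC}. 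This is essentially Corollary~\ref{corollaryWilson}, which the Remark after \eqref{correlatorsintro} already advertises; I would take that as the analytic input, so that everything downstream is combinatorial.

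Next I would set up the $1/N$ topological expansion. Under the rescaling $\a=(c_\a-1)N$, $\b=(c_\b-1)N$, the recurrence coefficients of the Jacobi polynomials have explicit rational expressions in $N,\a,\b$ whose large-$N$ asymptotics are governed by the endpoints of the equilibrium measure; expanding them in powers of $1/N$ produces, order by order, polynomial expressions in the two combinations $c_\a$ and $-c_\a-c_\b$ (respectively $c_\a-1$ and $1-c_\a-c_\b$ in the negative-power case), matching the denominators and numerators displayed in the Theorem. The key algebraic step is to organize the correlator formula so that each power of $N^{-1}$ contributes a factor attached to a transposition $\tau_i$, each trace $\tr X^{\ll_j}$ contributes a boundary cycle of length $\ll_j$, and the two remaining free permutations $\pi_1\in\cyc(\mu)$, $\pi_2\in\cyc(\nu)$ arise from the two ``endpoint'' structures of the Jacobi weight (the $\det^\a X$ and $\det^\b(\1-X)$ factors). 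The bookkeeping $r=2g-2-n+\ell(\mu)+\ell(\nu)+\ell(\ll)$ is then exactly the Euler-characteristic constraint, and the monotonicity condition $b_1\le\cdots\le b_r$ on the transpositions is the combinatorial shadow of the ordered expansion of a product of resolvents/recurrence operators, as in the Laguerre case \cite{CDO2018}.

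The cleanest route to the Hurwitz-number identification is to pass through the symmetric-function side: rewrite the correlator generating function as a sum over partitions using the Jucys--Murphy elements, whose elementary symmetric functions count precisely monotone sequences of transpositions with the ordering in condition (3), and whose evaluation on the relevant characters (or on the content-weighted basis natural to Jacobi, i.e. Wilson polynomials at integer nodes) produces the powers of $c_\a$ and $-c_\a-c_\b$ with the stated exponents; the two permutations $\pi_1,\pi_2$ then enter symmetrically as the two ``source'' conjugacy classes. Concretely, I expect the identity to follow from a single generating-function manipulation: express $\left\langle\prod_j\tr X^{\ll_j}\right\rangle$ via the action of power-sum operators on $1$ in the Jacobi-polynomial basis, use the Jucys--Murphy/monotone-walk interpretation of the resulting ordered product, and then read off $h_g(\ll,\mu,\nu)$ by collecting terms of fixed $\ell(\mu),\ell(\nu)$ and fixed power of $1/N$. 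The negative-power case is handled by the same computation after the substitution $X\mapsto\1-X$ exchanges the roles of $\a$ and $\b$ and sends $X^{-1}$-type correlators to $X$-type ones, which explains why the answer is obtained from the first formula by $c_\a\mapsto 1-c_\a$ (equivalently $c_\a\leftrightarrow c_\b$ up to the shift), giving $c_\a\to c_\a-1$ in the denominator and $-c_\a-c_\b\to 1-c_\a-c_\b$ in the numerator.

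The main obstacle is the combinatorial identification step: matching the monotonicity constraint $b_1\le\cdots\le b_r$ and the precise count of the two free permutations $\pi_1,\pi_2$ against the terms produced by the recurrence-coefficient expansion. In the Laguerre case one free permutation suffices and the Jucys--Murphy argument is by now standard; the Jacobi case has a genuinely new feature, namely the \emph{second} endpoint contributing a second independent conjugacy class $\nu$, and one must check that the two contributions assemble into the symmetric product $\pi_1\pi_2$ with \emph{no} extra ordering constraint between the letters of $\pi_1,\pi_2$ and the $\tau_i$'s beyond what is stated. I would verify this first on the one-point correlators $\ell(\ll)=1$ (where the known relation to Wilson polynomials pins down every coefficient and one can cross-check small $g$ by hand), then bootstrap to general $\ll$ by the multiplicativity of the generating-function formula in Corollary~\ref{corollaryWilson}, since that formula expresses multipoint correlators as a determinant/sum built from the one-point data in a way that mirrors the disjoint-cycle structure on the Hurwitz side.
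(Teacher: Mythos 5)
Your proposal does not follow the paper's route, and more importantly it leaves the central step as an acknowledged ``obstacle'' rather than an argument. The paper does \emph{not} prove Theorem \ref{thmhurwitz} by expanding the Riemann--Hilbert/recurrence data of Corollary \ref{corollaryWilson} in powers of $1/N$ and matching terms; that machinery (Sections \ref{sec3}--\ref{sec4}) is used only for Theorem \ref{thmcumulants} and Corollary \ref{corollaryWilson}, and the authors stress that the Hurwitz-number proof is self-contained. The actual mechanism is: (i) expand the partition functions $Z_N^\pm(\u)$ of \eqref{Zpm} in the Schur basis (Lemma \ref{lemmamatrixmodelschur}); (ii) evaluate the coefficients $c_{\ll,N}^\pm$ \emph{exactly} by the Schur--Selberg integral \eqref{selberg}, obtaining a product over the boxes $(i,j)$ of the Young diagram of a rational function of the content $j-i$ (equation \eqref{cpartitions}); (iii) recognize this content-product as $r_\ll^{(G,\epsilon)}=\prod_{(i,j)\in\ll}G(\epsilon(j-i))$ with $\epsilon=1/N$ for the explicit rational functions $G^\pm$ of Corollary \ref{cormain}, which by Proposition \ref{prop} (the Jucys--Murphy eigenvalue identity \eqref{eigenvalues} applied to the idempotents \eqref{idempotents}) identifies $Z_N^\pm$ with generating functions of multiparametric weighted Hurwitz numbers; (iv) expand $\prod_a G^\pm(\epsilon\J_a)$ factor by factor via \eqref{JMvsC} to obtain the two free conjugacy classes $\mu,\nu$ and the monotone transposition string (Proposition \ref{propinterpretationhurwitz}). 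Your sketch contains echoes of (iv) but is missing the bridge (ii)--(iii) entirely: without the exact content-product formula for the Schur coefficients there is no way to ``read off'' the Hurwitz numbers from the $1/N$ expansion of Wilson-polynomial data, and nothing in your proposal supplies a substitute. The statement is an exact identity of rational functions of $N$, so invoking equilibrium-measure asymptotics of recurrence coefficients is also beside the point.

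Two concrete errors compound the gap. First, the negative-moment case cannot be reduced to the positive one by $X\mapsto\1-X$: that substitution produces $\tr(\1-X)^k$, not $\tr X^{-k}$, and the claimed symmetry ``$c_\a\mapsto 1-c_\a$'' does not match the actual transformation (comparing $G^+$ and $G^-$ in Corollary \ref{cormain}, the weight-generating parameters go $1/c_\a\mapsto -1/(c_\a+c_\b-1)$ and $-1/(c_\a+c_\b)\mapsto 1/(c_\a-1)$, which is not induced by any substitution on $X$). The paper instead handles negative moments via Lemma \ref{lemma:schurinverse}, rewriting $\chi_\ll(\underline x^{-1})$ in terms of the complementary partition $\wh\ll$ and re-evaluating the Selberg integral. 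Second, your proposed ``bootstrap to general $\ll$ by multiplicativity'' of the multipoint formulas cannot work: the Hurwitz numbers $h_g(\ll,\mu,\nu)$ involve permutations acting on all $|\ll|$ letters simultaneously and do not factor over the parts of $\ll$, and correspondingly the multipoint generating functions are traces of products of $R$-matrices at distinct points, not products of one-point data.
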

\end{shaded}

The proof is in Section \ref{sec2}. There is a similar result for the Laguerre Unitary Ensemble (LUE) \cite{CDO2018} which is recovered by the limit explained in Remark \ref{LaguerreLimit}. However, the proof presented in this paper uses substantially different methods than those employed in \cite{CDO2018}; in particular our proof is completely self-contained and uses the notion of \emph{multiparametric weighted Hurwitz numbers}, see e.g. \cite{HO2015} and Section \ref{secHurwitz}.

\begin{remark}[Connected correlators and connected Hurwitz numbers] \label{ConnectedH}
By standard combinatorial methods \cite{Stanley} it is possible to conclude from Theorem \ref{thmhurwitz} that the \emph{connected} JUE correlators
\be
\label{connectedcorrelators}
\left\langle\prod_{j=1}^\ell\tr X^{\pm\ll_j}\right\rangle^{\sf c}=\sum_{\mathcal{P}\text{ partition of }\{1,\dots,\ell\}}(-1)^{|\mathcal{P}|-1}(|\mathcal{P}|-1)!\prod_{A\in\mathcal{P}}\left\langle\prod_{a\in A}\tr X^{\pm\ll_a}\right\rangle
\ee
admit the same large $N$ expansions as in Theorem \ref{thmhurwitz}, with the Hurwitz numbers $h_g(\ll,\mu,\nu)$ replaced by their \emph{connected} counterparts $h_g^{\sf c}(\ll,\mu,\nu)$.
The latter are defined as the number of tuples $(\pi_1,\pi_2,\tau_1,\dots,\tau_r)$ satisfying {\rm (1)--(4)} in Definition \ref{def:hurwitz} and the additional constraint that the subgroup generated by $\pi_1,\pi_2,\tau_1,\dots,\tau_r$ acts transitively on $\{1,\dots,n\}$.
\end{remark}

\subsection{Computing correlators of hermitian models}

To provide an effective computation of the JUE correlators we first consider the general case of a measure on $\H_N(I)$ of the form
\be
\label{generalmeasure}
\d m_N(X)=\frac 1{C_N}\exp\tr V(X)\d X,
\ee
with normalizing constant $C_N=\int_{\H_N(I)}\exp\tr V(X)\d X$. 
Here $V(x)$ is a smooth function of $x\in I^\circ$ (the interior of $I$) and we assume that $\exp V(x)=\O\left(|x-x_0|^{-1+\varepsilon}\right)$ for some $\varepsilon>0$ as $x\in I^\circ$ approaches a finite endpoint $x_0$ of $I$; further, if $I$ extends to $\pm\infty$ we assume that $V(x)\to-\infty$ fast enough as $x\to\pm\infty$ in order for the measure \eqref{generalmeasure} to have finite moments of all orders, so that the associated orthogonal polynomials exist.
The expression $\tr V(X)$ in \eqref{generalmeasure} for an hermitian matrix $X$ is defined via the spectral theorem. 
The JUE is recovered for $I=[0,1]$ and $V(x)=\a\log x+\b\log(1-x)$, $\Re \a,\Re\b>-1$.

Introduce the \emph{cumulant functions}
\be
\label{cumulantfunctions}
\mathscr C_{\ell}(z_1,\dots,z_\ell):=\int_{\H_N(I)}\prod_{i=1}^\ell\tr\left[\left(z_i-X\right)^{-1}\right]\d m_N(X),\qquad\ell\geq 1,
\ee
which are analytic functions of $z_1,\dots,z_\ell\in\C\setminus I$, symmetric in the variables $z_1,\dots,z_\ell$.
To simplify the analysis it is convenient to introduce the \emph{connected} cumulant functions
\be
\label{disctoconn}
\mathscr C_{\ell}^{\sf c}(z_1,\dots,z_\ell)=
\sum_{\mathcal{P}\text{ partition of }\{1,\dots,\ell\}}(-1)^{|\mathcal{P}|-1}(|\mathcal{P}|-1)!
\prod_{A\in\mathcal{P}}\mathscr C_{|A|}(\{z_a\}_{a\in A}),
\ee
from which the cumulant functions can be recovered by
\be
\label{conntodisc}
\mathscr C_{\ell}(z_1,\dots,z_\ell)=\sum_{\mathcal{P}\text{ partition of }\{1,\dots,\ell\}}\prod_{A\in\mathcal{P}}\mathscr C_{|A|}^{\sf c}(\{z_a\}_{a\in A}).
\ee
For example, $\mathscr C_1(z)=\mathscr C_1^{\sf c}(z)$, $\mathscr C_{2}^{\sf c}(z_1,z_2)=\mathscr C_{2}(z_1,z_2)-\mathscr C_{1}(z_1)\mathscr C_1(z_2)$,
\begin{align}
\nonumber
\mathscr C_{3}^{\sf c}(z_1,z_2,z_3)={}&\mathscr C_{3}(z_1,z_2,z_3)-\mathscr C_{2}(z_1,z_2)\mathscr C_1(z_3)-\mathscr C_{2}(z_2,z_3)\mathscr C_1(z_1)
\\
&-\mathscr C_{2}(z_1,z_3)\mathscr C_1(z_2)+2\,\mathscr C_{1}(z_1)\mathscr C_1(z_2)\mathscr C_1(z_3).
\end{align}
We now express the connected cumulant functions in terms of the monic orthogonal polynomials $P_\ell(z)=z^\ell+\dots$ uniquely defined by
\be \label{monicOPintro}
\int_I P_\ell(x)P_m(x)\e^{V(x)}\d x=h_\ell\delta_{\ell,m},
\ee
and of the $2\times 2$ matrix
\be \label{Ymatrix}
Y_N(z):=
\renewcommand{\arraystretch}{1.25}\left(\begin{array}{cc}
P_N(z) & \frac 1{2\pi\i}\int_IP_N(x)\e^{V(x)}\frac{\d x}{x-z}
\\
-\frac{2\pi\i}{h_{N-1}}P_{N-1}(z) & -\frac{1}{h_{N-1}}\int_IP_{N-1}(x)\e^{V(x)}\frac{\d x}{x-z}
\end{array}\right),
\ee
which is the well-known solution to the \emph{Riemann--Hilbert problem} of orthogonal polynomials \cite{FIK1992}; it is an analytic function of $z\in\C\setminus I$.

\begin{shaded}
\begin{theorem}\label{thmcumulants}
Let \be
R(z):=Y_N(z)\begin{pmatrix}1 & 0\\ 0 &0 \end{pmatrix}Y_N^{-1}(z)\,,
\ee
with $Y_N(z)$ as in \eqref{Ymatrix}. Then the  connected  cumulant functions \eqref{disctoconn} are given by
\begin{align}
\mathscr C_1^{\sf c}(z)&=\left(Y^{-1}_N(z)Y_N'(z)\right)_{1,1},
\label{OnePointThm}
\\
\mathscr C_2^{\sf c}(z_1,z_2)&=\frac{\tr\left(R(z_1)R(z_2)\right)-1}{(z_1-z_2)^2},		\label{TwoPointThm}
\\
\mathscr C_\ell^{\sf c}(z_1,\dots,z_\ell)&=-\sum_{(i_1,\dots,i_\ell)\in\cyc( (\ell) )}\frac{\tr\left(R(z_{i_1})\dots R(z_{i_\ell})\right)}{(z_{i_1}-z_{i_2})\cdots(z_{i_\ell}-z_{i_1})},\quad\ell\geq 3, 		\label{MultiPointThm}
\end{align}
where prime in the second  formula denotes derivative with respect to $z$ and  $\cyc((\ell))$ in the last formula is the set of $\ell$-cycles in the symmetric group $\S_\ell$.
\end{theorem}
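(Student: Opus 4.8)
The plan is to reduce everything to the Christoffel--Darboux kernel of the orthogonal polynomials and then to the matrix $R(z)$ built from the Riemann--Hilbert solution $Y_N$. First I would recall the determinantal structure of the eigenvalue ensemble associated to \eqref{generalmeasure}: the correlation kernel is $K_N(x,y)=\e^{(V(x)+V(y))/2}\sum_{k=0}^{N-1}h_k^{-1}P_k(x)P_k(y)$, and by the standard Christoffel--Darboux formula together with \eqref{Ymatrix} one has the compact expression $K_N(x,y)=\e^{(V(x)+V(y))/2}\,\dfrac{1}{2\pi\i(x-y)}\bigl(Y_N^{-1}(x)Y_N(y)\bigr)_{2,1}$. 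The resolvent traces in \eqref{cumulantfunctions} are then integrals of the kernel against $(z_i-x)^{-1}$, and the connected cumulants $\mathscr C_\ell^{\sf c}$ are — by the standard combinatorial identity relating joint moments of linear statistics of a determinantal point process to cyclic products of the kernel — given by
\[
\mathscr C_\ell^{\sf c}(z_1,\dots,z_\ell)=(-1)^{\ell-1}\!\!\int_{I^\ell}\!\frac{K_N(x_1,x_2)K_N(x_2,x_3)\cdots K_N(x_\ell,x_1)}{(z_1-x_1)\cdots(z_\ell-x_\ell)}\,\d x_1\cdots\d x_\ell
\]
for $\ell\ge 2$ (with the $\ell=2$ case carrying the extra diagonal term $-\int K_N(x,x)(z_1-x)^{-1}(z_2-x)^{-1}$ that will produce the ``$-1$'' in \eqref{TwoPointThm}), and $\mathscr C_1^{\sf c}(z)=\int_I K_N(x,x)(z-x)^{-1}\d x$.

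Next I would evaluate these integrals by residues after inserting the RHP representation of $K_N$. The key algebraic observation is that $R(z)=Y_N(z)E_{11}Y_N^{-1}(z)$ is, up to the exponential factors, the ``one-point'' building block: writing the kernel via $Y_N$ and using that $Y_N$ has jump $\begin{pmatrix}1 & \e^{V}\\ 0 & 1\end{pmatrix}$ across $I$ (equivalently, that $\e^{V}$ is the density whose Cauchy transform is encoded in the second column of $Y_N$), one converts $\int_I \frac{(\,\cdots)(x)}{z-x}\d x$ into a contour integral of a product of $R$'s. Concretely, deforming the contour off $I$ and collecting the residue at $x=z_i$ turns each kernel factor $K_N(x_i,x_{i+1})$ into a factor $R(z_i)$ glued to the next by matrix multiplication, the Cauchy kernels $\frac{1}{x_i-x_{i+1}}$ becoming the denominators $\frac{1}{z_{i_1}-z_{i_2}}\cdots$; the trace appears because the cyclic product closes up. This is exactly the mechanism by which \eqref{MultiPointThm} and \eqref{TwoPointThm} arise, and for $\mathscr C_1^{\sf c}$ a direct residue computation using $K_N(x,x)=\e^{V(x)}\frac{1}{2\pi\i}\bigl(Y_N^{-1}Y_N'\bigr)_{2,1}(x)$-type identities (Christoffel--Darboux in the confluent limit) gives \eqref{OnePointThm} after integrating by parts once.

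I would organize the computation so that the $\ell\ge 3$ case is the template and the $\ell=1,2$ cases are treated as the (slightly degenerate) base cases: for $\ell=2$ one must keep track of the subtraction of $\mathscr C_1\mathscr C_1$ in \eqref{disctoconn} and show it combines with the diagonal kernel term to yield precisely $\frac{\tr(R(z_1)R(z_2))-1}{(z_1-z_2)^2}$, using $\tr R(z)=1$ (since $R$ is a rank-one projector, $R^2=R$) to produce the constant. The main obstacle I anticipate is \emph{bookkeeping the contour deformations and residues cleanly}: one must justify moving the integration contours from $I$ to small loops around the $z_i$ (controlling the behavior of $Y_N$ and of $\e^{V}$ at the endpoints of $I$ via the stated assumption $\e^{V(x)}=\O(|x-x_0|^{-1+\varepsilon})$), and one must check that the various spurious residues (at coincidences $x_i=x_j$, or from the entire functions in the first column of $Y_N$) either cancel or are absent because of the specific $(2,1)$-entry structure of the kernel and the identity $\det Y_N\equiv 1$. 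Once the contour manipulations are set up correctly, the emergence of the cyclic-trace structure is essentially forced by matrix multiplication, and the symmetry of $\mathscr C_\ell^{\sf c}$ is manifest from the sum over $\ell$-cycles.
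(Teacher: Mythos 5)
Your route---through the determinantal structure of the eigenvalue process, the truncated correlation functions, and a contour deformation of cyclic kernel integrals---is genuinely different from the paper's proof, which instead deforms the potential to $V_{t,z}(x)=V(x)+\sum_i t_i/(z_i-x)$, uses $\mathscr Z_N(t,z)=N!\,h_0^{V_{t,z}}\cdots h_{N-1}^{V_{t,z}}$, and derives the cyclic traces from a Liouville-type identity for $(\pa_{t_j}Y_N)Y_N^{-1}$ followed by induction on $\ell$. Your approach can in principle be carried out, and your treatment of $\ell=1$ (integrating $K_N(x,x)/(z-x)$, rewriting $K_N(x,x)$ as a jump, and collecting the residue at $x=z$) is essentially the paper's own argument for that case. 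But your starting identity for $\ell\ge 3$ is false as stated, and what it omits is precisely the nontrivial part of the theorem.

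The joint cumulant of the linear statistics $A_j=\sum_i f_j(\lambda_i)$ with $f_j(x)=(z_j-x)^{-1}$ is \emph{not} given by $(-1)^{\ell-1}\int_{I^\ell}\rho^{\mathrm T}_\ell(x_1,\dots,x_\ell)\prod_j f_j(x_j)\,\d^\ell x$ alone. The correct cluster expansion is a sum over all set partitions $\Pi$ of $\{1,\dots,\ell\}$,
\be
\left\langle A_1\cdots A_\ell\right\rangle^{\sf c}=\sum_{\Pi}\int_{I^{|\Pi|}}\rho^{\mathrm T}_{|\Pi|}(y_1,\dots,y_{|\Pi|})\prod_{B\in\Pi}\prod_{j\in B}f_j(y_B)\,\d^{|\Pi|}y,
\ee
so that besides the fully off-diagonal cyclic term you wrote there are ``diagonal'' contributions in which several resolvents attach to the same particle. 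You acknowledge one such term for $\ell=2$, but already for $\ell=3$ there are four more: three terms $\int\rho^{\mathrm T}_2(x,y)f_a(x)f_b(x)f_c(y)\,\d x\,\d y$ and the term $\int\rho_1(x)f_1(x)f_2(x)f_3(x)\,\d x$, none of which vanishes. Since the target formula \eqref{MultiPointThm} is a pure cyclic sum with no remainder, your plan obliges you to prove that all these diagonal contributions, after the contour deformation, reassemble with the off-diagonal one into the single cyclic trace---a cancellation that rests on $R^2=R$, $\tr R=1$, and the regularity of $\tr\left(R(x)R(y)\right)-1$ on the diagonal $x=y$, and which is the actual core of the proof along this route. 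Your sketch treats it as automatic (``the emergence of the cyclic-trace structure is essentially forced by matrix multiplication''), which it is not; relatedly, your attribution of the $-1$ in \eqref{TwoPointThm} solely to the $\ell=2$ diagonal term is not justified without this bookkeeping. The paper's potential-deformation argument is designed precisely to bypass this combinatorial difficulty: the identity $\pa_{t_j}R(x)=[R(z_j)-R(x),R(x)]/(z_j-x)$ produces the cyclic structure directly, with no diagonal terms to cancel.
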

\end{shaded}

The proof is given in Section \ref{sec3}. 
Formul\ae\ of this sort for correlators of hermitian matrix models have been recently discussed in the literature, see e.g.~\cite{EKR2015,DY2017}.
They are directly related to the theory of tau functions (formal \cite{DY2017} and isomonodromic \cite{BEH2006,GGR2020}) and to topological recursion theory \cite{CE2006,EO2007,BE2009,BBE2015}.
Incidentally, similar formul\ae\ also appear for matrix models with external source \cite{K1992,BC2017,BDY2016,BDY2018,BR2019}.
In Section~\ref{sec3} we provide a direct derivation based on the Riemann--Hilbert characterization of the matrix~$Y_N(z)$.

We can apply these formul\ae\ to the Jacobi measure $\d m_N^{\sf J}$, see \eqref{JUEmeasure}, for which the support is $I=[0,1]$.
Therefore we can expand the cumulants near the points $z=0$ or $z=\infty$; the expansion at $z=1$ could be considered but we omit it as it is recovered from the one at $z=0$ by exchanging $\a,\b$, see \eqref{JUEmeasure}.
Using the definition in \eqref{cumulantfunctions} and \eqref{disctoconn}, we obtain the generating functions for the JUE connected correlators \eqref{correlatorsintro}, namely
\be
\mathscr C_1(z)\overset{z\to \infty}{\sim} \mathscr F_{1,\infty}^{\sf c}(z)-\frac{N}{z},\quad
\mathscr C_1(z)\overset{z\to 0}{\sim}\mathscr F_{1,0}^{\sf c}(z),\quad 
\mathscr C_\ell^{\sf c}(z_1,\dots,z_\ell)\overset{z\to p}{\sim}\mathscr F_{\ell,p}^{\sf c}(z_1,\dots,z_\ell)\quad (p=0,\infty),
\ee
where 
\begin{align}
\nonumber
\mathscr F_{\ell,\infty}^{\sf c}(z_1,\dots z_\ell)&:=\sum_{k_1,\dots,k_\ell\geq 1}\frac{\left\langle\prod_{j=1}^\ell\tr X^{k_j}\right\rangle^{\sf c}}{z_1^{k_1+1}\cdots z_\ell^{k_\ell+1}},
\\
\label{cumulantformal}
\mathscr F_{\ell,0}^{\sf c}(z_1,\dots,z_\ell)&:=(-1)^\ell\sum_{k_1,\dots,k_\ell\geq 1}\left\langle\prod_{j=1}^\ell\tr X^{-k_j}\right\rangle^{\sf c} z_1^{k_1-1}\cdots z_\ell^{k_\ell-1}.
\end{align}
On the other hand, performing the same expansion on the right hand side of the expressions for the cumulants in Theorem~\ref{thmcumulants}, we have an explicit tool  to compute the correlators.

For the specific case of Jacobi polynomials, we  prove in Section \ref{sec4} (Proposition \ref{Rexpansions}) that at  $z=\infty$  the matrix  $R(z)$ has the Taylor expansion  (valid for $|z|>1$)  of the form   $R(z) =T^{-1}R^{[\infty]}(z)T$
and the Poincar\'e asymptotic expansion $R(z) \sim T^{-1} R^{[0]}(z)T $ at $z=0$  valid  in the sector $0<\arg z<2\pi$. Here  $T$ is the constant matrix 
\be
\label{Tmatrix}
T=
\begin{pmatrix}
1 & 0 \\
 0 & \frac{h_{N-1}^{\sf J}}{2 \pi i} \frac{1}{ (\a+\b+2N)(\a+\b+2N-1)}
\end{pmatrix},
\ee
with $h_\ell^{\sf J}$ given in  \eqref{NormingC}, and the series $R^{[\infty]}(z),R^{[0]}(z)$ are
\begin{align}
R^{[\infty]}(z)&=\begin{pmatrix} 1 & 0 \\ 0 & 0 \end{pmatrix}+\sum_{\ell\geq 0}\frac 1{z^{\ell+1}}
\frac{1}{\a+\b+2N}\begin{pmatrix}
\ell A_{\ell}(N) & N (\a+N) (\b+N) (\a+\b+N)  B_\ell(N+1)
\\
-  B_\ell(N) & -\ell A_{\ell}(N)
\end{pmatrix},\nonumber
\\
\label{Rinftyintro}
\\
R^{[0]}(z)&=\begin{pmatrix} 1 & 0 \\ 0 & 0 \end{pmatrix}+\sum_{\ell\geq 0}
\frac{z^\ell}{\a+\b+2N}\begin{pmatrix}
(\ell+1) \wt A_{\ell}(N) & -N (\a+N) (\b+N) (\a+\b+N) \wt B_\ell(N+1)
\\
\wt B_\ell(N) & -(\ell+1) \wt A_{\ell}(N)
\end{pmatrix},
\label{Rzerointro}
\end{align}
where
\begin{align}
\nonumber
A_0(N)&=
 \frac{N(\b+N)}{\a+\b+2N}, & 
\\	 
\nonumber
 A_\ell(N)&=\frac{N (\a+N) (\b+N) (\a+\b+N) (\a+2)_{\ell-1}}{(\a+\b+2 N-1)_{\ell+2}} \pFq{4}{3}{1-\ell,\ell+2,1-\b-N,1-N}{2,\a+2,2-\a-\b-2 N}{1}, & \ell\geq 1,
\\
\label{AsBs}
B_\ell(N)&= \frac{(\a+1)_\ell}{(\a+\b+2 N-1)_{\ell+1}} \pFq{4}{3}{-\ell,\ell+1,1-\b-N,1-N}{1,\a+1,2-\a-\b-2 N}{1},&\ell\geq 0,
\end{align}
and
\be
\wt A_\ell(N)=\frac{(\a+\b+2N-\ell)_{2\ell+1}}{(\a-\ell)_{2\ell+1}}A_\ell(N),
\quad
\wt B_\ell(N)=\frac{(\a+\b+2N-1-\ell)_{2\ell+1}}{(\a-\ell)_{2\ell+1}} B_\ell(N),\quad\ell\geq 0.
\label{AsBstilde}
\ee
Here ${}_4F_3$ is the generalized hypergeometric function, and we use the \emph{rising factorial}
\be
\label{risingfactorial}
(s)_k=s(s+1)\cdots(s+k-1).
\ee
For example, the first few terms read
\begin{align}
\nonumber
A_1(N)={}& \frac{N (\a+N) (\b+N) (\alpha +\beta+N)}{(\alpha +\beta+2N -1) (\alpha +\beta+2N) (\alpha +\beta+2N +1)}, \\
\nonumber
B_0(N)={}& \frac{1}{\alpha +\beta +2 N-1}, \\
B_1(N)={}& \frac{ (\a -1) (\alpha +\beta )+2N (\alpha +\beta -1)+2N^2}{(\alpha +\beta+2N -2) (\alpha +\beta+2N -1) (\alpha +\beta+2 N )}.
\end{align}
Since the constant conjugation by $T$ in \eqref{Tmatrix} of the matrix $R(z)$ does not affect the formul\ae\ of Theorem \ref{thmcumulants} (see also Section \ref{sec4}) we obtain the following corollary, which provides explicit formul\ae\ for the generating functions  of the correlators.

\begin{shaded}
\begin{corollary}
\label{corollaryWilson}
Let $R^{[\infty]}(z)$ and $R^{[0]}(z)$ be the explicit series given in \eqref{Rinftyintro} and \eqref{Rzerointro}. The one-point generating function \eqref{cumulantformal} of the JUE are
\begin{align}
\mathscr F_{1,\infty}(z)&=\frac{\a+\b+2N}{z(1-z)}\int_\infty^z\left(1-R_{1,1}^{[\infty]}(w)\right)\d w-\frac{N(\a+N)}{z(1-z)(\a+\b+2N)},
\\
\mathscr F_{1,0}(z)&=\frac{\a+\b+2N}{z(1-z)}\int_0^z\left(1-R_{1,1}^{[0]}(w)\right)\d w-\frac{N}{1-z},
\intertext{where $R_{1,1}^{[\infty]},R_{1,1}^{[0]}$ denote the $(1,1)$-entry of $R^{[\infty]},R^{[0]}$ respectively. The multi-point generating functions \eqref{cumulantformal} are}
\mathscr F_{2,p}^{\sf c}(z_1,z_2)&=\frac{\tr\left(R^{[p]}(z_1)R^{[p]}(z_2)\right)-1}{(z_1-z_2)^2},
\\
\mathscr F_{\ell,p}^{\sf c}(z_1,\dots,z_\ell)&=-\sum_{(i_1,\dots,i_\ell)\in\cyc((\ell))}\frac{\tr\left(R^{[p]}(z_{i_1})\dots R^{[p]}(z_{i_\ell})\right)}{(z_{i_1}-z_{i_2})\cdots(z_{i_\ell}-z_{i_1})},\quad\ell\geq 3,\qquad p=0,\infty.			
\end{align}
\end{corollary}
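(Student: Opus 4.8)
The plan is to derive Corollary \ref{corollaryWilson} directly from Theorem \ref{thmcumulants} by feeding in the explicit expansions of $R(z)$ at $z=\infty$ and $z=0$. The first observation is that conjugation $R(z)\mapsto TR(z)T^{-1}$ by the constant matrix $T$ of \eqref{Tmatrix} leaves every right-hand side in Theorem \ref{thmcumulants} invariant: the multi-point formulas \eqref{TwoPointThm}--\eqref{MultiPointThm} involve only traces of products of $R$'s, which are conjugation-invariant, and for the one-point formula \eqref{OnePointThm} one notes that $R(z)=Y_N(z)\,\mathrm{diag}(1,0)\,Y_N^{-1}(z)$ together with $\mathscr C_1^{\sf c}(z)=\tr\!\left(R(z)\,Y_N'(z)Y_N^{-1}(z)\right)=\tr\!\left(Y_N^{-1}(z)Y_N'(z)\,\mathrm{diag}(1,0)\right)$, so only the $(1,1)$-entry of $Y_N^{-1}Y_N'$ enters and this is unaffected by the constant conjugation. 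Hence in all formulas we may replace $R(z)$ by $R^{[\infty]}(z)$ (for the expansion at $z=\infty$) or $R^{[0]}(z)$ (for the expansion at $z=0$), which is the content of Proposition \ref{Rexpansions} assumed from Section \ref{sec4}.

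Next I would treat the multi-point cases $\ell\ge 2$, which are the easy ones. Expanding the left-hand side generating functions $\mathscr F_{\ell,p}^{\sf c}$ near $z=p$ and substituting the series $R^{[p]}(z)$ into \eqref{TwoPointThm}--\eqref{MultiPointThm} gives the stated closed forms immediately, since the prefactors $(z_{i_1}-z_{i_2})^{-1}\cdots(z_{i_\ell}-z_{i_1})^{-1}$ and the $(z_1-z_2)^{-2}$ are already rational and require no further manipulation: one simply observes that the right-hand sides of Theorem \ref{thmcumulants}, read as asymptotic/Taylor series at $z=p$, coincide term-by-term with $\mathscr F_{\ell,p}^{\sf c}$. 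The only point needing care is the removable singularity along the diagonals $z_i=z_j$: the combination $\tr(R^{[p]}(z_1)R^{[p]}(z_2))-1$ vanishes to second order as $z_1\to z_2$ (because $R^{[p]}$ is a projector-valued series, $R^{[p]}(z)^2=R^{[p]}(z)$ modulo the order of truncation, so $\tr(R(z)^2)=\tr R(z)=1$), making the quotient analytic; the analogous cancellation for $\ell\ge 3$ is standard and is exactly what makes $\mathscr C_\ell^{\sf c}$ analytic in Theorem \ref{thmcumulants}.

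The one-point formulas require a short computation to convert \eqref{OnePointThm} into the displayed integral form. The idea is to show that $\mathscr C_1^{\sf c}(z)=\left(Y_N^{-1}(z)Y_N'(z)\right)_{1,1}$ can be re-expressed in terms of $R_{1,1}(z)$ alone by using the first-order ODE satisfied by $Y_N(z)$ for Jacobi weights (the matrix $Y_N$ obeys $Y_N'(z)=\mathcal{A}(z)Y_N(z)$ with $\mathcal{A}(z)$ rational in $z$ with poles only at $z=0,1,\infty$, of Fuchsian type), together with the fact that $R(z)=Y_N(z)\,\mathrm{diag}(1,0)\,Y_N^{-1}(z)$. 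Concretely, $\mathscr C_1^{\sf c}(z)=\tr(\mathrm{diag}(1,0)\,Y_N^{-1}Y_N')=\tr(\mathrm{diag}(1,0)\,Y_N^{-1}\mathcal{A}Y_N)=\tr(R(z)\mathcal{A}(z))$ up to the transpose convention; writing $\mathcal{A}(z)$ explicitly for the Jacobi case (its residues at $0$ and $1$ involve $\a,\b$ and its behavior at $\infty$ carries the $\a+\b+2N$), one extracts a scalar identity of the form $\mathscr C_1^{\sf c}(z)=\frac{\a+\b+2N}{z(1-z)}\big(\text{something involving }R_{1,1}(z)\big)+\text{rational correction}$. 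Integrating the generating-function identity $\frac{d}{dz}\int_p^z(1-R_{1,1}^{[p]}(w))dw = 1-R_{1,1}^{[p]}(z)$ term by term and matching the explicit rational pieces (the $-N(\a+N)/(z(1-z)(\a+\b+2N))$ at $\infty$ and the $-N/(1-z)$ at $0$, which come from the subtracted $-N/z$ term in $\mathscr C_1(z)\sim\mathscr F_{1,\infty}^{\sf c}-N/z$ and from the leading behavior of $R^{[p]}$) yields the corollary.

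The main obstacle is precisely this last conversion: one must know or derive the explicit Fuchsian connection matrix $\mathcal{A}(z)$ for the Jacobi weight and carry out the algebra relating $\tr(R(z)\mathcal{A}(z))$ to $R_{1,1}(z)$, keeping track of all the rational correction terms and of the factor $\frac{\a+\b+2N}{z(1-z)}$ — this is where the specific structure of the Jacobi ODE (as opposed to the general-$V$ setup of Theorem \ref{thmcumulants}) is used. Everything else is a bookkeeping substitution of the series \eqref{Rinftyintro}--\eqref{Rzerointro} into Theorem \ref{thmcumulants} and an appeal to the conjugation-invariance by $T$. I would therefore organize the proof as: (i) conjugation-invariance of the formulas in Theorem \ref{thmcumulants}; (ii) immediate substitution for $\ell\ge 2$; (iii) the one-point identity via the Jacobi ODE for $Y_N$, followed by an integration and matching of elementary rational terms.
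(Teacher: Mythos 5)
Your overall strategy coincides with the paper's: conjugation-invariance under the diagonal matrix $T$ together with direct substitution of the series $R^{[p]}$ disposes of all cases $\ell\geq 2$ exactly as in Section \ref{sec42} (including your correct projector argument $R^2=R$, $\tr R=1$ for the removable diagonal singularities), and the one-point case is indeed reduced to the Fuchsian ODE structure of the Jacobi weight. However, there is a genuine gap in your step (iii). You propose to extract a \emph{pointwise} scalar identity expressing $\mathscr C_1^{\sf c}(z)$ through $R_{1,1}(z)$ alone, up to rational corrections. No such identity exists: writing $U(z)=u_3\s_3+u_+\s_++u_-\s_-$ and $R=\frac12\1+r_3\s_3+r_+\s_++r_-\s_-$ one finds $\tr(UR)=2u_3r_3+u_+r_-+u_-r_+$, which genuinely involves the off-diagonal entries $r_\pm$ and not only $R_{1,1}=\frac12+r_3$. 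The mechanism that eliminates $r_\pm$ is differential, not algebraic: one computes $\pa_z\left[z(1-z)\tr\left(U(z)R(z)\right)\right]$, uses the Lax equation $\pa_zR=[U,R]$ (which kills $\tr(UR')$ by cyclicity of the trace) together with the exact identity $(1-2z)U(z)+z(1-z)U'(z)=-\tfrac{\a+\b+2N}{2}\s_3$, and obtains $\pa_z\left[z(1-z)\mathscr C_1(z)\right]=-(\a+\b+2N)\left(R_{1,1}(z)-1\right)-N$; only after integrating this from $p$ to $z$ does the antiderivative $\int_p^z(1-R_{1,1}(w))\,\d w$ appear. Your sketch gestures at the integration but supplies neither the differential identity nor the reason only $R_{1,1}$ survives, and if pursued as literally stated (seeking an algebraic relation) the step would fail.

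Two further points need repair. First, $Y_N$ itself does not satisfy $Y_N'=\mathcal A(z)Y_N$ with rational $\mathcal A$; it is $\Psi_N(z)=Y_N(z)z^{\a\s_3/2}(1-z)^{\b\s_3/2}$ that does, and the discrepancy produces the extra scalar term in $\mathscr C_1(z)=\tr(U(z)R(z))-\tfrac12\left(\tfrac{\a}{z}-\tfrac{\b}{1-z}\right)$, which must be tracked to produce the constants $-N$ and $-N(\a+N)/\left(z(1-z)(\a+\b+2N)\right)$ in the final formul\ae. Second, fixing the integration constant at $p=\infty$ requires the separate evaluation $\langle\tr X\rangle=N(\a+N)/(\a+\b+2N)$, extracted from the boundary behaviour $p(1-p)\mathscr C_1(p)\sim(1-p)N-\langle\tr X\rangle+\O(1/p)$; this input is absent from your plan.
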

\end{shaded}

The proof is in Section \ref{sec42}, and is obtained from the formul\ae\ of Theorem \ref{thmcumulants} by expansion at $z_i\to\infty,0$.
In this corollary, the formul\ae\ on the right hand side are interpreted as power series expansions at $z=0$ or $z=\infty$. 
To this end, we remark that for $\ell\geq 2$ these series are well defined, as it follows from the fact that the corresponding analytic functions are holomorphic in $(\C\setminus I)^\ell$ and in particular regular along the diagonals $z_a=z_b$ for $a\not=b$.

The coefficients of $R^{[0]}(z)$ and $R^{[\infty]}(z)$ are rational functions of $N,\a,\b$ and we conclude that JUE correlators extend to rational functions of $N,\a,\b$.

Examining more closely the formula for $\mathscr F_{1,\infty}$ we see that
\be
(1-z)\mathscr F_{1,\infty}(z)=-\frac Nz+\sum_{k\geq 0}\frac{\left\langle\tr X^k\right\rangle-\left\langle\tr X^{k+1}\right\rangle}{z^{k+1}},
\ee
which by the explicit expansion $R^{[\infty]}(z)$ in \eqref{Rinftyintro} implies
\be
\label{onepointdiff+}
\left\langle\tr X^k\right\rangle-\left\langle\tr X^{k+1}\right\rangle=A_k(N),
\ee
where $A_k(N)$ is defined in \eqref{AsBs}.
Reasoning in the same way for $\mathscr F_{1,0}(z)$ we obtain
\be
\label{onepointdiff-}
\left\langle\tr X^{-k-1}\right\rangle-\left\langle\tr X^{-k}\right\rangle=\wt A_k (N) = \frac{(\a+\b+2N-k)_{2k+1}}{(\a-k)_{2k+1}} \left( \left\langle\tr X^k\right\rangle-\left\langle\tr X^{k+1}\right\rangle \right),
\ee
where $\wt A_k(N)$ is defined in \eqref{AsBstilde}.
Equations \eqref{onepointdiff+} and \eqref{onepointdiff-} agree with the results of \cite{CMOS2019}.

\begin{remark}
The coefficients $A_\ell(N),B_\ell(N)$ can be expressed in terms of Wilson polynomials \cite{Wilson,KS1994}, which are defined  by
\be
\frac{W_n(k^2;a,b,c,d)}{(a+b)_n (a+c)_n (a+d)_n}= \pFq{4}{3}{-n,n+a+b+c+d-1,a+ \i k, a- \i k}{a+b,a+c,a+d}{1}, \label{HyperW}
\ee
for more details see Proposition \ref{Wilson}.
Thus the formul\ae\ of Corollary \ref{cormain} extend the connection between JUE moments $\left\langle\tr X^k\right\rangle$ and Wilson polynomials described in \cite{CMOS2019} to the JUE multi-point correlators $\left\langle\tr X^{k_1}\cdots\tr X^{k_\ell}\right\rangle$.
\end{remark}

\begin{remark}[JUE mixed correlators]\label{examplemixedcorrelators}
We could consider more general generating functions as follows; take $q,r,s\geq 0$ with $q+r+s>0$ and expand the cumulant function
\be \label{mixedCorrelators}
\mathscr C_{q+r+s}(z_1,\dots,z_q,w_1,\dots,w_r,y_1,\dots,y_{s})
\ee
for the Jacobi measure as $z_i\to\infty,w_i\to 0,y_i\to 1$, to obtain the generating function
\begin{align}
\nonumber
\sum_{\begin{smallmatrix}k_1,\dots,k_q\geq 1\\i_1,\dots,i_r\geq 1\\j_1,\dots,j_s\geq 1\end{smallmatrix}}\int_{\H_N(0,1)}\tr X^{k_1}\cdots\tr X^{k_q}\tr X^{-i_1}\cdots\tr X^{-i_r}\tr(\1-X)^{j_1}\cdots\tr(\1-X)^{j_s}\d m_N^{\sf J}(X)
\\
\qquad\qquad\qquad\qquad\qquad\times\frac{w_1^{i_1-1}\cdots w_r^{i_r-1}(y_1-1)^{j_1-1}\cdots(y_s-1)^{j_s-1}}{z_1^{k_1+1}\cdots z_q^{k_q+1}}.
\end{align}
It is then clear that we can compute the coefficients of such series in terms of the matrix series $R^{[0]},R^{[\infty]}$, and thus of Wilson polynomials, by the formul\ae\ of Theorem \ref{thmcumulants}; note that the expansion of $R(z)$ at $z=1$ is obtained from $R^{[0]}$ by exchanging $\a$ with $\b$ and $z$ with $1-z$.
\end{remark}

\begin{example}\label{examplecomputation}
From the formul\ae\ of Corollary \ref{corollaryWilson} we can compute
\be
\left \langle (\tr X)^3 \right\rangle^{\sf c} = \frac{2N(\a+\b)(\b-\a)(\a+N)(\b+N)(\a+\b+N)}{(\a+\b+2N-2)(\a+\b+2N-1)(\a+\b+2N)^3(\a+\b+2N+1)(\a+\b+2N+2)}.
\ee
With the substitution $\a=(c_\a-1)N$ and $\b=(c_\b-1)N$ we have the large $N$ expansion
\begin{align}
\nonumber
\left \langle \left(\tr X\right)^3  \right\rangle^{\sf c} ={}&  \frac{1}{N}\left[ c_\a \left(\frac{2}{(c_\a+c_\b)^3}-\frac{6}{(c_\a+c_\b)^4}+\frac{4}{(c_\a+c_\b)^5}\right) \right.
\\ 
\nonumber
&\ \
+c_\a^2 \left(-\frac{6}{(c_\a+c_\b)^4}+\frac{18}{(c_\a+c_\b)^5}-\frac{12}{(c_\a+c_\b)^6}\right)
\\
&\ \
\left. + c_\a^3 \left(\frac{4}{(c_\a+c_\b)^5}-\frac{12}{(c_\a+c_\b)^6}+\frac{8}{(c_\a+c_\b)^7}\right) \right] +\mathcal{O}\left(\frac{1}{N^3}\right).
\end{align}
Matching the coefficients as in Theorem \ref{thmhurwitz} we get the values for $h_{g=0}^{\sf c}(\ll=(1,1,1),\mu,\nu)$ (the connected Hurwitz numbers defined in Remark \ref{ConnectedH}) reported in the following table;
\be
\begin{array}[t]{c|c|c|c}
& \nu=(3) & \nu=(2,1) & \nu=(1,1,1)\\
\hline \mu=(3) & 2 & 6 & 4\\
\hline \mu=(2,1) & 6 & 18 & 12 \\
\hline \mu=(1,1,1) & 4 & 12 & 8\\
\end{array}
\ee
For example, the numbers in the first row ($\mu=(3)$) can be read from the following factorizations in $\mathfrak S_3$. To list them let us first note that we have $\cyc(\ll)=\{\Id\}$ and $\cyc(\mu)=\{(123),(132)\}$; therefore for $\nu=(3)$ we have 2 factorizations ($r=\mbox{number of transpositions}=0$)
\be
(123) (132) = \Id, \qquad (132) (123) = \Id,
\ee
for $\nu=(2,1)$ ($\cyc(\nu)=\{(12),(23),(13)\}$) we have 6 factorizations ($r=1$)
\begin{align}
&(123)(12)(13)=\Id,	& 	&(123)(13)(23)=\Id,	&	&(123)(23)(12)=\Id,\\
&(132)(13)(12)=\Id,	& 	&(132)(12)(23)=\Id,	&	&(132)(23)(13)=\Id,
\end{align}
and for $\nu=(1,1,1)$ we have the 4 factorizations ($r=2$, here the monotone condition plays a role)
\begin{align}
&(123) \Id (12)(13) = \Id,	&	&(123) \Id (13)(23) = \Id,	\\
&(132) \Id (12)(23) = \Id,	&	&(123) \Id (23)(13) = \Id.
\end{align}
Similarly we can compute from Corollary \ref{corollaryWilson}
\begin{align}
\nonumber
\left\langle\left(\tr X^{-1}\right)^3\right\rangle^{\sf c}&=
\frac{2 N (\a+N) (\a+2 N) (\b+N) (\a+\b+N) (\a+2\b+2N)}{(\a-2) (\a-1) \a^3 (\a+1) (\a+2)}
\\
\nonumber
&=\frac 1N\left[
\left(\frac{2}{(c_\a-1)^3}+\frac{6}{(c_\a-1)^4}+\frac{4}{(c_\a-1)^5}\right) (c_\a+c_\b-1)\right.
\\
\nonumber
&\qquad-\left(\frac{6}{(c_\a-1)^4}+\frac{18}{(c_\a-1)^5}+\frac{12}{(c_\a-1)^6}\right) (c_\a+c_\b-1)^2
\\
&\qquad
\left.+\left(\frac{4}{(c_\a-1)^5}+\frac{12}{(c_\a-1)^6}+\frac{8}{(c_\a-1)^7}\right) (c_\a+c_\b-1)^3
\right]+\O\left(\frac 1{N^3}\right)
\end{align}
and from Theorem \ref{thmhurwitz} we recognize the connected Hurwitz numbers tabulated above.
\end{example}

\begin{remark}[Laguerre limit] \label{LaguerreLimit}
There is a scaling limit of the JUE correlators to the LUE correlators; if $k_1,\cdots,k_\ell$ are arbitrary integers we have
\be\label{LUElimit}
\lim_{\b\to+\infty}\b^{k_1+\cdots+k_\ell}\left\langle\prod_{j=1}^\ell\tr X^{k_j}\right\rangle=\frac{
\int_{\H_N(0,+\infty)}\left(\prod_{j=1}^\ell\tr X^{k_j}\right){\det}^\a(X)\exp(-\tr X)\d X}{\int_{\H_N(0,+\infty)}{\det}^\a(X)\exp(-\tr X)\d X}.
\ee
Therefore the results of the present work about the JUE directly imply analogous results for the LUE; these results are already known from \cite{CDO2018,GGR2020}. 
See also Remark \ref{remLUE2}.
\end{remark}

\section{JUE Correlators and Hurwitz Numbers}\label{sec2}

In this section we prove Theorem \ref{thmhurwitz}. For the proof we will consider the so-called \emph{multiparametric weighted Hurwitz numbers}; this far-reaching generalization of classical Hurwitz numbers was introduced and related to tau functions of integrable systems in several works by Harnad, Orlov  \cite{HO2015}, and Guay-Paquet \cite{GPH2015}, after the impetus of the seminal work of Okounkov \cite{O2000}.

\subsection{Multiparametric weighted Hurwitz numbers}\label{secHurwitz}

Let $\C[\S_n]$ be the group algebra of the symmetric group $\S_n$; namely, $\C[\S_n]$ consists of formal linear combinations with complex coefficients of permutations of $\{1,\dots,n\}$. We shall need two important type of elements of $\C[\S_n]$, which we now introduce. 
For any $\ll\vdash n$ denote
\be
\label{cycleelements}
\CC_\ll:=\sum_{\pi\in\cyc(\ll)}\pi,
\ee
where we recall that $\cyc(\ll)\subset\S_n$ is the conjugacy class of permutations of cycle-type $\ll$. It is well known \cite{Serre} that the set of $\CC_\ll$ for $\ll\vdash n$ form a linear basis of the center $Z(\C[\S_n])$ of the group algebra.

The second class of elements consists of the \emph{Young--Jucys--Murphy} (YJM) elements \cite{J1974,M1981} $\J_a$, for $a=1,\dots,n$, defined as
\be
\J_1=0,\quad \J_a=(1,a)+(2,a)+\dots+(a-1,a),\ 2\leq a\leq n,
\ee
denoting $(a,b)$ (with $a<b$) the transposition of $\{1,\dots,n\}$ switching $a,b$ and fixing everything else.

Although singularly the YJM elements are not central, they commute amongst themselves, and symmetric polynomials of $n$ variables evaluated at $\J_1,\dots,\J_n$ generate $Z(\C[\S_n])$. Indeed the following relation \cite{J1974} takes place in $Z(\C[\S_n])[\epsilon]$;
\be
\label{JMvsC}
\prod_{a=1}^n(1+\epsilon\J_a)=\sum_{\ll\vdash n}\epsilon^{n-\ell(\ll)}\CC_\ll.
\ee

With these preliminaries we are ready to introduce the class of multiparametric Hurwitz numbers \cite{HO2015,GPH2015,BHR2020} which we need. Fix the real parameters $\gamma_1,\dots,\gamma_L$ and $\delta_1,\dots,\delta_M$ ($L,M\geq 0$) and collect them into the rational function
\be
\label{eq:G}
G(z):=\frac{\prod_{i=1}^L(1+\gamma_iz)}{\prod_{j=1}^M(1-\delta_jz)}.
\ee
Then, the \emph{(rationally weighted) multiparametric (single) Hurwitz numbers} $H_G^{d}(\ll)$, associated  to the function $G$  in \eqref{eq:G} and labeled by the integer $d\geq 1$ and by the partition $\ll\vdash n$, are defined by
\be
\label{eq:defhurwitz}
H_G^{d}(\ll):=\frac 1{z_\ll}[\epsilon^d \CC_\ll]\prod_{a=1}^nG\left(\epsilon\J_a\right),
\ee
where this notation $[\epsilon^d \CC_\ll]$  denotes the coefficient in front of $\epsilon^d\CC_\ll$ in the expansion of $\prod_{a=1}^nG\left(\epsilon\J_a\right)\in Z(\C[\S_n])[[\epsilon]]$ in the basis $\{\CC_\ll\}$; to compute the expression $G\left(\epsilon\J_a\right)\in Z(\C[\S_n])[[\epsilon]]$, the denominators in \eqref{eq:G} are to be understood as $(1-\delta_jz)^{-1}=\sum_{k\geq 0}\delta_j^kz^k$.

\subsection{Generating functions of multiparametric Hurwitz numbers in the Schur basis}

The following result (see \cite{HO2015}) expresses the generating functions of multiparametric weighted Hurwitz numbers in the \emph{Schur basis}. 
In this context, the latter is regarded as the basis $\{s_\ll(\t)\}$ ($\ll$ running in the set of all partitions) of the space of weighted homogeneous polynomials in $\t=(t_1,t_2,\dots)$, with $\deg t_k=k$, whose elements are
\be
\label{schur}
s_\ll(\t)=\det\left[h_{\ll_i-i+j}(\t)\right]_{i,j=1}^{\ell(\ll)},
\ee
where the complete homogeneous symmetric polynomials $h_k(\t)$ are defined by the generating series\footnote{For convenience we adopt a normalization which differs from the one common in the literature by a transformation $t_k\mapsto t_k/k$.}
\be
\sum_{k\geq 0}w^kh_k(\t)=\exp\left(\sum_{k\geq 1}\frac{t_k}kw^k\right).
\ee

In the following we shall denote $\mathcal P$ the set of all partitions.

\begin{proposition}[\cite{HO2015}]
\label{prop}
The generating function
\be
\label{tauGmonomial}
\tau_G(\epsilon;\t)=\sum_{d\geq 1}\epsilon^d\sum_{\ll\in\P}H_G^d(\ll)\prod_{i=1}^{\ell(\ll)}t_{\ll_i}
\ee
of multiparametric weighted Hurwitz numbers \eqref{eq:defhurwitz} associated to the rational function \eqref{eq:G} is equivalently expressed as
\be
\label{tauGschur}
\tau_G(\epsilon;\t)=\sum_{\ll\in\P}\frac{\dim\ll}{|\ll|!}r^{(G,\epsilon)}_\ll s_\ll(\t),
\ee
where $s_\ll(\t)$ are the Schur polynomials \eqref{schur} and the coefficients are given explicitly by
\be
\label{rlambdaG}
r^{(G,\epsilon)}_\ll=\prod_{(i,j)\in\ll}G(\epsilon(j-i)),
\ee
$\dim\ll$ being the dimension of the irreducible representation of $\S_{|\ll|}$ associated with $\ll$.
\end{proposition}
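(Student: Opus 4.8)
The plan is to prove Proposition \ref{prop} by passing through the action of the center $Z(\C[\S_n])$ on itself and exploiting the fact that YJM elements act diagonally in the basis of central idempotents indexed by partitions. First I would recall that $Z(\C[\S_n])$ has two natural bases: the class sums $\{\CC_\ll\}_{\ll\vdash n}$ from \eqref{cycleelements}, and the orthogonal idempotents $\{e_\ll\}_{\ll\vdash n}$ projecting onto the isotypic components. The key structural fact (Jucys, Okounkov--Vershik) is that any symmetric function $f$ evaluated at the YJM elements acts on $e_\ll$ by the scalar $f$ evaluated at the multiset of contents $\{j-i:(i,j)\in\ll\}$ of the Young diagram $\ll$; in particular $\prod_{a=1}^n G(\epsilon\J_a)\cdot e_\ll = r_\ll^{(G,\epsilon)}\, e_\ll$ with $r_\ll^{(G,\epsilon)}$ exactly as in \eqref{rlambdaG}. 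I would state this as a lemma, giving a short self-contained argument: the operator $\prod_a(1+\epsilon\J_a) = \sum_\mu \epsilon^{n-\ell(\mu)}\CC_\mu$ already shows the YJM elements generate $Z(\C[\S_n])$, and the content-evaluation formula follows either from the branching rules for $\S_{n-1}\subset\S_n$ or by induction using $\J_n$ acting on a standard tableau by its last content.

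Next I would translate the coefficient extraction $[\epsilon^d\CC_\ll]\prod_a G(\epsilon\J_a)$ into the idempotent basis. Using the change of basis $\CC_\ll = \sum_{\mu} (\text{coefficients})\, e_\mu$ — concretely $\CC_\ll\, e_\mu = \widehat{\chi}^\mu(\ll)\, e_\mu$ where $\widehat\chi^\mu(\ll) = |\cyc(\ll)|\,\chi^\mu(\ll)/\dim\mu$ is the normalized character — together with $\prod_a G(\epsilon\J_a) = \sum_\mu r_\mu^{(G,\epsilon)} e_\mu$, I extract the class-sum coefficient via the orthogonality of characters. This yields
\[
H_G^d(\ll) = \frac{1}{z_\ll}\,[\epsilon^d]\sum_{\mu\vdash n}\frac{\chi^\mu(\ll)}{\dim\mu}\,\frac{|\cyc(\ll)|}{1}\cdots
\]
— more cleanly, after summing over all $d$ and weighting by $\prod_i t_{\ll_i}$, the left side of \eqref{tauGmonomial} becomes $\sum_{\ll} H_G^d$-weighted monomials, and on the right one recognizes the Frobenius-type identity $\sum_{\ll\vdash n}\frac{\chi^\mu(\ll)}{z_\ll}\prod_i t_{\ll_i} = s_\mu(\t)$ (this is precisely the expansion of Schur polynomials in the power-sum basis, in the normalization fixed by the footnote after \eqref{schur}). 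Combining, $\tau_G(\epsilon;\t) = \sum_{\mu\in\P} \frac{\dim\mu}{|\mu|!}\, r_\mu^{(G,\epsilon)}\, s_\mu(\t)$, which is exactly \eqref{tauGschur}.

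The remaining care is purely bookkeeping: checking that the normalization of $h_k(\t)$ and hence of $s_\ll(\t)$ adopted in the excerpt (with $t_k$ rescaled by $1/k$ relative to the usual convention) is the one for which $s_\mu(\t) = \sum_{\ll\vdash|\mu|} z_\ll^{-1}\chi^\mu(\ll)\prod_i t_{\ll_i}$ holds on the nose, with no extra $\prod_i \ll_i$ factors; and tracking the factor $\dim\ll/|\ll|!$, which is the coefficient of $e_\ll$ in the identity of $Z(\C[\S_n])$ (equivalently $\dim\ll/n! = |\cyc(\ll)|^{-1}\cdot(\dim\ll)^2/n!$ arranged through the orthogonality relation). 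The main obstacle is not conceptual but notational: aligning the idempotent-basis computation with the particular Schur normalization so that the constants match without a stray rescaling, and making the content-evaluation lemma genuinely self-contained rather than merely cited. Once the lemma $\prod_a G(\epsilon\J_a)e_\ll = r_\ll^{(G,\epsilon)}e_\ll$ and the power-sum expansion of $s_\mu$ are in hand, the rest is a one-line character-orthogonality manipulation.
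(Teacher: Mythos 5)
Your proposal is correct and follows essentially the same route as the paper: both diagonalize $\prod_a G(\epsilon\J_a)$ on the central idempotents via the content-evaluation property of the Jucys--Murphy elements, convert between the idempotent and class-sum bases using the characters of $\S_n$, and conclude with the Frobenius expansion $s_\ll(\t)=\sum_{\mu\vdash|\ll|}z_\mu^{-1}\chi^\mu_\ll\prod_i t_{\mu_i}$ in the paper's normalization. The only cosmetic difference is that you invoke character orthogonality to invert the change of basis, whereas the paper substitutes the explicit formula $\mathcal E_\ll=\frac{\dim\ll}{|\ll|!}\sum_\mu\chi_\ll^\mu\CC_\mu$ and equates coefficients of $\CC_\mu$ directly.
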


Before the proof we give a couple of remarks.
\begin{enumerate}
\item In \eqref{rlambdaG} and below we use the notation $(i,j)\in\ll$ where the partition $\ll$ is identified with its \emph{diagram}, i.e.~the set of $(i,j)\in\mathbb Z^2$ satisfying $1\leq i\leq\ell(\ll)$, $1\leq j\leq\ll_i$.
For example, the diagram of the partition $\ll=(4,2,2,1)\vdash 9$ is depicted below;
\be
\label{diagram}
\begin{array}{c|cccc}
 & j=1 & j=2 & j=3 & j=4
\\
\hline
i=1 & \bullet&\bullet&\bullet&\bullet
\\
i=2 & \bullet&\bullet&&
\\
i=3 & \bullet&\bullet&&
\\
i=4 & \bullet&&&
\end{array}
\ee
\item There exist several equivalent formul\ae\  for $\dim\ll$, including the well-known hook-length formula; for later convenience we recall the expression
\be
\label{dimll}
\frac{\dim\ll}{|\ll|!}=
\frac{\prod_{1 \leq i < j \leq N}(\lambda_i-\lambda_j+j-i)}
{\prod_{k=1}^N(\lambda_k-k+N)!},
\ee
valid for all $N\geq\ell(\ll)$, setting $\ll_i=0$ for all $\ell(\ll)<i\leq N$.
\end{enumerate}

\begin{proof}[Proof of Proposition~\ref{prop}.]
We need a few preliminaries. First we recall that $Z\left(\C[\S_n]\right)$ is a semi-simple commutative algebra; a basis of idempotents is given by (see e.g. \cite{Serre}) 
\be
\label{idempotents}
\mathcal E_\ll=\frac{\dim\ll}{|\ll|!}\sum_{\mu\vdash n}\chi_\ll^\mu\CC_\mu,
\ee
where $\chi_\ll^\mu$ are the characters of the symmetric group and $\CC_\mu$ are given in \eqref{cycleelements}.
Namely
\be
\label{projection}
\mathcal E_\ll\mathcal E_{\ll'}=\begin{cases}
\mathcal E_\ll&\ll=\ll'\\0&\ll\not=\ll'.
\end{cases}
\ee
For any symmetric polynomial $p(y_1,\dots,y_n)$ in $n$ variables, $p(\J_1,\dots,\J_n)$ belongs to $Z\left(\C[\S_n]\right)$, as we have already mentioned; central elements are diagonal on the basis of idempotents and it is proven in \cite{J1974} that
\be
\label{eigenvalues}
p(\J_1,\dots,\J_n)\mathcal E_\ll=p\left(\{j-i\}_{(i,j)\in\ll}\right)\mathcal E_\ll,
\ee
where in the right hand side we denote $p\left(\{j-i\}_{(i,j)\in\ll}\right)$ the evaluation of the symmetric polynomial $p$ at the $n$ values of $j-i$ for $(i,j)\in\mathbb Z^2$ in the diagram of $\ll\vdash n$; in the example $\ll=(4,2,2,1)\vdash 9$ above, see \eqref{diagram}, this denotes the evaluation $p(0,1,2,3,-1,0,-2,-1,-3)$.

We are ready for the proof proper. First note that by \eqref{eigenvalues} and \eqref{rlambdaG} we have
\be
\left[\prod_{a=1}^nG(\epsilon\J_a)\right]\mathcal E_\ll=r_\ll^{(\epsilon,G)}\mathcal E_\ll,
\ee
which implies, using \eqref{projection}, that
\be
\prod_{a=1}^nG(\epsilon\J_a)=\sum_{\ll\vdash n}r_\ll^{(\epsilon,G)}\mathcal E_\ll.
\ee
By the definition of $H_G^d(\mu)$  in \eqref{eq:defhurwitz} we can rewrite the last identity as
\be
\sum_{\mu\vdash n}\sum_{d\geq 1}\epsilon^d z_\mu H_G^d(\mu)\CC_\mu=\sum_{\ll\vdash n}r_\ll^{(\epsilon,G)}\mathcal E_\ll=\sum_{\ll,\mu\vdash n}\frac{\dim\ll}{|\ll|!}r_\ll^{(\epsilon,G)}\chi_\ll^\mu\CC_\mu.
\ee
Since $\CC_\mu$ form a basis of $Z\left(\C[\S_n]\right)$ we get that for any partition $\mu$
\be
\sum_{d\geq 1}\epsilon^d H_G^d(\mu)=\sum_{\ll\vdash |\mu|}\frac{\dim\ll}{|\ll|!}r_\ll^{(\epsilon,G)}\frac{\chi_\ll^\mu}{z_\mu}.
\ee
Multiplying this identity by $\prod_{i=1}^{\ell(\mu)}t_{\mu_i}$ and summing over all partitions $\mu$, on the left we obtain \eqref{tauGmonomial} and on the right, thanks to the well-known identity \cite{Macdonald}
\be
s_\ll(\t)=\sum_{\mu\vdash |\ll|}\frac{\chi^\mu_\ll}{z_\mu}\prod_{i=1}^{\ell(\mu)}t_{\mu_i},
\ee
we obtain \eqref{tauGschur}. The proof is complete.
\end{proof}

\begin{remark}
This result is used by the authors of \cite{HO2015} to prove that the generating function $\tau_G(\epsilon;\t)$ is a one-parameter family in $\epsilon$ of \emph{Kadomtsev--Petviashvili tau functions} in the times $\t$; a tau function such that  the coefficients of the Schur expansion have the form \eqref{rlambdaG} is termed \emph{hypergeometric} tau function.
It is also worth remarking that the theorem stated here is a reduction of a more general result, proved in \cite{GPH2015}, dealing with generating functions of \emph{double} (weighted) Hurwitz numbers.
In this general setting, the corresponding integrable hierarchy is the \emph{2D Toda hierarchy}.
\end{remark}

\subsection{JUE partition functions}\label{par22}

Let us introduce the formal generating functions
\be
\label{Zpm}
Z_N^\pm(\u):=\int_{\H_N(0,1)}\exp\left(\sum_{k\geq 1}\frac{u_k}k\tr X^{\pm k}\right)\d m^{\sf J}_N(X)=\sum_{\ll\in\P}\frac{\left\langle\prod_{j=1}^\ell\tr X^{\pm \ll_j}\right\rangle}{z_\ll}\prod_{i=1}^{\ell(\ll)}u_{\ll_i},
\ee
of JUE correlators; the sum in the right hand side is a formal power series in $\bf u$ running over all partitions $\ll$, with the combinatorial factor $z_\ll$ defined in \eqref{zll}\footnote{A formal power series in infinitely many variables $u_1,u_2,\cdots$ can be rigorously treated by introducing the grading $\deg u_k=k$ and working in the completion of the algebra of polynomials in $u_1,u_2,\dots$, filtered by degree.}.
We call $Z^+_N(\u)$ (resp. $Z^-_N(\u)$) the \emph{positive} (resp.~\emph{negative}) JUE partition function.
Although it will not be needed in the following, we mention that these partition functions are Toda tau functions in the times $u_1,u_2,\dots$ \cite{AvM1995,AvM2001,Deift}.

Our goal in this paragraph is to show that the JUE partition functions can be expressed in the form \eqref{tauGschur} for appropriate choice of $G$ (see Corollary \ref{cormain}).

The first step is to expand the JUE partition functions in the Schur basis; this is achieved by the following well-known general lemma, whose proof we report for the reader's convenience.
The idea of expanding a hermitian matrix model partition function over the Schur basis has been recently used in the computation of correlators in \cite{JKM2020}.

We first introduce the following notations
\be
\Delta(\underline x)=\prod_{1\leq i< j\leq N}(x_j-x_i)=\det\left(x^{N-j}_i\right)_{i,j=1}^N
\ee
for the Vandermonde determinant and
\be
\label{characters}
\chi_\ll(\underline x):=\frac{\det\left[x^{N-i+\ll_i}_j\right]_{i,j=1}^N}{\Delta(\underline x)}
\ee
for the characters of ${\rm GL}_n$; again, we set $\ll_i=0$ for all $\ell(\ll)<i\leq N$.
\begin{lemma}
\label{lemmamatrixmodelschur}
For any potential $V(x)$ ($x\in I$) we have
\be
\frac{\int_{\H_N(I)}\exp\tr\left(V(X)+\sum_{k\geq 1}\frac{u_k}kX^{\pm k}\right)\d X}{\int_{\H_N(I)}\exp\tr\left(V(X)\right)\d X}
=\sum_{\ll\in\P:\,\ell(\ll)\leq N}c_{\ll,N}^{\pm}s_\ll({\bf u}),
\ee
where the Schur polynomials are defined in \eqref{schur} and the coefficients are
\be
\label{coefficients1}
c_{\ll,N}^\pm=\frac{\int_{I^N} \chi_\lambda(\underline x^{\pm 1}) \Delta^2(\underline x) \prod_{a=1}^N\exp \left[V(x_a)\right]\d^N\underline x}{\int_{I^N}\Delta^2(\underline x)\prod_{a=1}^N\exp \left[V(x_a)\right]\d^N \underline x}.
\ee
Here $\underline x=(x_1,\dots,x_N)$ and $\underline x^{-1}=(x_1^{-1},\dots,x_N^{-1})$.
\end{lemma}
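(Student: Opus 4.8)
\textbf{Proof proposal for Lemma~\ref{lemmamatrixmodelschur}.}
The plan is to reduce the matrix integral to an eigenvalue integral via the Weyl integration formula, and then to recognize the resulting symmetric function of the eigenvalues as a Schur-function expansion. First I would diagonalize $X=U\,\mathrm{diag}(x_1,\dots,x_N)\,U^\dagger$; the standard change of variables for the Lebesgue measure $\d X$ on $\H_N(I)$ produces the Vandermonde-squared Jacobian, so that for any conjugation-invariant integrable function $f$ one has (up to the constant $\mathrm{Vol}(\U(N)/\U(1)^N)$ that cancels between numerator and denominator)
\be
\frac{\int_{\H_N(I)}f(X)\exp\tr V(X)\,\d X}{\int_{\H_N(I)}\exp\tr V(X)\,\d X}
=\frac{\int_{I^N}f(\underline x)\,\Delta^2(\underline x)\prod_{a=1}^N\e^{V(x_a)}\,\d^N\underline x}{\int_{I^N}\Delta^2(\underline x)\prod_{a=1}^N\e^{V(x_a)}\,\d^N\underline x}.
\ee
Applying this with $f(X)=\exp\tr\!\big(\sum_{k\geq1}\tfrac{u_k}{k}X^{\pm k}\big)$ reduces the claim to an identity purely about the symmetric function $\prod_{a=1}^N\exp\!\big(\sum_{k\geq1}\tfrac{u_k}{k}x_a^{\pm k}\big)$.

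Next I would expand this exponential. Using the defining generating series for $h_k(\t)$ quoted just before the lemma, $\sum_{k\geq0}w^k h_k(\t)=\exp\!\big(\sum_{k\geq1}\tfrac{t_k}{k}w^k\big)$, one gets for each eigenvalue $\prod_{a}\exp\!\big(\sum_k\tfrac{u_k}{k}x_a^{\pm1\,k}\big)=\prod_a\sum_{k\geq0}h_k(\u)\,x_a^{\pm k}$, i.e. the product over $a$ of the generating series evaluated at $w=x_a^{\pm1}$. This is exactly the setup of the dual Cauchy/Jacobi--Trudi machinery: the key classical identity is that for indeterminates $y_1,\dots,y_N$,
\be
\prod_{a=1}^N\Big(\sum_{k\geq0}h_k(\u)\,y_a^{k}\Big)=\sum_{\ll:\,\ell(\ll)\leq N}s_\ll(\u)\,\frac{\det\big(y_a^{\ll_b+N-b}\big)_{a,b=1}^N}{\det\big(y_a^{N-b}\big)_{a,b=1}^N}=\sum_{\ll:\,\ell(\ll)\leq N}s_\ll(\u)\,\chi_\ll(\underline y),
\ee
where $s_\ll(\u)$ is the Schur polynomial \eqref{schur} in the power-sum-type variables and $\chi_\ll$ is the ${\rm GL}_N$ character \eqref{characters}. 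I would prove this by multiplying both sides by $\Delta(\underline y)=\det(y_a^{N-b})$, expanding the left side as a sum over monomials $y_1^{k_1}\cdots y_N^{k_N}$, antisymmetrizing, and collecting the terms with distinct exponents into alternants $\det(y_a^{\ll_b+N-b})$ after writing $k_b=\ll_b+N-b$; the Jacobi--Trudi formula \eqref{schur} then identifies the coefficient as $s_\ll(\u)$. Setting $y_a=x_a$ for the $+$ case and $y_a=x_a^{-1}$ for the $-$ case, and substituting back into the eigenvalue integral, the Schur polynomials $s_\ll(\u)$ come out of the integral and one reads off exactly the coefficients $c^\pm_{\ll,N}$ in \eqref{coefficients1}.

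The main obstacle, such as it is, is bookkeeping rather than conceptual: one must handle the change of summation from the product of geometric-like series over the $N$ eigenvalues to a sum over partitions $\ll$ with $\ell(\ll)\leq N$, which requires care with the strictly-decreasing reindexing $k_b\mapsto \ll_b=k_b-N+b$ and with the vanishing of alternants having repeated exponents; and in the negative-power case one must be slightly careful that the formal series in $\u$ is still well defined term-by-term (which it is, since each coefficient of a fixed monomial in $\u$ is a finite combination of eigenvalue integrals, all convergent under the stated growth hypothesis on $\e^{V}$ near finite endpoints). I do not expect any genuine analytic difficulty; the convergence of the individual integrals $\int_{I^N}\chi_\ll(\underline x^{\pm1})\Delta^2(\underline x)\prod_a\e^{V(x_a)}\d^N\underline x$ follows from the moment hypotheses already imposed on the measure \eqref{generalmeasure}.
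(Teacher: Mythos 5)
Your proposal is correct and follows essentially the same route as the paper: reduce the matrix integral to an eigenvalue integral via the Vandermonde-squared Jacobian, then identify $\prod_a\exp\bigl(\sum_k\tfrac{u_k}{k}x_a^{\pm k}\bigr)=\sum_{\ll}\chi_\ll(\underline x^{\pm1})s_\ll(\u)$ as a form of the Cauchy identity. The only difference is that the paper simply cites this identity (from Stanley), whereas you sketch its standard proof by antisymmetrization and Jacobi--Trudi; both are fine.
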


\begin{proof}
We have
\be \label{IntReduction}
\frac{\int_{\H_N(I)}\exp\tr\left(V(X)+\sum_{k\geq 1}\frac{u_k}kX^{\pm k}\right)\d X}{\int_{H_N(I)}\exp\tr\left(V(X)\right)\d X}
=\frac{\int_{I^N}\Delta^2(\underline x)\prod_{a=1}^N\exp\left[V(x_a)+\sum_{k\geq 1}\frac{u_k}kx_a^{\pm k}\right]\d^N \underline x}{\int_{I^N}\Delta^2(\underline x)\prod_{a=1}^N\exp \left[V(x_a)\right]\d^N \underline x},
\ee
where we use the standard decomposition $\d X=\Delta^2(\underline x)\d^N\underline x\d U$ of the Lebesgue measure into eigenvalues $\underline x=(x_1,\dots,x_N)$ and eigenvectors $U\in\U_N$ of the hermitian matrix $X=UXU^\dagger$, with $\d U$ a Haar measure on $\U_N$ (whose normalization is irrelevant as it cancels in \eqref{IntReduction} between numerator and denominator). 
The proof follows by an application of the identity
\be
\exp\left[\sum_{k\geq 1}\frac{u_k}k(x_1^{\pm 1}+\dots+x_N^{\pm 1})^k\right]=\sum_{\ll\in\P:\,\ell(\ll)\leq N}\chi_\ll(\underline x^{\pm 1})s_\ll(\u),
\ee
which is nothing but a form of Cauchy identity, see e.g. \cite{Stanley}.
\end{proof}

\begin{remark}
By applying Andr\'eief identity
\be
\int_{I^N}\det\left[f_i(x_j)\right]_{i,j=1}^N\det\left[g_i(x_j)\right]_{i,j=1}^N\d^N\underline x=N!\det\left[\int_I f_i(x)g_j(x)\d x\right]_{i,j=1}^N
\ee
it is straightforward to show that the coefficients $c_{\ll,N}$ in \eqref{coefficients1} can also be expressed as
\be
c_{\ll,N}^\pm=\frac{\det\left[\mathcal M^\pm_{\ll_i+N-i,N-j}\right]_{i,j=1}^N}{\det\left[\mathcal M^\pm_{N-i,N-j}\right]_{i,j=1}^N},
\qquad
\mathcal M_{i,j}^\pm=\int_I x^{\pm(i+j)}\e^{V(x)}\d x,
\ee
see also~\cite{JKM2020}.
However, for our purposes it is more convenient to work with the representation \eqref{coefficients1}.
\end{remark}

Applying this general lemma to $I=[0,1]$ and $V(x)=\a\log x+\b\log(1-x)$ we can expand the positive and negative JUE partition functions in the Schur basis as
\be
\label{schurexpansion}
Z_N^{\pm}(\u)=\sum_{\ll\in\P:\,\ell(\ll)\leq N}c_{\ll,N}^\pm s_\ll(\u),
\ee
where
\be
\label{coefficients}
c_{\ll,N}^\pm=\frac{\int_{(0,1)^N} \chi_\lambda(\underline x^{\pm 1}) \Delta^2(\underline x) \prod_{a=1}^Nx_a^\a(1-x_a)^\b\d^N\underline x}{\int_{(0,1)^N}\Delta^2(\underline x)\prod_{a=1}^Nx_a^\a(1-x_a)^\b\d^N\underline x}.
\ee

For the negative coefficients $c_{\ll,N}^-$ we shall use the following elementary lemma.

\begin{lemma}
\label{lemma:schurinverse}
For any partition $\ll=(\ll_1,\dots,\ll_\ell)$ of length $\ell\leq N$ we have
\be
\chi_\ll(\underline x^{-1})=\left(\prod_{a=1}^Nx_a^{-\ll_1}\right)\chi_{\wh\ll}(\underline x),
\ee
where $\wh\ll$ is the partition of length $<N$ whose parts are $\wh\ll_j=\ll_1-\ll_{N-j+1}$.
\end{lemma}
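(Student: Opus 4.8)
The plan is to compute $\chi_\ll(\underline x^{-1})$ directly from the Weyl-type formula \eqref{characters} by pulling a common factor out of the determinant in the numerator. Write $\underline x^{-1}=(x_1^{-1},\dots,x_N^{-1})$ and recall that the parts are $\ll_1\geq\ll_2\geq\cdots\geq\ll_N\geq 0$ (padding with zeros so that $\ell(\ll)\leq N$). By definition
\be
\chi_\ll(\underline x^{-1})=\frac{\det\left[x_j^{-(N-i+\ll_i)}\right]_{i,j=1}^N}{\Delta(\underline x^{-1})}.
\ee
From each column $j$ of the numerator determinant I would factor out $x_j^{-(N-1+\ll_1)}$, i.e. the most negative power occurring in that column; this is the power appearing in the row $i=N$ (where $N-i+\ll_i = \ll_N$ is smallest among the exponents $N-i+\ll_i$, since the sequence $N-i+\ll_i$ is strictly decreasing in $i$ because $\ll$ is weakly decreasing). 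Actually it is cleaner to factor out $x_j^{-(N-1+\ll_1)}$ from column $j$ regardless; then the $(i,j)$ entry becomes $x_j^{(N-1+\ll_1)-(N-i+\ll_i)} = x_j^{(i-1)+(\ll_1-\ll_i)}$. Set $m_i := (i-1)+(\ll_1-\ll_i)$ for $i=1,\dots,N$. Since $\ll_1\geq\ll_i$ and these are integers, $m_i\geq 0$; moreover $m_1=0$ and $m_i - m_{i-1} = 1 + (\ll_{i-1}-\ll_i)\geq 1$, so $0=m_1<m_2<\cdots<m_N$, confirming the $m_i$ are the exponents of a genuine polynomial character after reindexing.

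Next I would reverse the order of the rows, replacing $i$ by $N+1-i$; this reordering costs a sign $(-1)^{\lfloor N/2\rfloor}$ (or more precisely the sign of the order-reversing permutation on $N$ letters), but the very same sign is produced when one applies the same row-reversal bookkeeping to the Vandermonde $\Delta(\underline x^{-1})$, so it cancels in the ratio. After the reversal the exponents become $m_{N+1-i} = (N-i) + (\ll_1 - \ll_{N+1-i})$, which I recognize as exactly $N-i+\wh\ll_i$ with $\wh\ll_i := \ll_1-\ll_{N+1-i}$; and $\wh\ll$ is weakly decreasing with $\wh\ll_N=0$, so $\ell(\wh\ll)<N$, as claimed. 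Hence the row-reversed numerator determinant equals $\det\left[x_j^{N-i+\wh\ll_i}\right]_{i,j=1}^N$, the numerator in $\chi_{\wh\ll}(\underline x)$. Finally I would handle the denominator: $\Delta(\underline x^{-1}) = \prod_{i<j}(x_j^{-1}-x_i^{-1}) = \prod_{i<j}\frac{x_i-x_j}{x_ix_j} = \left(\prod_{i<j}x_ix_j\right)^{-1}\prod_{i<j}(x_i-x_j) = (-1)^{\binom N2}\left(\prod_a x_a\right)^{-(N-1)}\Delta(\underline x)$. Assembling the pieces, the factor $\left(\prod_j x_j^{-(N-1+\ll_1)}\right)$ from the numerator combines with the $\left(\prod_a x_a\right)^{N-1}$ coming from inverting $\Delta(\underline x^{-1})$ to leave $\prod_a x_a^{-\ll_1}$, the sign $(-1)^{\binom N2}$ is absorbed, and one reads off $\chi_\ll(\underline x^{-1}) = \left(\prod_{a=1}^N x_a^{-\ll_1}\right)\chi_{\wh\ll}(\underline x)$.

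There is no real obstacle here — the lemma is elementary — but the one place to be careful is the sign bookkeeping: the order-reversal of the rows and the identity $\Delta(\underline x^{-1})=(-1)^{\binom N2}(\prod x_a)^{-(N-1)}\Delta(\underline x)$ each contribute signs, and one must check they cancel (they do, both being the sign of the longest permutation in $\S_N$). An alternative, perhaps slicker, route that avoids signs altogether is to use the known identity $s_\ll(y_1,\dots,y_N)$ expressed via $\chi_\ll$ together with the duality $s_\ll(x_1^{-1},\dots,x_N^{-1}) = (x_1\cdots x_N)^{-\ll_1} s_{\ll^\vee}(x_1,\dots,x_N)$ for the complement partition $\ll^\vee$ inside the $N\times\ll_1$ rectangle — but since the paper works directly with the bialternant \eqref{characters}, the determinantal computation above is the most self-contained and is the one I would write out.
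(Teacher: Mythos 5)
Your proof is correct and follows essentially the same route as the paper's: both compute directly from the bialternant definition \eqref{characters}, reverse the index order so that the exponents $\ll_1-\ll_{N-j+1}+N-j$ appear, and extract the factor $\prod_a x_a^{-\ll_1}$ after clearing the negative powers. The only difference is bookkeeping: the paper applies the same column reversal to numerator and denominator simultaneously so the signs cancel automatically, whereas you reverse only the numerator and verify explicitly that the sign $(-1)^{\binom N2}$ from $\Delta(\underline x^{-1})$ matches the sign of the reversal permutation --- a check you carry out correctly.
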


\begin{proof}
The proof follows from the following chain of equalities;
\begin{align}
\nonumber
\chi_\ll(\underline x^{-1})&=\frac{\det\left[x_i^{-N+j-\ll_j}\right]_{i,j=1}^N}{\det\left[x^{-N+j}_i\right]_{i,j=1}^N}
=\frac{\det\left[x_i^{1-j-\ll_{N-j+1}}\right]_{i,j=1}^N}{\det\left[x^{1-j}_i\right]_{i,j=1}^N}
\\
&
=\left(\prod_{a=1}^Nx_a^{-\ll_1}\right)\frac{\det\left[x_i^{N-j+\ll_1-\ll_{N-j+1}}\right]_{i,j=1}^N}{\det\left[x^{N-j}_i\right]_{i,j=1}^N}
=\left(\prod_{a=1}^Nx_a^{-\ll_1}\right)\chi_{\wh\ll}(\underline x).
\end{align}
In the first step we have shuffled the columns as $j\mapsto N-j+1$, then we have multiplied both numerator and denominator by $(x_1\cdots x_N)^{N+\ll_1}$, and finally we have applied the definition \eqref{characters}.
\end{proof}

For the simplification of the coefficients \eqref{coefficients} we rely on the following \emph{Schur--Selberg integral}
\be
\label{selberg}
\int_{(0,1)^N}\!\chi_\ll(\underline x)\Delta^2(\underline x)\prod_{a=1}^nx_a^\a(1-x_a)^\b\d^N\underline x=
 N! \!\prod_{1 \leq i < j \leq N}\!(\lambda_i-\lambda_j+j-i) \prod_{k=1}^N
\frac{\Gamma(\b+k)\Gamma(\a+N+\lambda_k-k+1)}
{\Gamma(\a+\b+2N+\lambda_k-k+1)},
\ee
for which we refer e.g.~to \cite[page 385]{Macdonald}.
The above allows us to prove the following proposition.

\begin{proposition}
We have
\be
\label{cpartitions}
c_{\ll,N}^+=\frac{\dim\ll}{|\ll|!}\prod_{(i,j) \in \lambda} \frac{(N-i+j) (\a+N-i+j)} {(\alpha+\beta+2N-i+j)}
,\
c_{\ll,N}^-=\frac{\dim\ll}{|\ll|!}\prod_{(i,j) \in \lambda} \frac{(N-i+j) (\a+\b+N+i-j)} {(\alpha+i-j)}.
\ee
\end{proposition}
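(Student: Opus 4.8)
The plan is to compute both coefficients in \eqref{coefficients} by means of the Schur--Selberg integral \eqref{selberg}: each reduces to a ratio of two such integrals, and the resulting products of Gamma functions are then rewritten as products over the boxes $(i,j)\in\ll$. I will use repeatedly the elementary identities $\prod_{(i,j)\in\ll}(N-i+j)=\prod_{k=1}^N\frac{(\ll_k+N-k)!}{(N-k)!}$ and $\prod_{1\le i<j\le N}(j-i)=\prod_{k=1}^N(N-k)!$, which combined with \eqref{dimll} yield
\[
\frac{\prod_{1\le i<j\le N}(\ll_i-\ll_j+j-i)}{\prod_{1\le i<j\le N}(j-i)}=\frac{\dim\ll}{|\ll|!}\prod_{(i,j)\in\ll}(N-i+j)\qquad(N\ge\ell(\ll)).
\]

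For $c_{\ll,N}^+$, the numerator of \eqref{coefficients} is the left-hand side of \eqref{selberg} with partition $\ll$ and the denominator is the same with $\ll$ the empty partition; dividing, the factors $N!$ and $\prod_k\Gamma(\b+k)$ cancel, and writing each quotient $\Gamma(\a+N+\ll_k-k+1)/\Gamma(\a+N-k+1)$ and $\Gamma(\a+\b+2N-k+1)/\Gamma(\a+\b+2N+\ll_k-k+1)$ as a product over $j=1,\dots,\ll_k$ and collecting row $k$ with the column index $j$ gives
\[
c_{\ll,N}^+=\frac{\prod_{i<j}(\ll_i-\ll_j+j-i)}{\prod_{i<j}(j-i)}\;\prod_{(i,j)\in\ll}\frac{\a+N-i+j}{\a+\b+2N-i+j}\,.
\]
The displayed identity above turns the prefactor into $\frac{\dim\ll}{|\ll|!}\prod_{(i,j)\in\ll}(N-i+j)$, which is the first formula in \eqref{cpartitions}.

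For $c_{\ll,N}^-$ I first use Lemma~\ref{lemma:schurinverse} to write $\chi_\ll(\underline x^{-1})=\big(\prod_a x_a^{-\ll_1}\big)\chi_{\wh\ll}(\underline x)$, so the numerator of \eqref{coefficients} becomes the left-hand side of \eqref{selberg} with partition $\wh\ll$ and parameter $\a$ replaced by $\a-\ll_1$ (valid for $\Re\a$ large, whence for all $\a,\b$ by rationality). Proceeding as before,
\[
c_{\ll,N}^-=\frac{\prod_{i<j}(\wh\ll_i-\wh\ll_j+j-i)}{\prod_{i<j}(j-i)}\prod_{k=1}^N\frac{\Gamma(\a-\ll_1+N+\wh\ll_k-k+1)\,\Gamma(\a+\b+2N-k+1)}{\Gamma(\a-\ll_1+\b+2N+\wh\ll_k-k+1)\,\Gamma(\a+N-k+1)}\,.
\]
It remains to translate these products over $\wh\ll$ back to products over $\ll$. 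The geometric input is that $\wh\ll$ is the $180^\circ$ rotation of the complement of $\ll$ in the rectangle $1\le a\le N$, $1\le b\le\ll_1$: under this correspondence a box $(i,j)\in\wh\ll$ matches a rectangle box outside $\ll$ of content $(\ll_1-N)-(j-i)$, while the involution $(a,b)\mapsto(N+1-a,\ll_1+1-b)$ of the rectangle sends content $c$ to $(\ll_1-N)-c$; together these give, for any function $f$,
\[
\prod_{(i,j)\in\wh\ll}f(j-i)=\frac{\prod_{1\le a\le N,\ 1\le b\le\ll_1}f(b-a)}{\prod_{(i,j)\in\ll}f\big((\ll_1-N)-(j-i)\big)}\,.
\]
Splitting the product over $k$ into a $\wh\ll$-part times a full-rectangle part (inserting $\Gamma(\a-\ll_1+N-k+1)$ and $\Gamma(\a-\ll_1+\b+2N-k+1)$) and applying this to the linear factors $\a+N+(\cdot)$ and $\a+\b+2N+(\cdot)$, every rectangle product cancels in a pair and the product over $k$ collapses to $\prod_{(i,j)\in\ll}\frac{\a+\b+N+i-j}{\a+i-j}$. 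Finally $\prod_{i<j}(\wh\ll_i-\wh\ll_j+j-i)=\prod_{i<j}(\ll_i-\ll_j+j-i)$, because $\wh\ll_i-\wh\ll_j=\ll_{N+1-j}-\ll_{N+1-i}$ and $(i,j)\mapsto(N+1-j,N+1-i)$ permutes $\{i<j\}$; so the prefactor equals $\frac{\dim\ll}{|\ll|!}\prod_{(i,j)\in\ll}(N-i+j)$ by the displayed identity, and multiplying out gives the second formula in \eqref{cpartitions}.

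The positive case is a routine application of \eqref{selberg} and \eqref{dimll}. The main obstacle is the bookkeeping in the negative case: one must carefully track the shift $\a\mapsto\a-\ll_1$, the rectangle-complement description of $\wh\ll$, and the two content reflections so that all ``rectangle'' products cancel and only the box-products over $\ll$ survive.
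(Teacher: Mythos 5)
Your proof is correct and follows essentially the same route as the paper: both coefficients are reduced to ratios of the Schur--Selberg integral \eqref{selberg}, the negative case is handled via Lemma \ref{lemma:schurinverse} together with the shift $\a\mapsto\a-\ll_1$, and the prefactor is identified through \eqref{dimll}. The only difference is in the bookkeeping for $c^-_{\ll,N}$: where you convert the $\wh\ll$-products back to $\ll$-products via the rectangle-complement and content-reflection argument, the paper simply re-indexes the product over $k$ by $k\mapsto N+1-k$ (using $\wh\ll_k=\ll_1-\ll_{N-k+1}$), which lands directly on $\prod_{k}\Gamma(\a-\ll_k+k)/\Gamma(\a+\b+N-\ll_k+k)$ and hence on $\prod_{(i,j)\in\ll}\frac{\a+\b+N+i-j}{\a+i-j}$ with noticeably less effort.
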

\begin{proof}
We start with $c_{\ll,N}^+$; using \eqref{coefficients}, \eqref{selberg}, and \eqref{dimll} we compute
\begin{align}
\nonumber
c_{\ll,N}^+&= \frac{\prod_{1 \leq i < j \leq N}(\lambda_i-\lambda_j+j-i) }{\prod_{1 \leq i < j \leq N}(j-i)}
\prod_{k=1}^N\frac{\Gamma(\a+N+\lambda_k-k+1)\Gamma(\a+\b+2N-k+1)}{\Gamma(\a+\b+2N+\lambda_k-k+1)\Gamma(\a+N-k+1)}
\\
\nonumber
&=\frac{\dim\ll}{|\ll|!}
\prod_{k=1}^{N-1} \frac{(N-k+1)_{\lambda_k}(\a+N-k+1)_{\lambda_k}} {(\alpha+\beta+2N-k+1)_{\lambda_k}}
\\
&=\frac{\dim\ll}{|\ll|!}\prod_{(i,j) \in \lambda} \frac{(N-i+j) (\a+N-i+j)} {(\alpha+\beta+2N-i+j)}.
\end{align}
We remind that we are using the notation \eqref{risingfactorial} for the rising factorial.
For $c_{\ll,N}^-$ we first note that, thanks to Lemma \ref{lemma:schurinverse} and \eqref{selberg}, we have
\be
\int_{(0,1)^N}\chi_\ll(\underline x^{-1})\Delta^2(\underline x)\prod_{a=1}^Nx_a^\a(1-x_a)^\b\d^N\underline x=
N!\prod_{1 \leq i < j \leq N}(\lambda_i-\lambda_j+j-i) \prod_{k=1}^N\frac{\Gamma(\b+k)\Gamma(\a-\lambda_{k}+k)}{\Gamma(\alpha+\beta+N-\lambda_{k}+k)},
\ee
then with similar computations as above we obtain
\begin{align}
\nonumber
c_{\ll,N}^-&= \frac{\prod_{1 \leq i < j \leq N}(\lambda_i-\lambda_j+j-i) }{\prod_{1 \leq i < j \leq N}(j-i)}
\prod_{k=1}^N\frac{\Gamma(\a-\lambda_k+k)\Gamma(\a+\b+N+k)}{\Gamma(\a+\b+N-\lambda_k+k)\Gamma(\a+k)}
\\
\nonumber
&=\frac{\dim\ll}{|\ll|!}
\prod_{k=1}^{N-1} \frac{(N-k+1)_{\lambda_k}(\a+\b+N-\lambda_k+k)_{\lambda_k}} {(\alpha-\ll_k+k)_{\lambda_k}}
\\
&=\frac{\dim\ll}{|\ll|!}\prod_{(i,j) \in \lambda} \frac{(N-i+j) (\a+\b+N+i-j)} {(\alpha+i-j)}.
\end{align}
\end{proof}

This proposition enables us to identify the Jacobi generating function \eqref{Zpm} with the generating function of multiparametric weighted Hurwitz numbers in~\eqref{tauGmonomial} as follows.

\begin{corollary}
\label{cormain}
Let $c_\a:=1+\a/N$ and $c_\b:=1+\b/N$; then the Jacobi formal partition functions in \eqref{Zpm} take the form
\begin{align}
Z_N^+({\bf u})&=\tau_{G^+}\left(\epsilon=\frac 1N,\t\right),& G^+(z)&=\frac {(1+z)\left(1+\frac z{c_\a}\right)}{1+\frac z{c_\a+c_\b}},& t_k&=\left(\frac{c_\a N}{c_\a+c_\b}\right)^ku_k,
\\
Z_N^-({\bf u})&=\tau_{G^-}\left(\epsilon=\frac 1N,\t\right),& G^-(z)&=\frac {(1+z)\left(1-\frac z{c_\a+c_\b-1}\right)}{1-\frac z{c_\a-1}},& t_k&=\left(\frac{(c_\a+c_\b-1) N}{c_\a-1}\right)^ku_k,
\end{align}
where $\tau_G$ is introduced in Theorem \ref{prop}.
\end{corollary}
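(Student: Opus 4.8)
The plan is to prove the identity by comparing the Schur expansions of both sides. By \eqref{schurexpansion} and \eqref{cpartitions} the left-hand sides are $Z_N^\pm(\u)=\sum_{\ell(\ll)\leq N}c_{\ll,N}^\pm s_\ll(\u)$ with $c^\pm_{\ll,N}$ given by explicit products over the cells $(i,j)\in\ll$, while by Proposition~\ref{prop} (equations \eqref{tauGschur}--\eqref{rlambdaG}) the right-hand sides are $\tau_{G^\pm}(\tfrac1N;\t)=\sum_{\ll\in\P}\tfrac{\dim\ll}{|\ll|!}\big(\prod_{(i,j)\in\ll}G^\pm(\tfrac{j-i}{N})\big)s_\ll(\t)$. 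The statement will follow from three elementary steps carried out in order.

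First I would evaluate the weights $G^\pm$ at $\epsilon=\tfrac1N$ on a cell of content $j-i$, rewriting each linear factor using $c_\a N=N+\a$, $c_\b N=N+\b$, $(c_\a+c_\b)N=2N+\a+\b$, $(c_\a-1)N=\a$ and $(c_\a+c_\b-1)N=N+\a+\b$. A direct simplification gives
\[
G^+\!\Big(\frac{j-i}{N}\Big)=\frac{c_\a+c_\b}{c_\a N}\cdot\frac{(N-i+j)(\a+N-i+j)}{\a+\b+2N-i+j},\qquad
G^-\!\Big(\frac{j-i}{N}\Big)=\frac{c_\a-1}{(c_\a+c_\b-1)N}\cdot\frac{(N-i+j)(\a+\b+N+i-j)}{\a+i-j}.
\]
Taking the product over $(i,j)\in\ll$ and comparing with \eqref{cpartitions}, this yields $r^{(G^\pm,1/N)}_\ll=\rho_\pm^{-|\ll|}\,\tfrac{|\ll|!}{\dim\ll}\,c^\pm_{\ll,N}$, with $\rho_+=\tfrac{c_\a N}{c_\a+c_\b}$ and $\rho_-=\tfrac{(c_\a+c_\b-1)N}{c_\a-1}$ — exactly the scaling ratios appearing in the statement. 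Equivalently, $\tfrac{\dim\ll}{|\ll|!}r^{(G^\pm,1/N)}_\ll=\rho_\pm^{-|\ll|}c^\pm_{\ll,N}$.

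Second, I would use that each $s_\ll$ is weighted homogeneous of degree $|\ll|$: from the generating series defining $h_k(\t)$, the substitution $t_k=\rho^k u_k$ gives $h_k(\t)=\rho^k h_k(\u)$, and hence $s_\ll(\t)=\rho^{|\ll|}s_\ll(\u)$ by \eqref{schur} (factoring $\rho^{\ll_i-i}$ from row $i$ and $\rho^{j}$ from column $j$, whose exponents sum to $|\ll|$). Choosing $\rho=\rho_\pm$, the factor $\rho_\pm^{|\ll|}$ cancels the $\rho_\pm^{-|\ll|}$ produced in the first step, so that the coefficient of $s_\ll(\u)$ in $\tau_{G^\pm}(\tfrac1N;\t)$ equals $c^\pm_{\ll,N}$ for every $\ll$ with $\ell(\ll)\leq N$. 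Third — the only point needing an actual argument rather than bookkeeping — I would check that the terms with $\ell(\ll)>N$ drop out of $\tau_{G^\pm}$: any such $\ll$ contains the cell $(i,j)=(N+1,1)$, of content $j-i=-N$, and both $G^+$ and $G^-$ vanish at $z=-1$ (the factor $1+z$), so $r^{(G^\pm,1/N)}_\ll=0$; for generic $\a,\b$ no cell produces a pole, since the poles of $G^\pm$ sit at $z=-(c_\a+c_\b)$ resp.\ $z=-(c_\a-1)$, not of the form $(j-i)/N$ with $j-i$ a negative integer. Combining the three steps, $Z_N^\pm(\u)$ and $\tau_{G^\pm}(\tfrac1N;\t)$ have identical Schur expansions, hence coincide as formal power series, which is the claim.

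I expect the content-matching in the first step to be where most care is needed, namely keeping straight which of the factors $N-i+j$, $\a+N-i+j$, $\a+\b+N+i-j$, $\a+i-j$ sits in the numerator versus the denominator of $G^\pm$ once $j-i$ is substituted, and tracking the overall constant $\rho_\pm^{\mp}$ so that it matches the prescribed time rescaling; the homogeneity argument and the vanishing of the $\ell(\ll)>N$ terms are then routine.
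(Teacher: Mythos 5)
Your proposal is correct and follows essentially the same route as the paper: both compare the Schur expansions, rewrite the coefficients $c^{\pm}_{\ll,N}$ of \eqref{cpartitions} as $\tfrac{\dim\ll}{|\ll|!}\,\rho_\pm^{|\ll|}\prod_{(i,j)\in\ll}G^{\pm}\bigl(\tfrac{j-i}{N}\bigr)$, absorb $\rho_\pm^{|\ll|}$ into the time rescaling via homogeneity of $s_\ll$, and dispose of the partitions with $\ell(\ll)>N$ by a vanishing argument. Your content computations and the observation that $G^{\pm}(-1)=0$ kills the cells of content $-N$ match the paper's (slightly terser) proof.
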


\begin{proof}
We first note that we can rewrite the expansion \eqref{schurexpansion} as
\be
Z_N^\pm(\u)=\sum_{\ll\in\P}c_{\ll,N}^\pm s_\ll(\u),
\ee
with the sum  over all partitions $\P$ and no longer restricted to $\ell(\ll)\leq N$; this is clear as $c_{N,\ll}^\pm=0$ whenever $N=0,1,2,\dots$ and $\ell(\ll)>N$.
Then the proof is immediate by the formula \eqref{rlambdaG} for the coefficients $r_\ll^{(G,\epsilon)}$, since \eqref{cpartitions} can be rewritten as
\begin{align}
c_{\ll,N}^+&=
\frac{\dim\ll}{|\ll|!}
\left(\frac{c_\a N}{c_\a+c_\b}\right)^{|\ll|}
\prod_{(i,j) \in \lambda}\frac{\left(1+\frac 1N(j-i)\right) \left(1+\frac 1{c_\a N}(j-i)\right)}{1+\frac 1{(c_\a+c_\b )N}(j-i)},
\\
c_{\ll,N}^-&=
\frac{\dim\ll}{|\ll|!}
\left(\frac{(c_\a+c_\b-1) N}{c_\a-1}\right)^{|\ll|}
\prod_{(i,j) \in \lambda} \frac{\left(1+\frac 1N(j-i)\right)\left(1-\frac 1{(c_\a+c_\b-1)N}(j-i)\right)}{1-\frac 1{(c_\a-1)N}(j-i)}.
\end{align}
\end{proof}

\subsection{Hurwitz numbers \texorpdfstring{$h_g(\ll,\mu,\nu)$}{h\_g} and multiparametric Hurwitz numbers}
We now connect the multiparametric Hurwitz numbers \eqref{eq:defhurwitz} for the functions $G^\pm(z)$, appearing in Corollary \ref{cormain}, with the counting problem in Definition \ref{def:hurwitz}.

\begin{proposition}\label{propinterpretationhurwitz}
If $G(z)=\frac{(1+z)(1+\gamma z)}{1-\delta z}$, with $\gamma$ and $\delta$ parameters, then for all partitions $\ll\vdash n$ and all integers $g\geq 0$ we have
\be
\label{eq:interpretationhurwitz}
H_G^{2g-2+n+\ell(\ll)}(\ll)=
\frac {1}{n!}
\sum_{\mu,\nu\vdash n}\gamma^{n-\ell(\nu)}\delta^{\ell(\mu)+\ell(\nu)+\ell(\ll)+2g-2-n}h_g(\ll,\mu,\nu),
\ee
where the triple monotone Hurwitz number $h_g(\ll,\mu,\nu)$ has been introduced in Definition \ref{def:hurwitz}.
\end{proposition}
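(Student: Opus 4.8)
The plan is to unwind the definition \eqref{eq:defhurwitz} of $H_G^d(\ll)$ for the specific $G(z)=\frac{(1+z)(1+\gamma z)}{1-\delta z}$ and to recognize the coefficient of $\epsilon^d\CC_\ll$ as a weighted count of factorizations of the type appearing in Definition \ref{def:hurwitz}. Writing $G(\epsilon\J_a)=(1+\epsilon\J_a)(1+\gamma\epsilon\J_a)(1-\delta\epsilon\J_a)^{-1}$ and expanding the geometric series, the product $\prod_{a=1}^n G(\epsilon\J_a)$ becomes a sum over triples of functions $a\mapsto(\text{use of the factor }1+\epsilon\J_a,\ 1+\gamma\epsilon\J_a,\ \text{or a power of }\delta\epsilon\J_a)$. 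The key structural input is the relation \eqref{JMvsC}: the elementary symmetric generating product $\prod_a(1+\epsilon\J_a)$ equals $\sum_{\ll}\epsilon^{n-\ell(\ll)}\CC_\ll$, so each of the two numerator factors contributes, via the standard Jucys--Murphy mechanism, a central element that records a permutation of a prescribed cycle type; after expanding, $\prod_a(1+\epsilon\J_a)=\sum_{\sigma}\epsilon^{\text{(number of JM generators used)}}\sigma$ where the sum is over all ways of choosing, for a subset $S\subseteq\{2,\dots,n\}$, a transposition $(i_a,a)$ with $i_a<a$ for each $a\in S$, and this is precisely a monotone (in the sense $b_1\le\cdots\le b_r$, here with the $b$'s being the distinct "large" indices $a\in S$ together with the repeated-$\J_a$ contributions from the denominator) product of transpositions.

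The main step is therefore a bookkeeping identification: I would show that the coefficient of $\epsilon^d\CC_\ll$ in $\prod_a G(\epsilon\J_a)$, after dividing by $z_\ll$ as in \eqref{eq:defhurwitz}, equals $\frac1{n!}\sum_{\mu,\nu}\gamma^{n-\ell(\nu)}\delta^{\,r}\,h_g(\ll,\mu,\nu)$ with $r=\ell(\mu)+\ell(\nu)+\ell(\ll)+2g-2-n$ and $d=2g-2+n+\ell(\ll)$. The factor $(1+\epsilon\J_a)$ with coefficient $1$ contributes the permutation $\pi_1$ of cycle type $\mu$ via \eqref{JMvsC} (it contributes $\epsilon^{n-\ell(\mu)}$, i.e. $r_\mu:=n-\ell(\mu)$ transpositions), the factor $(1+\gamma\epsilon\J_a)$ contributes $\pi_2$ of cycle type $\nu$ with weight $\gamma^{n-\ell(\nu)}$ and $\epsilon^{n-\ell(\nu)}$, and the denominator $\prod_a(1-\delta\epsilon\J_a)^{-1}=\prod_a\sum_{k\ge0}(\delta\epsilon\J_a)^k$ contributes a monotone sequence of $r$ transpositions $\tau_1\cdots\tau_r$ with $b_1\le\cdots\le b_r$ (the monotonicity being exactly the statement that within the $a$-th factor the Jucys--Murphy element $\J_a$ may be used with repetition, and factors are ordered by $a$), carrying weight $\delta^r\epsilon^r$. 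Matching powers of $\epsilon$: $d=(n-\ell(\mu))+(n-\ell(\nu))+r$, and the product $\pi_1\pi_2\tau_1\cdots\tau_r$ lies in $\cyc(\ll)$; the identity $r=\ell(\mu)+\ell(\nu)+\ell(\ll)+2g-2-n$ then encodes the constraint (1) of Definition \ref{def:hurwitz} and fixes $d=2g-2+n+\ell(\ll)$, which is the exponent appearing on the left of \eqref{eq:interpretationhurwitz}.

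The subtlety to be handled carefully — and I expect this to be the main obstacle — is the precise correspondence between the monotonicity condition ``$b_1\le\cdots\le b_r$'' in Definition \ref{def:hurwitz} and the ordered expansion of $\prod_{a=1}^n G(\epsilon\J_a)$. One must verify that (i) the numerator factors $1+\epsilon\J_a$ and $1+\gamma\epsilon\J_a$, when expanded via \eqref{JMvsC}, genuinely produce \emph{arbitrary} permutations of the right conjugacy class (not just monotone factorizations) — this is the content of \eqref{JMvsC}, since $\prod_a(1+\epsilon\J_a)=\sum_\ll\epsilon^{n-\ell(\ll)}\CC_\ll$ sums over the \emph{full} conjugacy class $\cyc(\ll)$ with multiplicity $z_\ll/|\cyc(\ll)|\cdot|\cyc(\ll)|$ counted correctly; and (ii) the three groups of factors appear in the fixed order (numerator-1, then numerator-2, then denominator) matching the order $\pi_1\pi_2\tau_1\cdots\tau_r$ in item (4), which requires noting that $\J_a$ commute among themselves so the $a$-indexed products in $\prod_a G(\epsilon\J_a)$ can indeed be reorganized as $\big(\prod_a(1+\epsilon\J_a)\big)\big(\prod_a(1+\gamma\epsilon\J_a)\big)\big(\prod_a(1-\delta\epsilon\J_a)^{-1}\big)$. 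Once the bookkeeping is set up so that a tuple $(\pi_1,\pi_2,\tau_1,\dots,\tau_r)$ counted by $h_g(\ll,\mu,\nu)$ corresponds bijectively to a monomial $\epsilon^d\gamma^{n-\ell(\nu)}\delta^r$ times the permutation $\pi_1\pi_2\tau_1\cdots\tau_r\in\cyc(\ll)$ in the expansion of $\prod_a G(\epsilon\J_a)$, summing over $\cyc(\ll)$ produces $z_\ll\,(\text{weighted count})\,\CC_\ll$, and dividing by $z_\ll$ (and accounting for the $1/n!$ normalization implicit in passing between $\CC_\ll$ coefficients and the count) yields exactly \eqref{eq:interpretationhurwitz}. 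I would also double-check the combinatorial identity relating $d$, $g$, $n$, $\ell(\mu)$, $\ell(\nu)$, $\ell(\ll)$, $r$ by a short Euler-characteristic / Riemann--Hurwitz count, consistent with item (1) of Definition \ref{def:hurwitz}.
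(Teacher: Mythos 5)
Your proposal is correct and follows essentially the same route as the paper: reorganize $\prod_a G(\epsilon\J_a)$ into the three commuting factors, apply \eqref{JMvsC} to the two numerator products to produce $\CC_\mu$ and $\CC_\nu$, expand the geometric series of the denominator into monotone products $\J_{a_1}\cdots\J_{a_r}$ with $a_1\leq\dots\leq a_r$, and match exponents of $\epsilon$ together with the normalization $z_\ll|\cyc(\ll)|=n!$. The subtleties you flag (commutativity of the YJM elements justifying the reordering, and the distinction between the full conjugacy-class sums from \eqref{JMvsC} versus the monotone condition on the $\tau_i$) are exactly the points the paper's proof relies on.
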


\begin{proof}
We apply \eqref{JMvsC} to the first two factors of the following to get
\begin{align}
\nonumber
\prod_{a=1}^nG\left(\epsilon J_a\right)&=\prod_{a=1}^n(1+\epsilon\J_a)(1+\epsilon \gamma\J_a)\frac 1{1-\epsilon \delta\J_a}
\\
&=\left(\sum_{\mu\vdash n}\epsilon^{n-\ell(\mu)}\CC_\mu\right)\left(\sum_{\nu\vdash n}(\epsilon\gamma)^{n-\ell(\nu)}\CC_\nu\right)\left(\sum_{r\geq 0}(\epsilon\delta)^r\sum_{1\leq a_1\leq\dots\leq a_r\leq n}\J_{a_1}\cdots\J_{a_r}\right).
\end{align}
By definition \eqref{eq:defhurwitz}, extracting the coefficient of $\epsilon^d\CC_\ll$ and dividing by $z_\ll$ we obtain $H^d_G(\ll)$; therefore
\be
H_{G}^{d}(\ll)=
\frac {1}{z_\ll|\cyc(\ll)|}
\sum_{\mu,\nu\vdash n}\gamma^{n-\ell(\nu)}\delta^{r} h_g(\ll,\mu,\nu),
\ee
where $d,r,g$ in this identity are related via
\be
r=\ell(\ll)+\ell(\mu)+\ell(\nu)+2g-2-n,\qquad d=2n-\ell(\mu)-\ell(\nu)+r.
\ee
The proof is complete by the identity $z_\ll|\cyc(\ll)|=n!$.
\end{proof}

\subsection{Proof of Theorem \ref{thmhurwitz}}
From Corollary \ref{cormain} we have, with the scaling $\a=(c_\a-1)N$, $\b=(c_\b-1)N$,
\begin{align}
Z^+_N(\u)&=\sum_{d\geq 1}\frac 1{N^d}\sum_{\ll\in\P}
\left(\frac{c_\a N}{c_\a+c_\b}\right)^{|\ll|}
H^d_{G^+}(\ll)\prod_{i=1}^{\ell(\ll)}u_{\ll_i},
\\
Z^-_N(\u)&=\sum_{d\geq 1}\frac 1{N^d}\sum_{\ll\in\P}
\left(\frac{(c_\a+c_\b-1) N}{c_\a-1}\right)^{|\ll|}
H^d_{G^-}(\ll)\prod_{i=1}^{\ell(\ll)}u_{\ll_i},
\end{align}
where we have used Proposition \ref{prop}. It follows from \eqref{Zpm} that
\begin{align}
\frac{\left\langle\prod_{j=1}^\ell\tr X^{\ll_j}\right\rangle}{z_\ll}&=\sum_{d\geq 1}N^{|\ll|-d}\left(\frac{c_\a}{c_\a+c_\b}\right)^{|\ll|}H_{G^+}^d(\ll),
\\
\frac{\left\langle\prod_{j=1}^\ell\tr X^{-\ll_j}\right\rangle}{z_\ll}&=\sum_{d\geq 1}N^{|\ll|-d}\left(\frac{c_\a+c_\b-1}{c_\a-1}\right)^{|\ll|}H_{G^-}^d(\ll),
\end{align}
and using finally Proposition \ref{propinterpretationhurwitz} we have
\begin{align}
\frac{\left\langle\prod_{j=1}^\ell\tr X^{\ll_j}\right\rangle}{z_\ll}&=
\frac 1{|\ll|!}\sum_{g\geq 0}N^{2-2g-\ell(\ll)}\sum_{\mu,\nu\vdash|\ll|}(-1)^{|\ll|}
\frac{c_\a^{\ell(\nu)}}{(-c_\a-c_\b)^{\ell(\mu)+\ell(\nu)+\ell(\ll)+2g-2}}h_g(\ll,\mu,\nu),
\\
\frac{\left\langle\prod_{j=1}^\ell\tr X^{-\ll_j}\right\rangle}{z_\ll}&=
\frac 1{|\ll|!}\sum_{g\geq 0}N^{2-2g-\ell(\ll)}\sum_{\mu,\nu\vdash n}(-1)^{|\ll|}
\frac {\left(1-c_\a-c_\b\right)^{\ell(\nu)}}{\left(c_\a-1\right)^{\ell(\mu)+\ell(\nu)+\ell(\ll)+2g-2}}h_g(\ll,\mu,\nu).
\end{align}
The proof is complete.\hfill$\square$

\begin{remark}
\label{remLUE2}
Let us note that letting $c_\b\to\infty$ in the functions $G^\pm$ of Corollary \ref{cormain} we have $G^+(z)\to (1+z)(1+z/c_\a)$ and $G^-(z)\to (1+z)/(1-z/(c_\a-1))$. The Hurwitz numbers corresponding to these limit functions can be identified as in Proposition \ref{propinterpretationhurwitz} in terms of double strictly ($+$) or weakly ($-$) Hurwitz numbers, respectively. Thus, bearing in mind the scaling limit for $\b\to\infty$ of JUE correlators to the correlators of the Laguerre Unitary Ensemble of Remark \ref{LUElimit}, the Theorem \ref{thmhurwitz} recovers the results of \cite{CDO2018}.
\end{remark}

\section{Computing correlators of Hermitian models}\label{sec3}

In this section we prove\footnote{The proof presented in the published version of this preprint is based on a different strategy but it contains an error. Here we give a different proof.} Theorem~\ref{thmcumulants}. 
The strategy is based on the observation that setting
\be
\mathscr Z_N(t,z):=\int_{\H_N(I)}\exp\bigl(\tr(V(X)+\sum_{i=1}^\ell t_i(z_i-X)^{-1})\bigr)\d X,
\quad t=(t_1,\dots,t_\ell),\ z=(z_1,\dots,z_\ell),
\ee
we have
\be
\label{key}
\mathscr C^{\sf c}(z_1,\dots,z_\ell)=\frac{\pa^\ell}{\pa t_1\cdots\pa t_\ell}\log\mathscr Z(t,z)\biggr|_{t_i=0}.
\ee
Here and below it is assumed that $z_i\not\in I$.

\subsection{Orthogonal polynomials on the real line and unitary-invariant ensembles}

We denote $P_\ell(z)$ the monic orthogonal polynomials, $h_\ell=\int_IP^2_\ell(x)\e^{V(x)}\d x$, see \eqref{monicOPintro}, and
\be
\label{CauchyTransform}
\wh P_\ell(z) := \frac{1}{2 \pi \i}\int_IP_\ell(x)\e^{V(x)}\frac{\d x}{x-z}
\ee
their \emph{Cauchy transforms}. The matrix 
\be
\label{YN}
Y_N(z):=
\renewcommand{\arraystretch}{1.25}\left(\begin{array}{cc}
P_N(z) & \wh P_N(z)
\\
-\frac{2\pi\i}{h_{N-1}} P_{N-1}(z) & -\frac{2\pi\i}{h_{N-1}}\wh P_{N-1}(z)
\end{array}\right),
\ee
introduced in \eqref{Ymatrix}, is an analytic function of $z\in\C\setminus I$. It satisfies the \emph{jump condition}
\be
\label{jumpY}
Y_{N,+}(x)=Y_{N,-}(x)\begin{pmatrix}
1 & \e^{V(x)} \\ 0 & 1 
\end{pmatrix},
\qquad x\in I^\circ,
\ee
where we use the notation
\be
Y_{N,\pm}(x)=\lim_{\epsilon\to 0_+}Y_N(x\pm\i\epsilon),\qquad x\in I^\circ,
\ee
and $I^\circ$ is the interior of the interval $I$. As $z\to\infty$ we have
\be
\label{growthYinfinity}
Y_{N}(z)=\left(\1+\O(z^{-1})\right)z^{N\s_3},
\ee
where we denote $\mathbf 1=\begin{pmatrix}
1& 0 \\ 0 & 1
\end{pmatrix}$ and
$\s_3=\begin{pmatrix}
1& 0 \\ 0 & -1
\end{pmatrix}$. Lastly, we recall the Christoffel--Darboux identity
\be
\label{CDdef}
K_N(x,y):=\e^{\frac{V(x)+V(y)}2}\sum_{i=0}^{N-1}\frac{P_i(X)P_i(y)}{h_i}=\frac{\e^{\frac{V(x)+V(y)}2}}{h_{N-1}}\frac{P_N(x)P_{N-1}(y)-P_{N-1}(x)P_{N}(y)}{x-y}\,,
\ee
expressing the kernel (\emph{Christoffel--Darboux kernel}) of the orthogonal projector onto the space of polynomials of degree $<N$ in the Hilbert space $L^2(I,\e^{V(x)}\d x)$.
Therefore, the Christoffel--Darboux kernel can be conveniently rewritten in terms of the matrix $Y_N(z)$ in \eqref{YN} as
\begin{equation}
\label{CD}
K_N(x,y)=
-\frac{\e^{\frac{V(x)+V(y)}2}}{2\pi\i(x-y)}\begin{pmatrix}0 & 1 \end{pmatrix}Y^{-1}_N(x)Y_N(y)\begin{pmatrix}1 \\ 0\end{pmatrix},
\end{equation}
which is independent of the choice of boundary value of $Y_N$ because of \eqref{jumpY}.

Next, we need to recall the connection of orthogonal polynomials to the theory of unitary-invariant ensembles of random matrices.
The main point which is relevant for our present purposes is that~\cite{Deift}
\be
\int_{\H_N(I)}\exp\bigl(\tr(V(X))\bigr)\d X = N!h_0^V\cdots h_{N-1}^V\,,
\ee
where it is convenient to explicitly express the dependence of $P_\ell=P_\ell^V$ and $h_\ell=h_\ell^V$ on the potential $V$.
Therefore, introducing the modified potential
\be
\label{Vtz}
V_{t,z}(x):=V(x)+\sum_{i=1}^\ell \frac{t_i}{z_i-x},\qquad t=(t_1,\dots,t_\ell),\ \ z=(z_1,\dots,z_\ell),
\ee
we have
\be
\label{firstrel}
\mathscr Z_N(t,z)=N!h_0^{V_{t,z}}\cdots h_{N-1}^{V_{t,z}}.
\ee
\begin{lemma}
\label{lemmasecondrel}
We have
\be
\label{secondrel}
\pa_{t_j}h_i^{V_{t,z}}=\int_{I}\bigl(P_i^{V_{t,z}}(x)\bigr)^2\e^{V_{t,z}(x)}\frac{\d x}{z_j-x}.
\ee
\end{lemma}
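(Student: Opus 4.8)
The plan is to differentiate the defining integral
\be
\label{hidef}
h_i^{V_{t,z}}=\int_I\bigl(P_i^{V_{t,z}}(x)\bigr)^2\e^{V_{t,z}(x)}\,\d x
\ee
with respect to $t_j$ and to show that, of the two resulting contributions, one vanishes by orthogonality and the other is precisely the claimed right-hand side. First I would record that, for $t$ in a neighbourhood of $0$, the monic orthogonal polynomials $P_i^{V_{t,z}}$ exist and depend analytically on $t$: the relevant Hankel determinants are non-zero at $t=0$ and the moments $\int_I x^m\e^{V_{t,z}(x)}\,\d x$ are analytic in $t$, since $z_i\notin I$ makes each $x\mapsto(z_i-x)^{-1}$ bounded and smooth on $I$. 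In particular $\pa_{t_j}P_i^{V_{t,z}}$ is well defined, and since $P_i^{V_{t,z}}$ is \emph{monic of degree $i$} for every value of $t$, its leading coefficient is the constant $1$, so $\pa_{t_j}P_i^{V_{t,z}}$ is a polynomial of degree at most $i-1$.

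Next I would justify differentiating \eqref{hidef} under the integral sign. The perturbation $\sum_i t_i(z_i-x)^{-1}$ and all its $t$-derivatives are bounded on $I$, so near a finite endpoint $x_0$ of $I$ the integrand and its $t_j$-derivative are $\O\!\left(|x-x_0|^{-1+\varepsilon}\right)$ by the standing hypothesis $\e^{V(x)}=\O\!\left(|x-x_0|^{-1+\varepsilon}\right)$ (the extra factors being bounded), and the decay at infinity, if $I$ is unbounded, is likewise unaffected; this provides a local dominating function and legitimizes the interchange.

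Carrying out the differentiation and using the product rule gives
\be
\pa_{t_j}h_i^{V_{t,z}}
=2\int_I P_i^{V_{t,z}}(x)\,\bigl(\pa_{t_j}P_i^{V_{t,z}}(x)\bigr)\,\e^{V_{t,z}(x)}\,\d x
+\int_I\bigl(P_i^{V_{t,z}}(x)\bigr)^2\,\bigl(\pa_{t_j}V_{t,z}(x)\bigr)\,\e^{V_{t,z}(x)}\,\d x.
\ee
In the first integral, $\pa_{t_j}P_i^{V_{t,z}}$ has degree $\le i-1$ by the observation above, while by the orthogonality relation \eqref{monicOPintro} applied with potential $V_{t,z}$ the polynomial $P_i^{V_{t,z}}$ is orthogonal in $L^2(I,\e^{V_{t,z}}\,\d x)$ to every polynomial of degree $<i$; hence this term vanishes. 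In the second integral, $\pa_{t_j}V_{t,z}(x)=(z_j-x)^{-1}$ directly from the definition \eqref{Vtz}, so the term equals $\int_I\bigl(P_i^{V_{t,z}}(x)\bigr)^2\e^{V_{t,z}(x)}\,(z_j-x)^{-1}\,\d x$, which is exactly \eqref{secondrel}.

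The argument has no real obstacle; the only points requiring a word of care are the analytic dependence of the orthogonal polynomials on $t$ near $0$ (so that $\pa_{t_j}P_i^{V_{t,z}}$ makes sense and stays of degree $<i$) and the interchange of $\pa_{t_j}$ with the integral at a finite endpoint, both handled as above. The decisive structural input is simply that the $P_i$ are \emph{monic}, which forces $\pa_{t_j}P_i$ into the orthogonal complement and collapses the first contribution.
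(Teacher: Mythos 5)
Your argument coincides with the paper's own proof: differentiate $h_i^{V_{t,z}}=\int_I(P_i^{V_{t,z}})^2\e^{V_{t,z}}\,\d x$ under the integral, observe that monicity forces $\pa_{t_j}P_i^{V_{t,z}}$ to have degree $<i$ so the first term dies by orthogonality, and identify $\pa_{t_j}V_{t,z}(x)=(z_j-x)^{-1}$ in the second. The extra care you take over analyticity in $t$ and the interchange of derivative and integral is sound but goes beyond what the paper records.
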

\begin{proof}
We have $h_i^{V_{t,z}}=\int_{I}\bigl(P_i^{V_{t,z}}(x)\bigr)^2\e^{V_{t,z}(x)}\d x$ hence
\be
\pa_{t_j}h_i^{V_{t,z}}=2\int_{I}P_i^{V_{t,z}}(x)\left(\pa_{t_j}P_i^{V_{t,z}}(x)\right)\e^{V_{t,z}(x)}\d x+\int_{I}\bigl(P_i^{V_{t,z}}(x)\bigr)^2\e^{V_{t,z}(x)}\left(\pa_{t_j}V_{t,z}(x)\right)\d x\,,
\ee
but the first term vanishes by orthogonality because $P_i^{V_{t,z}}(x)$ are normalized to be monic and, therefore, $\pa_{t_j}P_i^{V_{t,z}}(x)$ is a polynomial of degree strictly less than~$i$.
\end{proof}

\subsection{Case \texorpdfstring{$\ell=1$}{l=1}}

It follows from \eqref{CD} that 
\be
\label{eqrho1}
K_N(x,x)=\lim_{y\to x}K_N(x,y)=\frac{\e^{V(x)}}{2\pi\i}\begin{pmatrix}0 & 1 \end{pmatrix}Y^{-1}_N(x)Y_N'(x)\begin{pmatrix}1 \\ 0\end{pmatrix}.
\ee
In the following we shall use the notation 
\be
\Delta f(x):=f_+(x)-f_-(x),\qquad x\in I^\circ,
\ee
for the jump of a function~$f$ across~$I$, namely $f_\pm(x):=\lim_{\epsilon\to 0_+}f(x\pm\i\epsilon)$.
The next lemma is well known, see e.g.~\cite{CGM2015}, and it is proven here for the reader's convenience.

\begin{lemma}
\label{lemmajumponepoint}
We have
\be
K_N(x,x)=-\frac 1{2\pi\i}\Delta\left[\tr\left(Y^{-1}_N(x)\frac{\pa Y_N(x)}{\pa x}\mathrm{E}_{1,1}\right)\right],\qquad \mathrm{E}_{1,1}:=\begin{pmatrix}
1 & 0 \\ 0 & 0
\end{pmatrix}
\ee
\end{lemma}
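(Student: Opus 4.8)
The plan is to start from the representation \eqref{eqrho1} of the Christoffel--Darboux kernel on the diagonal and to rewrite the scalar quantity $\begin{pmatrix}0 & 1\end{pmatrix}Y_N^{-1}(x)Y_N'(x)\begin{pmatrix}1\\0\end{pmatrix}$ as the $(2,1)$-entry of the matrix $Y_N^{-1}(x)Y_N'(x)$, and then to recognize this entry as (minus) the jump across $I$ of $\tr\bigl(Y_N^{-1}(x)Y_N'(x)\mathrm{E}_{1,1}\bigr)$, which equals the $(1,1)$-entry of $Y_N^{-1}(x)Y_N'(x)$. So the heart of the matter is the identity, valid for $x\in I^\circ$,
\be
\Delta\left[\left(Y_N^{-1}(x)Y_N'(x)\right)_{1,1}\right]=-2\pi\i\,\e^{V(x)}\left(Y_{N,-}^{-1}(x)Y_{N,-}'(x)\right)_{2,1},
\ee
after which one only needs $\begin{pmatrix}0&1\end{pmatrix}Y_{N,-}^{-1}(x)Y_{N,-}'(x)\begin{pmatrix}1\\0\end{pmatrix}=\bigl(Y_{N,-}^{-1}(x)Y_{N,-}'(x)\bigr)_{2,1}$, which is immediate, together with the observation from \eqref{eqrho1} that $K_N(x,x)$ is the same whether one uses the $+$ or $-$ boundary value (because $\begin{pmatrix}0&1\end{pmatrix}$ and $\begin{pmatrix}1\\0\end{pmatrix}$ annihilate the off-diagonal jump, exactly as for \eqref{CD}).

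First I would differentiate the jump relation \eqref{jumpY}. Writing $J(x)=\begin{pmatrix}1 & \e^{V(x)}\\0&1\end{pmatrix}$ we have $Y_{N,+}=Y_{N,-}J$, hence $Y_{N,+}'=Y_{N,-}'J+Y_{N,-}J'$ with $J'(x)=\e^{V(x)}V'(x)\mathrm{E}_{1,2}$, where $\mathrm{E}_{1,2}=\begin{pmatrix}0&1\\0&0\end{pmatrix}$. Therefore
\be
Y_{N,+}^{-1}Y_{N,+}'=J^{-1}\left(Y_{N,-}^{-1}Y_{N,-}'\right)J+J^{-1}J',
\ee
and since $J^{-1}J'=\e^{V}V'\,\mathrm{E}_{1,2}$ is strictly upper-triangular it contributes nothing to the $(1,1)$-entry. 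Conjugation by the unipotent $J=\1+\e^{V}\mathrm{E}_{1,2}$ changes the $(1,1)$-entry of a matrix $M$ by exactly $-\e^{V(x)}M_{2,1}$ (the only term surviving from $J^{-1}MJ-M=[\,M,\e^V\mathrm{E}_{1,2}\,]+\O(\e^{2V})$ in the $(1,1)$-slot, and here there is no $\O(\e^{2V})$ term because $\mathrm{E}_{1,2}^2=0$, so the computation is exact). Thus $\bigl(Y_{N,+}^{-1}Y_{N,+}'\bigr)_{1,1}-\bigl(Y_{N,-}^{-1}Y_{N,-}'\bigr)_{1,1}=-\e^{V(x)}\bigl(Y_{N,-}^{-1}Y_{N,-}'\bigr)_{2,1}$, which is the displayed identity above.

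Combining: $\bigl(Y_{N,-}^{-1}Y_{N,-}'\bigr)_{2,1}$ equals $\begin{pmatrix}0&1\end{pmatrix}Y_{N,-}^{-1}Y_{N,-}'\begin{pmatrix}1\\0\end{pmatrix}$, and \eqref{eqrho1} (evaluated with the $-$ boundary value, which is legitimate as noted) gives $K_N(x,x)=\frac{\e^{V(x)}}{2\pi\i}\bigl(Y_{N,-}^{-1}Y_{N,-}'\bigr)_{2,1}=-\frac1{2\pi\i}\,\Delta\bigl[\bigl(Y_N^{-1}Y_N'\bigr)_{1,1}\bigr]$, and finally $\bigl(Y_N^{-1}Y_N'\bigr)_{1,1}=\tr\bigl(Y_N^{-1}Y_N'\mathrm{E}_{1,1}\bigr)$, yielding the claim. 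The only subtlety — the "main obstacle", though it is modest — is being careful that the conjugation-by-$J$ computation of the change in the $(1,1)$-entry is exact (not merely first order), which holds precisely because $\mathrm{E}_{1,2}$ is nilpotent of order two; and checking that $K_N(x,x)$ in \eqref{eqrho1} is independent of the choice of boundary value, which follows for exactly the same reason it does in \eqref{CD}, namely the vectors $\begin{pmatrix}0&1\end{pmatrix}$ and $\begin{pmatrix}1\\0\end{pmatrix}$ kill the rank-one jump.
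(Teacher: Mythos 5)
Your proof is correct and follows essentially the same route as the paper's: differentiate the jump relation \eqref{jumpY}, note that the inhomogeneous term $J^{-1}J'$ contributes nothing to the $(1,1)$-entry, and compute the exact change of that entry under conjugation by the unipotent jump matrix to be $-\e^{V(x)}\bigl(Y_{N,-}^{-1}(x)Y_{N,-}'(x)\bigr)_{2,1}$, which matches \eqref{eqrho1} (the paper phrases the same computation via cyclicity of the trace, moving the conjugation onto $\mathrm{E}_{1,1}$). The only blemish is a stray factor of $2\pi\i$ in your displayed ``heart of the matter'' identity, which should read $\Delta\bigl[\bigl(Y_N^{-1}(x)Y_N'(x)\bigr)_{1,1}\bigr]=-\e^{V(x)}\bigl(Y_{N,-}^{-1}(x)Y_{N,-}'(x)\bigr)_{2,1}$ --- this is the version you actually derive and use in the final chain, so the argument stands.
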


\begin{proof}
Let us denote $':=\pa_x$.
It follows from the jump condition \eqref{jumpY} for $Y_N$ that
\be
Y'_{N,+}(x)=Y'_{N,-}(x)\begin{pmatrix}
1 & \e^{V(x)} \\ 0 & 1 
\end{pmatrix}+Y_{N,-}(x)\begin{pmatrix}
0 & V'(x)\e^{V(x)} \\ 0 & 0
\end{pmatrix},
\qquad x\in I^\circ.
\ee
Therefore we compute
\begin{align}
\nonumber
&\Delta\left[\tr\left(Y^{-1}_N(x)Y_N'(x)\mathrm{E}_{1,1}\right)\right]=\tr\left(Y^{-1}_{N,+}(x)Y_{N,+}'(x)\mathrm{E}_{1,1}\right)-\tr\left(Y^{-1}_{N,-}(x)Y_{N,-}'(x)\mathrm{E}_{1,1}\right)
\\
\nonumber
&\qquad
=\tr\left(
\begin{pmatrix}
1 & -\e^{V(x)} \\ 0 & 1
\end{pmatrix}
Y^{-1}_{N,-}(x)Y_{N,-}'(x)
\begin{pmatrix}
1 & \e^{V(x)} \\ 0 & 1
\end{pmatrix}\mathrm{E}_{1,1}\right)-
\tr\left(Y^{-1}_{N,-}(x)Y_{N,-}'(x)\mathrm{E}_{1,1}\right)\\
&\qquad\qquad\qquad\qquad\qquad\qquad\qquad\qquad\qquad
+
\tr\left(
\begin{pmatrix}
1 & -\e^{V(x)} \\ 0 & 1
\end{pmatrix}
\begin{pmatrix}
0 & V'(x)\e^{V(x)} \\ 0 & 0
\end{pmatrix}\mathrm{E}_{1,1}\right)
\end{align}
The last term vanishes and so, by the cyclic property of the trace, we have
\be
\Delta\left[\tr\left(Y^{-1}_N(x)Y_N'(x)\mathrm{E}_{1,1}\right)\right]=\tr\left[Y^{-1}_{N,-}(x)Y_{N,-}'(x)\left(\begin{pmatrix}
1 & \e^{V(x)} \\ 0 & 1
\end{pmatrix}\mathrm{E}_{1,1}\begin{pmatrix}
1 & -\e^{V(x)} \\ 0 & 1
\end{pmatrix}-\mathrm{E}_{1,1}\right)\right]
\ee
which is easily seen to be equivalent, up to multiplying by~$-1/(2\pi\i)$, to~\eqref{eqrho1}.
\end{proof}

We are ready for the proof of the case $\ell=1$. In such case, $t=t_1,z=z_1$ and $V_{t,z}(x)=V(x)+t/(z-x)$.
By~\eqref{firstrel}, Lemma~\ref{lemmasecondrel}, and~\eqref{CDdef}, we have
\be
\pa_{t}\log\mathscr Z_N(t,z)=\sum_{i=0}^{N-1}\frac 1{h_i^{V_{t,z}}}\pa_{t} h_{i}^{V_{t,z}}
=\sum_{i=0}^{N-1}\frac 1{h_i^{V_{t,z}}}\int_{I}\bigl(P_i^{V_{t,z}}(x)\bigr)^2\e^{V_{t,z}(x)}\frac{\d x}{z-x}
=\int_I K_N^{V_{t,z}}(x,x)\frac {\d x}{z-x},
\ee
where we denote explicitly the dependence of the Christoffel--Darboux kernel on the potential.
Let $\G$ be an oriented contour in the complex plane which surrounds $I$ in counterclockwise sense (i.e., $I$ lies on the left of $\G$) and leaves $z$ outside (i.e., $z$ lies to the right of $\G$).
Then, using Lemma~\ref{lemmajumponepoint} we get
\begin{align}
\nonumber
\pa_{t}\log\mathscr Z_N(t,z)&=-\int_I \Delta\left[\tr\left(Y^{-1}_N(x;t,z)\frac{\pa Y_N(x;t,z)}{\pa x}\mathrm{E}_{1,1}\right)\right]\frac{\d x}{2\pi\i(z-x)}
\\
&= \int_\Gamma \tr\left(Y^{-1}_N(x;t,z)\frac{\pa Y_N(x;t,z)}{\pa x}\mathrm{E}_{1,1}\right)\frac{\d x}{2\pi\i(z-x)},
\end{align}
where $Y_N(\cdot;t,z)$ is the matrix~\eqref{YN} for the potential $V_{t,z}$. The last contour integral can be evaluated by a residue computation as
\be
\pa_{t}\log\mathscr Z_N(t,z)=\left(-\res{x=z}-\res{x=\infty}\right)\tr\left(Y^{-1}_N(x;t,z)\frac{\pa Y_N(x;t,z)}{\pa x}\mathrm{E}_{1,1}\right)\frac{\d x}{z-x}.
\ee
It can be checked from~\eqref{growthYinfinity} that the residue at $x=\infty$ vanishes. Therefore
\be
\label{eq:finalell1}
\pa_{t}\log\mathscr Z_N(t,z)=\tr\left(Y^{-1}_N(z;t,z)\left.\frac{\pa Y_N(x;t,z)}{\pa x}\right|_{x=z}\mathrm{E}_{1,1}\right).
\ee
Evaluating this identity at $t=0$, taking into account~\eqref{key}, we obtain exactly~\eqref{OnePointThm}.

\subsection{Case \texorpdfstring{$\ell=2$}{l=2}}

Let us first formulate a result that will be needed for all $\ell\geq 2$.
Let
\be
R(x;z,t):=Y_N(x;t,z)\mathrm{E}_{1,1}Y_N^{-1}(x;t,z).
\ee

\begin{lemma}
Let $V_{t,z}(x)=V(x)+\sum_{i=1}^\ell\tfrac{t_i}{z_i-x}$, $t=(t_1,\dots,t_\ell)$, and $z=(z_1,\dots,z_\ell)$.
For all $1\leq j\leq\ell$, we have
\be
\label{eqlemmanojump}
\frac 1{z_j-x}R(x;t,z)+\left(\frac{\pa Y_N}{\pa t_j}(x;t,z)\right)Y_N^{-1}(x;t,z)=\frac 1{z_j-x}R(z_j;t,z).
\ee
\end{lemma}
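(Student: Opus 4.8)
The plan is to mimic the $\ell=1$ analysis: first establish that the left-hand side of \eqref{eqlemmanojump} extends to an analytic function of $x$ across the interval $I$, then show it is rational in $x$ with poles only at $x=z_j$ and $x=\infty$, compute the relevant residues, and conclude by Liouville's theorem. Throughout, fix $t,z$ with $z_i\notin I$ and suppress these arguments.

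First I would show the jump vanishes. Write $M(x):=\tfrac{1}{z_j-x}R(x)+\bigl(\pa_{t_j}Y_N(x)\bigr)Y_N^{-1}(x)$. Differentiating the jump relation $Y_{N,+}=Y_{N,-}\begin{pmatrix}1 & \e^{V_{t,z}}\\ 0 & 1\end{pmatrix}$ with respect to $t_j$ gives $\pa_{t_j}Y_{N,+}=\bigl(\pa_{t_j}Y_{N,-}\bigr)\begin{pmatrix}1 & \e^{V_{t,z}}\\ 0 & 1\end{pmatrix}+Y_{N,-}\begin{pmatrix}0 & \tfrac{\e^{V_{t,z}}}{z_j-x}\\ 0 & 0\end{pmatrix}$, since $\pa_{t_j}V_{t,z}(x)=\tfrac{1}{z_j-x}$. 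Using this together with $Y_{N,-}^{-1}Y_{N,+}^{-1}$-type manipulations, and the fact that $R(x)=Y_N\mathrm E_{1,1}Y_N^{-1}$ jumps by conjugation $R_+ = Y_{N,-}\begin{pmatrix}1 & \e^{V_{t,z}}\\ 0 & 1\end{pmatrix}\mathrm E_{1,1}\begin{pmatrix}1 & -\e^{V_{t,z}}\\ 0 & 1\end{pmatrix}Y_{N,-}^{-1}$, I would check that the $\e^{V_{t,z}}$-terms coming from $\pa_{t_j}Y_{N,+}\cdot Y_{N,+}^{-1}$ and from the jump of $\tfrac{1}{z_j-x}R(x)$ cancel exactly. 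This is the computational heart and the step I expect to be the main obstacle: it is a bookkeeping exercise with $2\times 2$ matrices, entirely analogous to the cancellation in Lemma \ref{lemmajumponepoint}, but one must be careful that the extra factor $\tfrac{1}{z_j-x}$ attached to $R$ matches the $\tfrac{1}{z_j-x}$ produced by $\pa_{t_j}V_{t,z}$. Once the jump is shown to vanish, $M(x)$ is analytic on $\C\setminus(\{z_j\}\cup\{\infty\})$, as both terms are individually analytic off $I\cup\{z_j\}$ (recall $Y_N$ and $\pa_{t_j}Y_N$ are entire off $I$ once $z_j\notin I$).

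Next I would analyze the singular points. At $x=\infty$, the expansion \eqref{growthYinfinity}, namely $Y_N(x)=(\1+O(x^{-1}))x^{N\s_3}$, gives $\pa_{t_j}Y_N\cdot Y_N^{-1}=O(x^{-1})$ (the leading $x^{N\s_3}$ is $t$-independent), and $R(x)\to\begin{pmatrix}1 & *\\ 0 & 0\end{pmatrix}$-type behaviour with the $\tfrac{1}{z_j-x}$ prefactor forcing $\tfrac{1}{z_j-x}R(x)=O(x^{-1})$; hence $M(\infty)=0$. At $x=z_j$, the term $\pa_{t_j}Y_N(x)$ has a simple pole: indeed from the definition of $Y_N$ via the orthogonal polynomials $P_\ell^{V_{t,z}}$, the $t_j$-dependence enters through $V_{t,z}$ and the Cauchy transforms, producing a simple pole at $x=z_j$ whose residue I would compute and match against the simple pole of $\tfrac{1}{z_j-x}R(x)$, which has residue $-R(z_j)$. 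Showing that these two simple-pole contributions cancel identically is the other key verification; it can be done either by a direct residue computation using the integral representation \eqref{CauchyTransform} applied to $V_{t,z}$, or — more cleanly — by noting that $M(x)$ is then analytic and bounded on all of $\C$, hence constant by Liouville, and the constant is $0$ by the behaviour at infinity, which forces the residue at $z_j$ to vanish. Either way, $M(x)\equiv 0$ away from $x=z_j$, and evaluating the identity $M(x)=0$ and rearranging gives precisely \eqref{eqlemmanojump} after identifying the constant $R(z_j)$ as the value obtained by subtracting the pole; I would present it in the symmetric form stated, namely $\tfrac{1}{z_j-x}R(x)+\bigl(\pa_{t_j}Y_N\bigr)Y_N^{-1}=\tfrac{1}{z_j-x}R(z_j)$, which is the content of the lemma.
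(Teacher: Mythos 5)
Your overall strategy --- show that the jump of $M(x):=\tfrac 1{z_j-x}R(x)+\bigl(\pa_{t_j}Y_N(x)\bigr)Y_N^{-1}(x)$ across $I$ vanishes and then pin $M$ down by its singularities --- is exactly the paper's, and the first half of your argument (the cancellation in $\Delta M$ coming from $\pa_{t_j}V_{t,z}(x)=\tfrac 1{z_j-x}$, and the decay $M(x)=\O(x^{-1})$ at infinity from \eqref{growthYinfinity}) is sound. The genuine gap is in your analysis at $x=z_j$. You assert that $\bigl(\pa_{t_j}Y_N\bigr)Y_N^{-1}$ has a simple pole at $x=z_j$ which cancels the simple pole of $\tfrac 1{z_j-x}R(x)$, and you conclude $M\equiv 0$. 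This cannot be right: the identity you are proving states $M(x)=\tfrac 1{z_j-x}R(z_j)$, which is not identically zero (indeed $\tr R(z_j)=1$). The premise is also false: differentiating the entries of $Y_N$ in $t_j$, the polynomial entries remain polynomials in $x$, while for the Cauchy-transform entries \eqref{CauchyTransform} the $t_j$-derivative either hits the coefficients of $P_\ell^{V_{t,z}}$ or produces a factor $\tfrac 1{z_j-x'}$ \emph{inside} the integral over $x'\in I$; since $z_j\notin I$ the resulting function is still analytic in $x$ off $I$, in particular at $x=z_j$. Hence $\bigl(\pa_{t_j}Y_N\bigr)Y_N^{-1}$ is regular at $z_j$, and $M$ has exactly one simple pole, at $x=z_j$, with residue $-R(z_j)$ contributed entirely by the first term.

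The correct conclusion of the Liouville step is therefore: $M$ is meromorphic on the Riemann sphere, vanishes at $\infty$, and has a single simple pole at $z_j$ with residue $-R(z_j)$, so $M(x)=\tfrac{-R(z_j)}{x-z_j}=\tfrac{R(z_j)}{z_j-x}$, which is precisely \eqref{eqlemmanojump}. With the pole analysis repaired in this way your proof coincides with the paper's. (A minor point that both you and the paper leave implicit: after the jump is shown to vanish on $I^\circ$, one should also note that the endpoints of $I$ are removable singularities, which follows from the local boundedness built into the Riemann--Hilbert characterization of $Y_N$.)
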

\begin{proof}
Let us denote by $\Omega_j(x;t,z)$ the left-hand side of~\eqref{eqlemmanojump}.
Using~\eqref{jumpY} we get the identities
\begin{align}
Y_N(x_+;t,z)&=Y_N(x_-;t,z)\begin{pmatrix}
1 & \e^{V_{t,z}(x)} \\ 0 & 1
\end{pmatrix}\,,\\
\frac{\pa Y_N}{\pa t_j}(x_+;t,z)&=\frac{\pa Y_N}{\pa t_j}(x_-;t,z)\begin{pmatrix}
1 & \e^{V_{t,z}(x)} \\ 0 & 1
\end{pmatrix}+Y_N(x_-;t,z)\begin{pmatrix}
0 & \frac 1{z_j-x}\e^{V_{t,z}(x)} \\ 0 & 0
\end{pmatrix}\,,
\end{align}
from which we readily ascertain that $\Delta\Omega_j(x;t,z)=0$ for all $x\in\R$.
Hence, $\Omega_j(x;t,z)$ is a meromorphic function of $x$ with a single simple pole at $x=z_j$ and which vanishes at $x=\infty$, because of~\eqref{growthYinfinity}, and so the statement follows.
\end{proof}

Let us consider the case $\ell=2$, in which $t=(t_1,t_2)$, $z=(z_1,z_2)$, and $V_{t,z}(x)=V(x)+\tfrac{t_1}{z_1-x}+\tfrac{t_2}{z_2-x}$.
By the argument used for $\ell=1$, cf.~\eqref{eq:finalell1}, we obtain
\be
\pa_{t_1}\log\mathscr Z_N(t,z)=\tr\left(Y^{-1}_N(z_1;t,z)\left.\frac{\pa Y_N(x;t,z)}{\pa x}\right|_{x=z_1}\mathrm{E}_{1,1}\right).
\ee
Next we have to take a derivative in $t_2$: omitting the explicit dependence on $t,z$, we have
\be
\pa_{t_2}\pa_{t_1}\log\mathscr Z_N(t,z)=
\tr\biggl(-Y^{-1}_N(z_1)\frac{\pa Y_N}{\pa t_2}(z_1)Y_N^{-1}(z_1)\left.\frac{\pa Y_N(x)}{\pa x}\right|_{x=z_1}\mathrm{E}_{1,1}
+Y^{-1}_N(z_1)\left.\frac{\pa^2 Y_N(x)}{\pa t_2\pa x}\right|_{x=z_1}\mathrm{E}_{1,1}\biggr).
\ee
We use~\eqref{eqlemmanojump} to rewrite the first term inside the trace in the right-hand side as
\be
\label{eq:duepunti1}
-Y^{-1}_N(z_1)\frac{R(z_2)-R(z_1)}{z_2-z_1}\left.\frac{\pa Y_N(x)}{\pa x}\right|_{x=z_1}\mathrm{E}_{1,1}
\ee
and the second term as
\begin{align}
\nonumber
Y^{-1}_N(z_1)\frac{\pa^2 Y_N(x)}{\pa x\pa t_2}\bigg|_{x=z_1}\mathrm{E}_{1,1}
&=
Y^{-1}_N(z_1)\frac{\pa}{\pa x}\biggl(\frac{R(z_2)-R(x)}{z_2-x}Y_N(x)\biggr)\bigg|_{x=z_1}\mathrm{E}_{1,1}
\\
\nonumber
&=Y^{-1}_N(z_1)\biggl(
\frac{R(z_2)-R(z_1)}{(z_2-z_1)^2}Y_N(z_1)
-\frac{\biggl[\frac{\pa Y_N(x)}{\pa x}\bigg|_{x=z_1}Y^{-1}_N(z_1),R(z_1)\biggr]}{z_2-z_1}Y_N(z_1)
\\
\label{eq:duepunti2}
\nonumber
&\qquad\qquad\qquad\qquad\qquad
+\frac{R(z_2)-R(z_1)}{z_2-z_1}\frac{\pa Y_N(x)}{\pa x}\bigg|_{x=z_1}\biggr)\mathrm{E}_{1,1},
\end{align}
where $[A,B]:=AB-BA$ is the commutator.
The term in the last row exactly cancels with~\eqref{eq:duepunti1}, and so, rearranging terms
\be
\label{toberewritten}
\pa_{t_2}\pa_{t_1}\log\mathscr Z_N(t,z)=\frac{\tr\bigl(R(z_1)R(z_2)\bigr)-1}{(z_2-z_1)^2}+
\frac{\tr\left(\biggl[Y^{-1}_N(z_1)\frac{\pa Y_N(x)}{\pa x}\bigg|_{x=z_1},\mathrm{E}_{1,1}\biggr]\mathrm{E}_{1,1}\right)}{z_2-z_1}
\ee
and, since $\tr([A,B]B)=\tr([AB,B])=0$, the proof of the case $\ell=2$ is completed by setting $t_1=t_2=0$.

\subsection{Case \texorpdfstring{$\ell\geq 3$}{l>2}}

Let us denote
\be
S_\ell(z_1,\dots,z_\ell;t):=-\sum_{(i_1,\dots,i_\ell)\in\cyc((\ell))}\frac{\tr\left(R(z_{i_1};t,z)\cdots R(z_{i_\ell};t,z)\right)}{(z_{i_1}-z_{i_2})\cdots(z_{i_{\ell-1}}-z_{i_\ell})(z_{i_\ell}-z_{i_1})}-\frac{\delta_{\ell,2}}{(z_1-z_2)^2},
\ee
where the sum extends over cyclic permutations of $\{1,\dots,\ell\}$.
We aim at proving that
\be
\label{tobeprovedbyinduction}
\frac{\pa^\ell \log\mathscr Z_N(t,z)}{\pa t_{\ell}\cdots\pa t_{1}}=S_\ell(z_1,\dots,z_\ell;t).
\ee
where $Y_N(x;t,z)$, and so $R(x;t,z)$, are computed for the potential $V_{t,z}(x)=V(x)+\sum_{i=1}^\ell\frac{t_i}{z_i-x}$. Then, \eqref{MultiPointThm} follows by taking~$t_i=0$.
The proof of~\eqref{tobeprovedbyinduction} is  by induction on~$\ell\geq 2$ and it is similar in spirit to that in~\cite{BDY2016,BDY2018,GGR2020}.

Let us assume~\eqref{tobeprovedbyinduction} for $\ell$ and let us prove it for $\ell+1$. Since the potential $V$ is arbitrary, we can assume~\eqref{tobeprovedbyinduction} holds true for $V_{t,z}(x)=V(x)+\sum_{j=1}^{\ell+1}\frac{t_{j}}{z_{j}-x}$, and so we just have to show that $\pa_{t_{\ell+1}}S_\ell(z_1,\dots,z_\ell;t)$ is equal to $S_{\ell+1}(z_1,\dots,z_\ell,z_{\ell+1};t)$.
To this end we first observe that by~\eqref{eqlemmanojump}, we have
\be
\label{eq:dtR}
\frac{\pa R(x;t,z)}{\pa t_j}=\biggl[\frac{R(z_j;t,z)-R(x;t,z)}{z_j-x},R(x;t,z)\biggr]=\frac{[R(z_j;t,z),R(x;t,z)]}{z_j-x}.
\ee
Therefore,
\be
\frac{\pa S_\ell(z_1,\dots,z_\ell;t)}{\pa t_{\ell+1}}=-\sum_{(i_1,\dots,i_\ell)\in\cyc((\ell))}\sum_{j=1}^\ell\frac{\tr\left(R(z_{i_1};t,z)\cdots[R(z_{\ell+1};t,z),R(z_{i_j};t,z)]\cdots R(z_{i_\ell};t,z)\right)}{(z_{i_1}-z_{i_2})\cdots(z_{i_\ell}-z_{i_1})(z_{\ell+1}-z_{i_j})}
\ee
Expanding the commutator $[R(z_{\ell+1};t,z),R(z_{i_j};t,z)]=R(z_{\ell+1};t,z)R(z_{i_j};t,z)-R(z_{i_j};t,z)R(z_{\ell+1};t,z)$, we note that in the previous sum, each term involving the expression
\be
\tr\left(R(z_{i_1};t,z)\cdots R(z_{\ell+1};t,z)R(z_{i_j};t,z)\cdots R(z_{i_\ell};t,z)\right)
\ee
appears twice, but with different denominators.
Collecting such terms yields
\begin{align}
\nonumber
&\sum_{(i_1,\dots,i_\ell)\in\cyc((\ell))}\sum_{j=1}^\ell\frac{\tr\left(R(z_{i_1};t,z)\cdots R(z_{\ell+1};t,z)R(z_{i_j};t,z)\cdots R(z_{i_\ell};t,z)\right)}{(z_{i_1}-z_{i_2})\cdots(z_{i_\ell}-z_{i_1})}\biggl(\frac 1{z_{i_j}-z_{\ell+1}}-\frac 1{z_{i_{j-1}}-z_{\ell+1}}\biggr)
\\
\nonumber
&\qquad=-\sum_{(i_1,\dots,i_\ell)\in\cyc((\ell))}\sum_{j=1}^\ell\frac{\tr\left(R(z_{i_1};t,z)\cdots R(z_{\ell+1};t,z)R(z_{i_j};t,z)\cdots R(z_{i_\ell};t,z)\right)}{(z_{i_1}-z_{i_2})\cdots(z_{i_{j-1}}-z_{\ell+1})(z_{\ell+1}-z_{i_j}) \cdots (z_{i_\ell}-z_{i_1})}
\\
&\qquad=S_{\ell+1}(z_1,\dots,z_\ell,z_{\ell+1}),
\end{align}
where we set $i_0:=i_\ell$ in the internal summation. The proof is complete.

\begin{remark}
\label{remarkscalarseries}
We note here that since $R(z)$ is a rank one matrix, the formulae of Theorem \ref{thmcumulants} for $\mathscr C^{\sf c}_\ell$, $\ell\geq 2$, can be expressed in terms of the scalar quantities
\be
w(x,y):=\frac{2\pi\i}{h_{N-1}}\frac{\pi_N(x)\wh\pi_{N-1}(y)-\pi_{N-1}(x)\wh\pi_{N}(y)}{x-y}
\ee
as
\be
\mathscr C_\ell^{\sf c}(z_1,\dots,z_\ell)=-\sum_{(i_1,\dots,i_\ell)\in\cyc( (\ell) )}w(z_{i_1},z_{i_2})\cdots w(z_{i_{\ell-1}},z_{i_\ell})w(z_{i_\ell},z_{i_1})-\frac{\delta_{\ell,2}}{(z_1-z_2)^2},\quad\ell\geq 2,
\ee
compare for instance with \cite{DYZ2020,Y2020}.
\end{remark}

\section{JUE correlators and Wilson Polynomials}\label{sec4}

In this section we prove Corollary \ref{corollaryWilson}. This is done by expanding the general formul\ae\ of Theorem \ref{thmcumulants} as $z_i\to 0,\infty$.
To this end we consider the monic orthogonal polynomials for the Jacobi measure, which are the classical (monic) Jacobi polynomials
\be
\label{monicJacobi}
P^{\sf J}_\ell(z) = 
\frac{\ell!}{(\a+\beta+\ell+1)_\ell}\sum_{k=0}^{\ell} \binom{\ell+\a}k\binom{\ell+\beta}  {\ell-k}\left( z-1 \right )^k z^{\ell-k},
\ee
satisfying the orthogonality property
\be \label{NormingC}
\int_0^1 P^{\sf J}_\ell(x)P^{\sf J}_m(x)x^\a(1-x)^\b\d x=h_\ell^{\sf J}\delta_{\ell,m},\qquad 
h_\ell^{\sf J}=\frac{\ell!\,\Gamma (\a+\ell+1) \Gamma (\beta+\ell+1) \Gamma (\a+\beta+\ell+1)}{\G(\a+\beta+2\ell+1) \Gamma (\a+\beta+2\ell+2)}.
\ee

\subsection{Expansion of the matrix \texorpdfstring{$R$}{R}}

This paragraph is devoted to the proof of the following proposition.

\begin{proposition} \label{Rexpansions}
We have the Taylor expansion at $z=\infty$
\be
R(z)=T^{-1}R^{[\infty]}(z)T,\qquad |z|>1,
\ee
where $T$ is the constant matrix \eqref{Tmatrix} and $R^{[\infty]}(z)$ is the matrix-valued power series in $z^{-1}$ in \eqref{Rinftyintro}.
We have the Poincar\'e asymptotic expansion at $z=0$ uniformly within the sector $0<\arg z<2\pi$
\be
R(z)\sim T^{-1}R^{[0]}(z)T,
\ee
where $T$ is the constant matrix \eqref{Tmatrix} and $R^{[0]}(z)$ is the matrix-valued (formal) power series in $z$ in \eqref{Rinftyintro}.
\end{proposition}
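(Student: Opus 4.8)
The plan is to compute $R(z)=Y_N(z)\,\mathrm E_{1,1}\,Y_N^{-1}(z)$ explicitly from the entries of $Y_N$ and then expand each entry at $z=\infty$ and at $z=0$. Since $\det Y_N(z)\equiv 1$ (the Wronskian is constant, which follows from the jump \eqref{jumpY} together with \eqref{growthYinfinity}), one has $Y_N^{-1}=\mathrm{adj}\,Y_N$, and because $\mathrm E_{1,1}$ has rank one this gives the four entries of $R(z)$ as bilinear expressions in the monic Jacobi polynomials $P_N^{\sf J},P_{N-1}^{\sf J}$ and their Cauchy transforms $\widehat P_N^{\sf J},\widehat P_{N-1}^{\sf J}$: up to signs and the constants $h_{N-1}^{\sf J},2\pi\i$, one has $R_{1,1}\propto P_N^{\sf J}\widehat P_{N-1}^{\sf J}$, $R_{2,2}\propto P_{N-1}^{\sf J}\widehat P_N^{\sf J}$, $R_{1,2}\propto P_N^{\sf J}\widehat P_N^{\sf J}$, $R_{2,1}\propto P_{N-1}^{\sf J}\widehat P_{N-1}^{\sf J}$, with $R_{1,1}+R_{2,2}=1$ again by the Wronskian identity. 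The diagonal conjugation by the constant matrix $T$ of \eqref{Tmatrix} rescales only the off-diagonal entries and absorbs the normalizing constants $h_{N-1}^{\sf J},h_N^{\sf J}$ and $2\pi\i$, turning the coefficients into rational functions of $N,\a,\b$; thus it suffices to prove that the so-normalized entries have the expansions recorded in $R^{[\infty]}(z)$ and $R^{[0]}(z)$.

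The second step is to insert hypergeometric representations. The monic Jacobi polynomial \eqref{monicJacobi} is, in the variable $1/z$ (respectively $z$, respectively $1-z$), a Gauss ${}_2F_1$; its Cauchy transform $\widehat P_\ell^{\sf J}(z)$ solves the same second-order ODE as $P_\ell^{\sf J}$ (the scalar equation underlying $Y_N'=A(z)Y_N$) and is, up to an explicit algebraic prefactor, also a ${}_2F_1$ — a convergent series in $1/z$ at $z=\infty$, and near $z=0$ a branch of a ${}_2F_1$ in $z$ together with a non-analytic $(-z)^{\a}$-piece coming from the endpoint $x=0$ of the defining integral. At $z=\infty$ the four products are then Cauchy products of two ${}_2F_1$'s of argument $1/z$, one of them the terminating series $P_\ell^{\sf J}$; reindexing by $(x)_{\ell-k}=(-1)^k(x)_\ell/(1-x-\ell)_k$ shows that the coefficient of each power $z^{-\ell-1}$ is a terminating ${}_4F_3$ at argument $1$, which is exactly the class of balanced ${}_4F_3$'s realising Wilson polynomials \eqref{HyperW}. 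Matching with \eqref{AsBs} identifies the coefficients as $A_\ell(N),B_\ell(N)$, extending the one-point computation of \cite{CMOS2019} to every entry of $R$. To pass to $z=0$ I would apply Pfaff/Euler transformations and connection formulas to the underlying ${}_2F_1$'s; composing these produces precisely the Pochhammer ratio $\tfrac{(\a+\b+2N-\ell)_{2\ell+1}}{(\a-\ell)_{2\ell+1}}$ of \eqref{AsBstilde}, yielding $\widetilde A_\ell(N),\widetilde B_\ell(N)$.

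Finally, the analytic nature of the two expansions must be pinned down. At $z=\infty$ the series for $P_\ell^{\sf J}$ and $\widehat P_\ell^{\sf J}$ converge in $|z|>1$ — the nearest singularity of $R(z)$ being the other endpoint $z=1$ of the support — so $R(z)=T^{-1}R^{[\infty]}(z)T$ is a genuine Taylor expansion there. At $z=0$, which is a branch point of $R(z)$ (the Cauchy transforms jump across $(0,1)$ and carry the subdominant $(-z)^{\a}$-contribution noted above, which does not cancel in the individual entries, e.g. in $R_{1,1}$ it is proportional to $(1-z)^{\b}P_N^{\sf J}(z)P_{N-1}^{\sf J}(z)$), one obtains only a Poincar\'e asymptotic expansion $R(z)\sim T^{-1}R^{[0]}(z)T$, uniform as $z\to 0$ within the cut plane $0<\arg z<2\pi$; since all coefficients are rational in $\a$, it is enough to establish the matching for $\Re\a$ large (where the $(-z)^\a$ correction is of arbitrarily high order) and continue. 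I expect the main obstacle to be this second step: the heavy bilinear bookkeeping needed to recognise each product of hypergeometric series as the specific terminating ${}_4F_3$ in \eqref{AsBs} with all Pochhammer prefactors correct, together with the careful treatment of the branch point at $z=0$. A leaner fallback, should the direct computation become unwieldy, is to derive the expansions from the Lax equation $R'(z)=[A(z),R(z)]$ attached to the Fuchsian system $Y_N'(z)=A(z)Y_N(z)$ with $A(z)=\tfrac{A_0}{z}+\tfrac{A_1}{z-1}$: solving the resulting recursions for the coefficient matrices, with boundary data $R(z)\to T^{-1}\mathrm E_{1,1}T$ at $\infty$, reproduces the same ${}_4F_3$'s.
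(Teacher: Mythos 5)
Your ``fallback'' is in fact the paper's actual proof, and your primary route is left incomplete precisely at its hardest point. The paper does not expand the bilinear products $P^{\sf J}_N\wh P^{\sf J}_{N-1}$, etc., term by term. Instead it gauges $Y_N$ to $\Psi_N(z)=Y_N(z)z^{\a\s_3/2}(1-z)^{\b\s_3/2}$ (note that it is $\Psi_N$, not $Y_N$, that solves a genuinely Fuchsian system $\pa_z\Psi_N=U(z)\Psi_N$ with $U(z)=U_0/z+U_1/(1-z)$ --- $Y_N$ itself has the $\e^{V}$ jump and does not), observes $R=\Psi_N\mathrm{E}_{1,1}\Psi_N^{-1}$, decomposes $\wt R=TRT^{-1}$ in the $\mathfrak{sl}_2$ basis, and converts the Lax equation $\pa_zR=[U,R]$ into decoupled third-order scalar ODEs for $r_3,r_\pm$. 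Substituting the power-series ans\"atze at $z=\infty$ and $z=0$ into these ODEs produces the \emph{same} three-term recurrences \eqref{recAl}--\eqref{recBl} at both points; the initial data ($A_0,A_1,B_0,B_1$ and their tilded analogues) are the only coefficients computed ``by hand'' from the Rodrigues formula and Lemma \ref{CauchyExp}, and the ${}_4F_3$/Wilson form then follows by matching the recurrence with the Wilson difference equation. In particular the Pochhammer ratio $\tfrac{(\a+\b+2N-\ell)_{2\ell+1}}{(\a-\ell)_{2\ell+1}}$ comes out of the shared recurrence plus the $z=0$ initial conditions, not from hypergeometric connection formulas.

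The gap in your primary route is the second step. The Cauchy product of two ${}_2F_1$'s does yield, after the reindexing $(x)_{n-k}=(-1)^k(x)_n/(1-x-n)_k$, \emph{a} terminating ${}_4F_3$ at argument $1$ --- but with parameters that do not coincide with those in \eqref{AsBs}. Bridging that requires a transformation of balanced ${}_4F_3$'s (Whipple/Bailey type), separately for each entry, and your proposal only names this as ``the main obstacle'' without performing it; the same applies to the claim that composing Pfaff/Euler/connection formulas at $z=0$ ``produces precisely'' the ratio \eqref{AsBstilde}. As written, the decisive identities are asserted rather than proved, so the argument is not complete; the recurrence-based route you relegate to a fallback is exactly the device the paper uses to avoid this bookkeeping, since it only needs two explicit coefficients per series. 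On the positive side, your handling of the branch point at $z=0$ (isolating the $(-z)^{\a}(1-z)^{\b}P^{\sf J}_NP^{\sf J}_{N-1}$ contribution and continuing in $\a$ from $\Re\a$ large) is more explicit than the paper's, which disposes of the issue in one sentence inside the proof of Lemma \ref{CauchyExp}.
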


Looking back at the definition of the matrix $R(z)$, 
\be 
R(z):=Y_N(z)\left(\begin{array}{cc}1 & 0 \\ 0 & 0 \end{array}\right)Y_N^{-1}(z) =
\renewcommand*{\arraystretch}{1.2}\left(\begin{array}{cc}
-\frac{2 \pi \i}{h_{N-1}} P^{\sf J}_N(z) \widehat{P}^{\sf J}_{N-1}(z) & - P^{\sf J}_N(z) \widehat{P}^{\sf J}_N(z)
\\
-\left( \frac{2 \pi \i}{h_{N-1}} \right)^2 P^{\sf J}_{N-1}(z) \widehat{P}^{\sf J}_{N-1}(z) & -\frac{2 \pi \i}{h_{N-1}} P^{\sf J}_{N-1}(z) \widehat{P}^{\sf J}_N(z)
\end{array}\right),
\ee
we notice that it is sufficient to compute the expansions of the product of the Jacobi polynomials with their Cauchy transforms at the prescribed points. To this end, recall the explicit formula \eqref{monicJacobi} for the monic Jacobi orthogonal polynomials,
which can be rewritten as the \emph{Rodrigues' formula}
\be \label{rodrigues}
P^{\sf J}_\ell(z) =\frac{(-1)^\ell }{(\a+\beta+\ell+1)_\ell} z^{-\a} (1-z)^{-\beta} \frac{\d^\ell}{\d z^\ell}\left [z^{\a+\ell} (1-z)^{\beta+\ell}  \right].
\ee
The Cauchy transforms $\wh P_\ell^{\sf J}(z)$ defined in \eqref{CauchyTransform} can be expanded as stated below.

\begin{lemma}\label{CauchyExp}
The following relations hold true:
\begin{align}
\widehat{P}^{\sf J}_\ell(z) 	={}&-\frac{1}{2\pi\i (\a+\beta+\ell+1)_\ell}
 \sum_{j\geq 0}\frac {1}{z^{j+\ell+1}}(j+1)_\ell \frac{\G(\a+\ell+j+1)\G(\b+\ell+1)}{\G(\a+\b+2\ell+j+1)},\quad |z|>1,	\label{CauchyInf}		\\
\widehat{P}^{\sf J}_\ell(z) 	\overset{z \to 0}{\sim}&\, (-1)^\ell\frac{1}{2\pi\i (\a+\beta+\ell+1)_\ell}
 \sum_{j\geq 0} z^j (j+1)_\ell  \frac{\G(\a-j)\G(\b+\ell+1)}{\G(\a+\b+\ell-j+1)},				\label{Cauchy0}
\end{align}
where the first relation is a genuine Taylor expansion at $z=\infty$ whilst the second one is a Poincar\'e asymptotic expansion at $z=0$ uniform in the sector $0<\arg z<2\pi$.
\end{lemma}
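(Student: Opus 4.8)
The plan is to compute the Cauchy transform $\wh P_\ell^{\sf J}(z) = \frac{1}{2\pi\i}\int_0^1 P_\ell^{\sf J}(x)x^\a(1-x)^\b\frac{\d x}{x-z}$ by expanding the Cauchy kernel $\frac{1}{x-z}$ in a geometric series appropriate to the point of expansion, integrating term by term, and recognizing the resulting moment integrals as Beta integrals. The key input is an explicit formula for the moments $\int_0^1 P_\ell^{\sf J}(x)x^{\a+m}(1-x)^\b\,\d x$ for $m\geq 0$; I would obtain these from the Rodrigues' formula \eqref{rodrigues} by integration by parts $\ell$ times. Writing $x^{\a+m}(1-x)^\b P_\ell^{\sf J}(x) = \frac{(-1)^\ell}{(\a+\b+\ell+1)_\ell}x^m\frac{\d^\ell}{\d x^\ell}\bigl[x^{\a+\ell}(1-x)^{\b+\ell}\bigr]$ and integrating by parts (all boundary terms vanish since $\Re\a,\Re\b>-1$ and the polynomial factor $x^m$ has degree $m$), one is left with $\frac{1}{(\a+\b+\ell+1)_\ell}\cdot m(m-1)\cdots(m-\ell+1)\int_0^1 x^{\a+m-\ell}(1-x)^{\b+\ell}\,\d x$ when $m\geq\ell$, which is a Beta integral $B(\a+m-\ell+1,\b+\ell+1) = \frac{\G(\a+m-\ell+1)\G(\b+\ell+1)}{\G(\a+\b+m+2)}$; for $m<\ell$ the moment vanishes, consistent with orthogonality of $P_\ell^{\sf J}$ to lower-degree polynomials.

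For the expansion at $z=\infty$, valid for $|z|>1$, I would write $\frac{1}{x-z} = -\sum_{k\geq 0}\frac{x^k}{z^{k+1}}$ (uniformly convergent for $x\in[0,1]$, $|z|>1$), integrate term by term, and keep only $k\geq\ell$ by the vanishing of low moments; setting $k = j+\ell$ with $j\geq 0$ and using the moment formula with $m=k=j+\ell$ gives $\wh P_\ell^{\sf J}(z) = -\frac{1}{2\pi\i(\a+\b+\ell+1)_\ell}\sum_{j\geq 0}\frac{1}{z^{j+\ell+1}}\cdot\frac{(j+\ell)!}{j!}\cdot\frac{\G(\a+\ell+j+1)\G(\b+\ell+1)}{\G(\a+\b+2\ell+j+1)}$, and $\frac{(j+\ell)!}{j!} = (j+1)_\ell$, which is exactly \eqref{CauchyInf}. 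For the expansion at $z=0$ I would instead use $\frac{1}{x-z} = \frac{1}{x}\cdot\frac{1}{1-z/x} = \sum_{k\geq 0}\frac{z^k}{x^{k+1}}$, which forces negative moments $\int_0^1 P_\ell^{\sf J}(x)x^{\a-k-1}(1-x)^\b\,\d x$; these are again evaluated by the same Rodrigues' integration-by-parts, now producing $\frac{(-1)^\ell}{(\a+\b+\ell+1)_\ell}(k+1)_\ell\,B(\a-k-\ell,\b+\ell+1)$ after shifting indices, i.e. $(-1)^\ell\frac{(k+1)_\ell}{(\a+\b+\ell+1)_\ell}\frac{\G(\a-k-\ell)\G(\b+\ell+1)}{\G(\a+\b-k+1)}$; collecting and reindexing $j=k$ yields \eqref{Cauchy0}.

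The main obstacle is justifying the term-by-term integration at $z=0$: the series $\sum z^k x^{-k-1}$ does \emph{not} converge uniformly on $[0,1]$, so the resulting series in $z$ is only asymptotic, not convergent — hence the statement claims a Poincar\'e asymptotic expansion there, not a Taylor series. To make this rigorous I would truncate $\frac{1}{x-z} = \sum_{k=0}^{M-1}\frac{z^k}{x^{k+1}} + \frac{z^M}{x^M(x-z)}$, integrate the finite sum exactly as above, and bound the remainder $\frac{z^M}{2\pi\i}\int_0^1 \frac{P_\ell^{\sf J}(x)x^\a(1-x)^\b}{x^M(x-z)}\,\d x$ by $O(z^M)$ uniformly for $z$ in the sector $0<\arg z<2\pi$ bounded away from $[0,1]$; here one uses that near $x=0$ the integrand behaves like $x^{\a-M}$, integrable after one more order is extracted, together with the uniform lower bound on $|x-z|$ away from the positive real axis. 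A similar (but easier, since convergence is genuine) estimate at $z=\infty$ completes the proof. One should also check that the analytic continuation is single-valued across the stated sectors, which follows since $\wh P_\ell^{\sf J}$ is analytic on $\C\setminus[0,1]$ and the sector $0<\arg z<2\pi$ is precisely the complement of the branch cut along $[0,\infty)$ reduced to a neighborhood of $z=0$.
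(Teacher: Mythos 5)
Your proof follows the paper's own argument step for step: geometric expansion of the Cauchy kernel, orthogonality of $P^{\sf J}_\ell$ to lower powers to kill the first $\ell$ moments, Rodrigues' formula plus $\ell$-fold integration by parts, and the Euler beta integral, with the expansion at $z=0$ understood only as a Poincar\'e asymptotic series (your truncation-and-remainder estimate supplies the justification that the paper merely asserts). The one thing to fix is a bookkeeping slip in your moment formula: integrating $x^m$ by parts against $\frac{\d^\ell}{\d x^\ell}\bigl[x^{\a+\ell}(1-x)^{\b+\ell}\bigr]$ leaves $\frac{m!}{(m-\ell)!}\int_0^1 x^{\a+m}(1-x)^{\b+\ell}\,\d x$ --- the factor $x^{\a+\ell}$ inside the derivative keeps its $+\ell$ --- i.e.\ the beta integral is $B(\a+m+1,\b+\ell+1)$ rather than $B(\a+m-\ell+1,\b+\ell+1)$, so that with $m=j+\ell$ the Gamma quotient is $\G(\a+\ell+j+1)\G(\b+\ell+1)/\G(\a+\b+2\ell+j+2)$, which is what the paper's chain of equalities actually produces.
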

\begin{proof}
We start with the expansion \eqref{CauchyInf} at $z=\infty$, which is computed as follows;
\begin{align}
\nonumber
\widehat{P^{\sf J}_\ell}(z)&=\frac{1}{2\pi\i}\int_0^{1} P^{\sf J}_\ell(x)x^{\a}(1-x)^{\beta}\frac{\d x}{x-z}
\\
\nonumber
&\overset{(i)}{=}-\frac{1}{2\pi\i }\sum_{j\geq 0}\frac 1{z^{j+1}}\int_0^{1}P^{\sf J}_\ell (x)x^{\a+j}(1-x)^{\beta}\d x
\\
\nonumber
&\overset{(ii)}{=}-\frac{1}{2\pi\i } \sum_{j\geq 0}\frac 1{z^{j+\ell+1}}\int_0^{1}P^{\sf J}_\ell (x)x^{\a+j+\ell}(1-x)^{\beta}\d x
\\
\nonumber
&\overset{(iii)}{=}-\frac{1}{2\pi\i } \frac{(-1)^\ell }{(\a+\beta+\ell+1)_\ell} \sum_{j\geq 0}\frac 1{z^{j+\ell+1}}\int_0^{1} \left(\frac{\d^\ell}{\d x^\ell}x^{\a+\ell} (1-x)^{\beta+\ell}\right)x^{j+\ell}\d x
\\
\nonumber
&\overset{(iv)}{=}-\frac{1}{2\pi\i } \frac{1}{(\a+\beta+\ell+1)_\ell} \sum_{j\geq 0}\frac 1{z^{j+\ell+1}}\int_0^{1}x^{\a+\ell}  (1-x)^{\beta+\ell} \frac{\d^\ell}{\d x^\ell}(x^{j+\ell})\d x
\\
\nonumber
&\overset{(v)}{=}-\frac{1}{2\pi\i } \frac{1}{(\a+\beta+\ell+1)_\ell} \sum_{j\geq 0}\frac {(j+1)_\ell}{z^{j+\ell+1}}\int_0^{1} x^{\a+\ell+j} (1-x)^{\beta+\ell} \d x
\\
&\overset{(vi)}{=}-\frac{1}{2\pi\i } \frac{1}{(\a+\beta+\ell+1)_\ell}
 \sum_{j\geq 0}\frac {1}{z^{j+\ell+1}}(j+1)_\ell \frac{\G(\a+\ell+j+1)\G(\b+\ell+1)}{\G(\a+\b+2\ell+j+1)}.
\end{align}
In $(i)$ we have expanded the geometric series and exchanged sum and integral by Fubini theorem, in $(ii)$ we use that $P^{\sf J}_\ell(z)$ is orthogonal to $z^j$ for $j<\ell$, in $(iii)$ we use the Rodrigues' formula \eqref{rodrigues}, in $(iv)$ we integrate by parts, in $(v)$ we compute the derivative, and finally in $(vi)$ we use the Euler beta integral.
The computation at $z=0$ is completely analogous, with the only difference that in $(i)$ it is not legitimate to exchange sum and integral so this step holds only in the sense of a Poincar\'e asymptotic series.
\end{proof}

The next step is to compute the expansions of the products of the Jacobi polynomials and their Cauchy transforms. To this end it is convenient to study more in detail the properties of $R(z)$. 

\begin{proposition} \label{DiffEqsProp}
The matrix $\Psi_N(z):=Y_N(z)z^{\a\s_3/2}(1-z)^{\b\s_3/2}$ satisfies the following linear differential equation
\be \label{Ydiffeq}
\pa_z\Psi_N(z) = U(z)\Psi_N(z)
\ee
and the matrix $R(z)$ satisfies the following Lax differential equation,
\be 	\label{LaxR}
\pa_z R(z)=[U(z),R(z)].
\ee
Here the matrix $U(z)$ is explicitly given as
\be 
\label{eq:U}
U(z)=\frac{U_0}{z}+\frac{U_1}{1-z},
\ee
with
\begin{align} 
U_0 &=
\begin{pmatrix}
 \frac{2 N (\a+\b+N)+\a (\a+\b)}{2 (\a+\b+2N)} & -\frac{h_N^{\sf J}}{2 \pi\i} (\a+\b+2N+1) \\
 \frac{2 \pi\i}{h_{N-1}^{\sf J}} (\a+\b+2N-1) & -\frac{2 N (\a+\b+N)+\a (\a+\b)}{2 (\a+\b+2N)}
\end{pmatrix},	\\ \label{Amatrix}
U_1 &=
\begin{pmatrix}
 -\frac{2 N (\a+\b+N)+\b (\a+\b)}{2 (\a+\b+2N)} &-\frac{h_N^{\sf J}}{2 \pi\i} (\a+\b+2N+1) \\
 \frac{2 \pi\i}{h_{N-1}^{\sf J}} (\a+\b+2N-1)  & \frac{2 N (\a+\b+N)+\b (\a+\b)}{2 (\a+\b+2N)}
\end{pmatrix}. 
\end{align}
\end{proposition}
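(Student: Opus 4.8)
The proof splits into a soft part, which fixes the shape of the equations, and a computational part, which identifies $U_0,U_1$. For the soft part the plan is to use the Riemann--Hilbert characterization of $Y_N$. Fix the branches of $z^{\a\s_3/2}$ and $(1-z)^{\b\s_3/2}$ by cutting along $(-\infty,0]$ and $[1,+\infty)$, so that $D(z):=z^{\a\s_3/2}(1-z)^{\b\s_3/2}$ is holomorphic and invertible on $\C\setminus\bigl((-\infty,0]\cup[1,+\infty)\bigr)$, reduces to a positive diagonal matrix on $(0,1)$, and satisfies $\det D\equiv1$; hence $\det\Psi_N\equiv1$ by $\det Y_N\equiv1$. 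The first step is to observe that $\Psi_N=Y_N D$ has \emph{piecewise constant} jumps across $\R$: conjugating the jump $\left(\begin{smallmatrix}1 & x^\a(1-x)^\b\\0&1\end{smallmatrix}\right)$ of $Y_N$ from \eqref{jumpY} by the diagonal $D$ produces the constant $\left(\begin{smallmatrix}1&1\\0&1\end{smallmatrix}\right)$ on $(0,1)$, while on $(-\infty,0)$ and on $(1,+\infty)$ the jump of $\Psi_N$ is that of $D$ alone, namely the constant diagonal matrix $\e^{\pm\i\pi\a\s_3}$, resp.\ $\e^{\pm\i\pi\b\s_3}$. Therefore $U(z):=(\pa_z\Psi_N(z))\Psi_N^{-1}(z)$ has no jump, i.e.\ it is single-valued meromorphic on $\C$ with $\tr U\equiv\pa_z\log\det\Psi_N\equiv0$; elementary growth estimates at $z=0,1$ (from the mild, $z^\a$- resp.\ $(1-z)^\b$-type behaviour that $Y_N$ inherits from the Cauchy transforms, together with $\det\Psi_N\equiv1$) show that the only singularities there are at most simple poles, and from \eqref{growthYinfinity} together with $D(z)=(\mathrm{const})\,z^{(\a+\b)\s_3/2}(\1+\O(z^{-1}))$ one finds $U(z)=\tfrac{\a+\b+2N}{2z}\s_3+\O(z^{-2})$ near $z=\infty$, so $U$ vanishes at $\infty$. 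Hence $U(z)=\dfrac{U_0}{z}+\dfrac{U_1}{1-z}$ with $U_0,U_1$ constant and traceless, $U_0=\res{z=0}U$, $U_1=-\res{z=1}U$, and $U_0-U_1=\tfrac{\a+\b+2N}{2}\s_3$; this is \eqref{Ydiffeq}--\eqref{eq:U}. The Lax equation \eqref{LaxR} is then immediate and does not use the explicit form of $U_0,U_1$: since $D$ is diagonal, $D^{-1}\mathrm{E}_{1,1}D=\mathrm{E}_{1,1}$, so $R=Y_N\mathrm{E}_{1,1}Y_N^{-1}=\Psi_N\mathrm{E}_{1,1}\Psi_N^{-1}$, and differentiating and substituting $\pa_z\Psi_N=U\Psi_N$ yields $\pa_z R=[U,R]$.

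It remains to compute $U_0,U_1$; by the involution $z\mapsto1-z$, $\a\leftrightarrow\b$ of the Jacobi weight it is enough to do $U_0$. The local monodromy of $\Psi_N$ around $z=0$ --- the product of the two constant jump matrices it encircles --- has eigenvalues $\e^{\pm\i\pi\a}$, so $z=0$ is a regular singular point with local exponents $\pm\a/2$, i.e.\ $U_0$ has eigenvalues $\pm\a/2$ (genuinely, not merely modulo $\mathbb{Z}$, for $\a$ in an open set, hence identically by rationality of the entries in $\a$). Together with $\tr U_0=\tr U_1=0$, the analogous statement $\det U_1=-\b^2/4$ at $z=1$, and $U_0-U_1=\tfrac{\a+\b+2N}{2}\s_3$ (which in particular forces $U_0,U_1$ to share their off-diagonal entries), this determines the diagonal entries of $U_0$ and the product $(U_0)_{1,2}(U_0)_{2,1}$; the two remaining unknowns are then pinned down by the explicit residue $U_0=\tfrac{\a}{2}\,Y_N(0)\,\s_3\,Y_N^{-1}(0)$, valid for $\Re\a>0$ (where $Y_N$ is holomorphic at $0$) and extended to all admissible $\a$ by rationality. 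Using $\det Y_N(0)=1$ this gives $(U_0)_{1,2}=-\a\,P^{\sf J}_N(0)\wh P^{\sf J}_N(0)$ and $(U_0)_{2,1}=\a\bigl(\tfrac{2\pi\i}{h^{\sf J}_{N-1}}\bigr)^{2}P^{\sf J}_{N-1}(0)\wh P^{\sf J}_{N-1}(0)$. Reading $P^{\sf J}_\ell(0)$ off \eqref{monicJacobi} and $\wh P^{\sf J}_\ell(0)$ off the $j=0$ term of \eqref{Cauchy0}, and simplifying the Gamma-function quotients, produces the clean identity $P^{\sf J}_\ell(0)\wh P^{\sf J}_\ell(0)=\tfrac1\a\cdot\tfrac{h^{\sf J}_\ell}{2\pi\i}(\a+\b+2\ell+1)$ with $h^{\sf J}_\ell$ as in \eqref{NormingC}; substituting for $\ell=N,N-1$ and using $\tfrac{h^{\sf J}_N}{h^{\sf J}_{N-1}}=\tfrac{N(\a+N)(\b+N)(\a+\b+N)}{(\a+\b+2N-1)(\a+\b+2N)^2(\a+\b+2N+1)}$ reproduces exactly the matrix $U_0$ stated, and the said involution gives $U_1$.

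The main obstacle is not conceptual but lies in the bookkeeping of the second part: one must track the branch choices in $D$ (which introduce the $\e^{\pm\i\pi\a}$ factors, all of which must ultimately cancel so that $U$ be rational), keep in mind that the expansion of the Cauchy transforms at $z=0$ in Lemma~\ref{CauchyExp} is only asymptotic (hence the detour through $\Re\a>0$, where $Y_N(0)$ is a genuine limit, followed by the rationality argument), and deal with the resonant values of $\a,\b,\a+\b$ --- again bypassed by rationality. Showing that $U$ has \emph{simple} rather than higher-order poles at $0$ and $1$ for every admissible $\a,\b$ is the one place where the soft argument needs a little care, although it can equally be deduced from the regular-singular-point structure once the local monodromy is known to be the fixed matrix above.
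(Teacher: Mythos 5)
Your argument is correct, but it is a genuinely different (and much more self-contained) route than the paper's. The paper disposes of \eqref{Ydiffeq} in one line by invoking the classical differential/ladder relations of Jacobi polynomials from Ismail's book \cite{Mourad}, and only records the trivial implication \eqref{Ydiffeq} $\Rightarrow$ \eqref{LaxR} via $R=\Psi_N\mathrm{E}_{1,1}\Psi_N^{-1}$; in particular the explicit entries of $U_0,U_1$ are asserted rather than derived. You instead rederive everything from the Riemann--Hilbert characterization: conjugating the jump \eqref{jumpY} by $D(z)=z^{\a\s_3/2}(1-z)^{\b\s_3/2}$ makes the jumps piecewise constant, so $U=(\pa_z\Psi_N)\Psi_N^{-1}$ is rational, traceless, vanishing at $\infty$ with $U_0-U_1=\tfrac{\a+\b+2N}{2}\s_3$, and the residues are pinned down by $\det U_0=-\a^2/4$, $\det U_1=-\b^2/4$ together with the endpoint values of $Y_N$. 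I checked the computational core: the identity $P^{\sf J}_\ell(0)\wh P^{\sf J}_\ell(0)=\tfrac1\a\cdot\tfrac{h^{\sf J}_\ell}{2\pi\i}(\a+\b+2\ell+1)$ is correct and does reproduce the stated off-diagonal entries, and the determinant constraints give exactly $(U_0)_{1,1}=\tfrac{2N(\a+\b+N)+\a(\a+\b)}{2(\a+\b+2N)}$. What your approach buys is independence from the classical literature and a structural explanation of \emph{why} $U$ has the form \eqref{eq:U} (Fuchsian with poles at $0,1,\infty$ and prescribed local exponents); what it costs is the endpoint local analysis, which the citation route avoids entirely.

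One small inaccuracy to fix: even for $\Re\a>0$ the matrix $Y_N$ is \emph{not} holomorphic at $z=0$ --- the Cauchy transforms have a $z^\a$ (or $z^n\log z$ for integer $\a$) branch contribution from the endpoint --- it only extends continuously. The residue formula $U_0=\tfrac\a2\,Y_N(0)\s_3Y_N^{-1}(0)$ is nevertheless correct: writing the local form $Y_N(z)=\wh Y(z)\left(\begin{smallmatrix}1&\kappa z^\a\\0&1\end{smallmatrix}\right)$ with $\wh Y$ holomorphic and invertible at $0$, one has $\Psi_N=\wh Y z^{\a\s_3/2}\left(\begin{smallmatrix}1&\kappa\\0&1\end{smallmatrix}\right)(1-z)^{\b\s_3/2}$, whence $U$ has at most a simple pole at $0$ with residue $\tfrac\a2\wh Y(0)\s_3\wh Y(0)^{-1}$, and $\wh Y(0)=Y_N(0)$ when $\Re\a>0$. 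This local form is also the clean way to settle the two points you flag as needing care (simple poles, and exponents exactly $\pm\a/2$ rather than modulo $\mathbb Z$); your rationality argument then handles the remaining parameter values.
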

\begin{proof}
From the definition $R(z):=Y_N(z)\begin{pmatrix}1 & 0 \\ 0 & 0 \end{pmatrix}Y_N^{-1}(z)$, we obtain $R(z)=\Psi_N(z)\begin{pmatrix}1 & 0 \\ 0 & 0 \end{pmatrix}\Psi_N^{-1}(z)$; therefore the Lax equation \eqref{LaxR} follows from \eqref{Ydiffeq}. The latter is a classical property of Jacobi orthogonal polynomials \cite{Mourad}.
\end{proof}

To prove Proposition \ref{Rexpansions} is equivalent to prove that $\wt R(z)\sim R^{[p]}(z)$ for $p=\infty,0$ where
\be
\label{eq:TRT}
\wt R(z)=TR(z)T^{-1}.
\ee
It follows from the previous proposition that $\wt R(z)$ satisfies
\be
\label{LaxTR}
\frac{\pa}{\pa z}\wt R(z)=[\wt U(z),\wt R(z)]
\ee
where $\wt U(z)=TU(z)T^{-1}=\wt U_0/z+\wt U_1/(1-z)$, with
\begin{align}
\nonumber
\wt U_0&=T U_0 T^{-1} = \frac{1}{\a+\b+2N}
\begin{pmatrix}
 \frac{2 N (\a+\b+N)+\a (\a+\b)}{2} & -N(\a+N)(\b+N)(\a+\b+N) \\
 1 & -  \frac{2 N (\a+\b+N)+\a (\a+\b)}{2}
\end{pmatrix},
\\
\label{TATmatrix}
\wt U_1&= T U_1 T^{-1} = \frac{1}{\a+\b+2N}
\begin{pmatrix}
- \frac{2 N (\a+\b+N)+\b (\a+\b)}{2} & -N(\a+N)(\b+N)(\a+\b+N) \\
 1 &  \frac{2 N (\a+\b+N)+\b (\a+\b)}{2}
\end{pmatrix}.
\end{align}

Introduce the matrices
\be
\label{sl2basis}
\s_3=\left(
\begin{array}{cc}
1 & 0 \\ 0 &-1
\end{array}
\right),\qquad
\s_+=\left(
\begin{array}{cc}
0 & 1 \\ 0 &0
\end{array}
\right),\qquad
\s_-=\left(
\begin{array}{cc}
0 & 0 \\ 1 &0
\end{array}
\right),
\ee 
and write
\begin{equation}
\label{R_decom}
\wt R(z)=\frac 12 \1+r_3\s_3+r_+\s_++r_-\s_-,\qquad \wt U(z)=u_3\s_3+u_+\s_++u_-\s_-,
\end{equation}
where we used that $\tr R(z)= 1, \tr U(z)=0$. For the sake of brevity we omit the dependence on $z$ in the $\mathfrak {sl}_2$ components.
The Lax equation \eqref{LaxTR} yields the coupled first order linear ODEs
\be \label{syseq}
\pa_z r_3=u_+r_--u_-r_+,\qquad
\pa_z r_+=2(u_3r_+-u_+r_3),\qquad
\pa_z r_-=2(u_-r_3-u_3r_-),
\ee
which is equivalent to three decoupled third order linear ODEs, one for $\pa_zr_3$ \small
\begin{multline}
\label{Rec3}
3 \left[2N(\a+\b+N)+\a(\a+\b)-2 -z \left((\a+\b+2 N)^2-4\right)\right]\pa_z r_3\\  - \left[\a^2-4-2 z\left(2 N (\a+\b+N)+\a (\a+\b)-12\right)+z^2 \left((\a+\b+2 N)^2-24\right)\right]\pa_z^2 r_3\\
- 5 z(z-1) (1-2 z)  \pa_z^3 r_3  +z^2(z-1)^2 \pa_z^4 r_3 =0,
\end{multline} \normalsize
and for $r_\pm$ \small
\begin{multline}
\label{RecPM}
\left[2 N (\a+\b+N\pm1)+(\a\pm1) (\a+\b)-z (\a+\b+2 N\pm2) (\a+\b+2 N)\right]r_\pm\\
- \left[\a^2-1 -z \left(-4 N (\a+\b+N\pm1)-2(\a+\b)(\a\pm1)+6 \right) \right. \\ \left.+ z^2 \left(4 N (\a+\b+N\pm1)+(\a+\b)(\a+\b\pm2)-6\right)  \right]\pa_z r_\pm +3z(z-1)(2z-1)\pa_z^2 r_\pm+z^2 (z-1)^2\pa_z^3 r_\pm =0.
\end{multline} \normalsize
The following ansatz is quite natural in view of our previous work \cite{GGR2020} about the Laguerre Unitary Ensemble (see also \cite{DY2017} for the Gaussian Unitary Ensemble);
namely we write the expansions of the entries of $R(z)$ at $z=\infty$ as
\begin{align}
r_3(z)&\sim\frac 12+\frac{1}{\a+\b+2N}\sum_{\ell\geq 0} \frac{1}{z^{\ell+1}} \ell A_\ell(N)
,		\\
r_+(z)&\sim \frac{1}{\a+\b+2N}\sum_{\ell\geq 0} \frac{N(\a+N)(\b+N)(\a+\b+N)}{z^{\ell+1}} B_\ell(N+1),		\\
r_-(z)&\sim -\frac{1}{\a+\b+2N}\sum_{\ell\geq 0} \frac{1}{z^{\ell+1}} B_\ell(N), \label{ansatz}
\end{align}
for some coefficients $A_\ell(N)=A_\ell(N,\a,\b)$ and $B_\ell(N)=B_\ell(N,\a,\b)$.
By substitution in \eqref{Rec3} and \eqref{RecPM} we see that the ansatz is consistent with them; in particular we get the following \emph{three term recurrence relations} for $A_\ell(N),B_\ell(N)$;
\begin{align} \nonumber
&(2\ell+1)\left(\a(\a+\b)-\ell(\ell+1)+2N(\a+\b+N)\right)A_{\ell}(N)\\ \label{recAl}
&\qquad+(\ell-1)(\ell^2-\a^2)A_{\ell-1}(N)+(\ell+2)\left((\ell+1)^2-(\a+\b+2N)\right)A_{\ell+1}(N)=0,
\\
\nonumber
&(2\ell+1)\left((\a- 1)(\a+\b)-\ell(\ell+1)+2N(\a+\b+N-1)\right)B_{\ell}(N)\\  \label{recBl}
&\qquad+\ell(\ell^2-\a^2)B_{\ell-1}(N)+(\ell+1)\left((\ell+1)^2-(\a+\b+2N-1)\right)B_{\ell+1}(N)=0,
\end{align}
for $\ell\geq 1$, together with the initial conditions \small
\begin{align} 
A_0(N,\a,\b)&=\frac{N(\b+N)}{\a+\b+2N},	&	 A_1(N,\a,\b)=\frac{N(\a+N)(\b+N)(\a+\b+N)}{(\a+\b+2 N-1) (\a+\b+2 N) (\a+\b+2 N+1)},	\\
B_0(N,\a,\b)&= \frac{1}{(\a+\b+2N-1)},	&	 B_1(N,\a,\b)= \frac{(\a-1) (\a+\b)+2 N (\a+\b+N-1)}{(\a+\b+2 N-2) (\a+\b+2 N-1) (\a+\b+2 N)}.
\end{align} \normalsize
The initial conditions are obtained from \eqref{monicJacobi} and \eqref{CauchyInf}.
It can be checked that the recurrence relation for the coefficients of $r_+(z)$ are actually those of $r_-(z)$, modulo a shift in $N$, as claimed in \eqref{ansatz}.

The three term recurrence relations \eqref{recAl} and \eqref{recBl} can be solved in terms of \emph{Wilson polynomials} \eqref{HyperW}.

\begin{proposition} \label{Wilson}
The coefficients $A_\ell(N,\a,\b)$ and $B_\ell(N,\a,\b)$ can be expressed in terms of  \emph{Wilson Polynomials}, defined in \eqref{HyperW}, as
\small
\begin{align*}
&A_\ell(N,\a,\b)=\frac{(-1)^{N-1}(\a+\ell)!(\a+\b+N)!(\b+N)}{(N-1)!(\a+N-1)!(\a+\b+2N+\ell)!}\, W_{N-1}\biggl(-\left(\ell+ \frac 12 \right)^2;\frac 32, \frac 12,\a+\frac12,\frac 12-\a-\b-2N\biggr), 	\\
&B_\ell(N,\a,\b)=\frac{(-1)^{N-1}(\a+\ell)!(\a+\b+N-1)!}{(N-1)!(\a+N-1)!(\a+\b+2N+\ell-1)!}\, W_{N-1}\biggl(-\left(\ell+ \frac 12 \right)^2;\frac 12, \frac 12,\a+\frac12,\frac 32-\a-\b-2N\biggr).
\end{align*}\normalsize
This is equivalent to the hypergeometric representation
\small
\begin{align*}
A_\ell(N,\a,\b)&=N (\a+N) (\b+N) (\a+\b+N) \frac{(\a+2)_{\ell-1}}{(\a+\b+2 N-1)_{\ell+2}} \pFq{4}{3}{1-\ell,\ell+2,1-\b-N,1-N}{2,\a+2,2-\a-\b-2 N}{1}, \\
B_\ell(N,\a,\b)&= \frac{(\a+1)_\ell}{(\a+\b+2 N-1)_{\ell+1}} \pFq{4}{3}{-\ell,\ell+1,1-\b-N,1-N}{1,\a+1,2-\a-\b-2 N}{1}. 
\end{align*}\normalsize
\end{proposition}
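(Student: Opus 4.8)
The plan is to read the two displayed identities as the statement that explicit sequences solve the three-term recurrences \eqref{recAl} and \eqref{recBl} with the prescribed initial data, and to carry this out. First I would note that, for each fixed $\ell\geq1$, the coefficient of $A_{\ell+1}$ in \eqref{recAl} (respectively of $B_{\ell+1}$ in \eqref{recBl}) is a nonzero polynomial in $(\a,\b,N)$, so for generic parameters the recurrence propagates its solution forward in $\ell$; for \eqref{recAl} this propagation is driven by $A_1$ alone, since the coefficient of $A_{\ell-1}$ there vanishes at $\ell=1$, so that $A_0$ is an extra independent datum, whereas \eqref{recBl} needs both $B_0$ and $B_1$. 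Since the sequences $A_\ell,B_\ell$ and the proposed closed forms are, for each fixed $\ell$, rational functions of $(\a,\b,N)$, it thus suffices to check that the ${}_4F_3$ expressions reproduce $A_0,A_1$ (respectively $B_0,B_1$) and satisfy \eqref{recAl} (respectively \eqref{recBl}); the claimed equivalence with the Wilson-polynomial form then follows immediately from the definition \eqref{HyperW} after a routine rearrangement of rising factorials.

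The recurrences carry the real content. The key observation is that, up to an $\ell$-independent normalization dictated by \eqref{HyperW}, the ${}_4F_3$ in the formula for $A_\ell$ is the value $W_{N-1}(-(\ell+\tfrac12)^2;\tfrac32,\tfrac12,\a+\tfrac12,\tfrac12-\a-\b-2N)$ of a single Wilson polynomial, and likewise for $B_\ell$ with parameters $(\tfrac12,\tfrac12,\a+\tfrac12,\tfrac32-\a-\b-2N)$; so as $\ell$ varies only the spectral argument $k=\i(\ell+\tfrac12)$ moves, the degree $N-1$ staying frozen. Wilson polynomials obey a second-order difference equation in the spectral variable \cite{Wilson,KS1994}, and under $k=\i(\ell+\tfrac12)$ the shifts $k\mapsto k\pm\i$ become $\ell\mapsto\ell\pm1$, so that difference equation turns into a three-term recurrence in $\ell$. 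Conjugating it by the explicit $\ell$-dependent prefactor $\tfrac{(\a+2)_{\ell-1}}{(\a+\b+2N-1)_{\ell+2}}$ (respectively $\tfrac{(\a+1)_{\ell}}{(\a+\b+2N-1)_{\ell+1}}$), which is precisely the factor standing in front of the ${}_4F_3$ in the claimed formula, and clearing the common denominators that appear, should return \eqref{recAl} and \eqref{recBl}. I expect this matching to be the main obstacle: it is elementary but heavy in bookkeeping, and the cancellation of the $\ell$-dependent gauge factors must be tracked with care. One can sidestep invoking the Wilson difference equation as a black box by producing the same three-term recurrence from the standard contiguous relations for a balanced, terminating ${}_4F_3$ at unit argument, but that is essentially the same computation.

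The initial values are then checked directly. For $A_\ell$, at $\ell=1$ the upper parameter $1-\ell$ vanishes so the ${}_4F_3$ collapses to $1$, giving $A_1$ at once; at $\ell=0$ the entries $2$ in numerator and denominator cancel, leaving the balanced terminating series ${}_3F_2(1,1-\b-N,1-N;\a+2,2-\a-\b-2N;1)$, which the Pfaff--Saalsch\"utz summation evaluates in closed form, and a short simplification yields $A_0=\tfrac{N(\b+N)}{\a+\b+2N}$. For $B_\ell$, at $\ell=0$ the ${}_4F_3$ equals $1$, giving $B_0=\tfrac1{\a+\b+2N-1}$, while at $\ell=1$ it has a single non-trivial term, from which $B_1$ follows by elementary algebra. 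Assembling the equivalence of the two forms, the recurrences, and the initial data would complete the proof.
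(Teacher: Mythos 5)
Your proposal is correct and follows essentially the same route as the paper: the paper likewise identifies the recurrences \eqref{recAl}, \eqref{recBl} with the Wilson difference equation in the spectral variable (with the degree $N-1$ frozen and $k=\i(\ell+\tfrac12)$ playing the role of the shifted variable), and then reads off the hypergeometric form from \eqref{HyperW}. Your additional remarks on forward propagation of the recurrences and the verification of the initial data at $\ell=0,1$ merely make explicit what the paper leaves implicit.
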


\begin{proof}
The identification with the Wilson polynomials is obtained by comparing the recurrence relations \eqref{recAl} and \eqref{recBl} with the difference equation for this family of orthogonal polynomials, which reads
\be
n(n+a+b+c+d-1)w(k)=C(k)w(k+\i)-\left[C(k)+D(k)\right]w(k)+D(k)w(k-\i),
\ee
where $w(k)=W_n(k^2;a,b,c,d)$ and
\be
C(k)=\frac{(a-\i k)(b-\i k)(c-\i k)(d-\i k)}{2\i k (2 \i k -1)},\quad
D(k)=\frac{(a+\i k)(b+\i k)(c+\i k)(d+\i k)}{2\i k (2 \i k +1)}.
\ee
The hypergeometric representation of $A_\ell,B_\ell$ then directly follows from that of the Wilson polynomials in \eqref{HyperW}.
\end{proof}

The above Proposition, together with the expansions \eqref{ansatz}, yields the first part of Proposition \ref{Rexpansions}. The asymptotics of $R(z)$ at $z=0$ are obtained in a similar way. 
More precisely, we claim that the expansion at $z=0$ of the entries of $\wh R(z)$ reads as
\begin{align}
\nonumber
r_3(z)&\sim\frac 12+\frac{1}{\a+\b+2N} \sum_{\ell\geq 0} \frac{(\a+\b+2N-\ell)_{2\ell+1}}{(\a-\ell)_{2\ell+1}} (\ell+1) A_\ell(N,\a,\b) z^\ell
,		\\ 
\nonumber
r_+(z)&\sim -\frac{1}{\a+\b+2N} \sum_{\ell\geq 0} N(\b+N)(\a+N)(\a+\b+N) \frac{(\a+\b+2N+1-\ell)_{2\ell+1}}{(\a-\ell)_{2\ell+1}} B_\ell(N+1,\a,\b) z^\ell,		\\
r_-(z)&\sim \frac{1}{\a+\b+2N} \sum_{\ell\geq 0}  \frac{(\a+\b+2N-1-\ell)_{2\ell+1}}{(\a-\ell)_{2\ell+1}} B_\ell(N,\a,\b) z^\ell.
\label{ansatz0}
\end{align}
This can be proven by checking that plugging  the formul\ae\ \eqref{ansatz0}  in the equations \eqref{Rec3}, \eqref{RecPM}, one obtains   the same recurrence relations \eqref{recAl} and \eqref{recBl}.
The associated initial conditions can again be computed from \eqref{monicJacobi} and \eqref{Cauchy0}. This concludes the proof of Proposition \ref{Rexpansions}.\hfill$\square$

\subsection{Proof of Corollary \ref{corollaryWilson}} \label{sec42}

\subsubsection{Case \texorpdfstring{$\ell=1$}{l=1}}

From Theorem \ref{thmcumulants} we write the formula for $\mathscr C_1(z)$ by using the differential equation \eqref{Ydiffeq} as
\be\label{111}
\mathscr C_1(z) =\tr\left(Y^{-1}_N(z)Y_N'(z)\mathrm{E}_{1,1}\right)=\tr(U(z)R(z))-\frac 12\left(\frac \a z-\frac \b{1-z}\right),
\ee
where $E_{1,1}:=\begin{pmatrix} 1 & 0 \\ 0 & 0 \end{pmatrix}$. In the last step we used the definition of $R(z)=Y_N(z)E_{1,1}Y_N(z)^{-1}$, the cyclic property of the trace and the equation
\be
Y'_N(z)=U(z)Y_N(z)-\left(\frac \a z-\frac \b{1-z}\right)Y_N(z)\frac{\s_3}2,
\ee
which follows from \eqref{Ydiffeq}.

\begin{lemma}
We have
\be \label{OnePointEq}
\pa_z\left[z(1-z) \tr(U(z)R(z))\right] = -(\a+\b+2N) \tr\left(R(z)\frac{\s_3}{2}\right).
\ee 
\end{lemma}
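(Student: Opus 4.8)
The plan is to verify the identity \eqref{OnePointEq} directly from the Lax equation \eqref{LaxR}, without appealing to any special structure of the Jacobi case beyond the form \eqref{eq:U} of $U(z)$. Writing $F(z):=\tr(U(z)R(z))$, the left-hand side is $\pa_z[z(1-z)F(z)]=(1-2z)F(z)+z(1-z)F'(z)$, so I first need a usable formula for $F'(z)$. Differentiating and using \eqref{LaxR} in the form $\pa_zR=[U,R]$ together with the cyclic invariance of the trace, the commutator term drops out: $F'(z)=\tr(U'(z)R(z))$. From \eqref{eq:U} we have $U'(z)=-U_0/z^2+U_1/(1-z)^2$, which is rational with double poles only at $z=0$ and $z=1$.

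Next I would combine these two pieces. The quantity $z(1-z)F'(z)=\tr\big([z(1-z)U'(z)]R(z)\big)$ and $z(1-z)U'(z)=-(1-z)/z\,U_0+z/(1-z)\,U_1$; adding $(1-2z)U(z)=(1-2z)(U_0/z+U_1/(1-z))$, the coefficient of $U_0$ becomes $(1-2z)/z-(1-z)/z=-1$ and likewise the coefficient of $U_1$ becomes $(1-2z)/(1-z)+z/(1-z)=1$. Hence the whole left-hand side collapses to the $z$-independent matrix identity
\be
\pa_z\big[z(1-z)\tr(U(z)R(z))\big]=\tr\big((U_1-U_0)R(z)\big).
\ee
Finally I read off $U_1-U_0$ from \eqref{eq:U}–\eqref{Amatrix}: the off-diagonal entries cancel, and the diagonal part is $U_1-U_0=-\tfrac{(\a+\b)(\a+\b+2N)+ (2N(\a+\b+N))\cdot 0\,}{\ \ }\cdots$; more cleanly, each diagonal entry of $-(U_0-U_1)$ equals $-(\a+\b+2N)$ times $\pm\tfrac12$, so $U_1-U_0=-(\a+\b+2N)\s_3/2$. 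Substituting gives exactly the right-hand side of \eqref{OnePointEq}.

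The computation is essentially mechanical; the one point requiring care is the cancellation $\tr(U[U,R])=\tr(UUR)-\tr(URU)=0$, which is where the Lax structure \eqref{LaxR} is used, and the bookkeeping of the rational coefficients of $U_0$ and $U_1$ when forming $(1-2z)U(z)+z(1-z)U'(z)$. I expect the only genuine (if minor) obstacle to be confirming that the arithmetic of the entries of $U_0-U_1$ in \eqref{eq:U}–\eqref{Amatrix} indeed produces precisely $(\a+\b+2N)\s_3/2$ with the correct sign; everything else follows by differentiation, the cyclic property of the trace, and collecting the partial-fraction coefficients. Note that this argument never uses the explicit Jacobi data $h_\ell^{\sf J}$ or the conjugation by $T$, so the same proof applies verbatim after replacing $U,R$ by $\wt U,\wt R$ as in \eqref{LaxTR}–\eqref{TATmatrix}, which is convenient for the subsequent derivation of $\mathscr F_{1,\infty}$ and $\mathscr F_{1,0}$.
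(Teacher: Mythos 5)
Your proposal is correct and follows essentially the same route as the paper: differentiate, kill the $\tr(U[U,R])$ term by the Lax equation and cyclicity of the trace, and reduce to the constant matrix identity $(1-2z)U(z)+z(1-z)U'(z)=-\tfrac{\a+\b+2N}{2}\s_3$, which you verify by partial-fraction bookkeeping and the observation that $4N(\a+\b+N)+(\a+\b)^2=(\a+\b+2N)^2$. The paper states this identity as "checked directly" from \eqref{eq:U}; you simply supply that check explicitly.
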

\begin{proof}
We compute
\be
\pa_z\left[z(1-z) \tr(U(z)R(z))\right]=(1-2z)\tr(U(z)R(z))+z(1-z)\tr(U'(z)R(z))+z(1-z)\tr(U(z)R'(z)).
\ee
The last term vanishes due to the Lax equation \eqref{LaxR}, because $\tr(U(z)[U(z),R(z)])=0$ by the cyclic property of the trace. Then we use the identity
\be
(1-2z)U(z)+z(1-z)U'(z)= - \frac{\a+\b+2N}{2}\s_3,
\ee
which can be checked directly from \eqref{eq:U}. The proof is complete.
\end{proof}

By this lemma and \eqref{111} we obtain
\be
\pa_z[z(1-z)\mathscr C_1(z)]=-(\a+\b+2N) \left(R_{1,1}(z)-\frac 12\right)+\frac {\a+\b}2=-(\a+\b+2N) \left(R_{1,1}(z)-1\right)-N,
\ee
where we use that $\tr R(z)=1$ to compute $\tr(R(z)\s_3)=2R_{1,1}(z)-1$, and we denote $R_{1,1}$ the $(1,1)$-entry of $R$.
Integrating this identity implies that for any $p\in\C\setminus[0,1]$ we have
\be
\label{zp}
z(1-z)\mathscr C_1(z)-p(1-p)\mathscr C_1(p)=(\a+\b+2N)\int_p^z\left(1-R_{1,1}(w)\right)\d w+N(p-z).
\ee

Letting $p\to 0$ in \eqref{zp} we have $p(1-p)\mathscr C_1(p)\to 0$ and so
\be
\mathscr C_1(z)=\frac{(\a+\b+2N)}{z(1-z)}\int_0^z\left(1-R_{1,1}(w)\right)\d w-\frac{N}{1-z}.
\ee
Expanding this identity at $z=0$ we get at the left hand side
\be
\mathscr C_1(z)\sim -\sum_{k\geq 0}\left\langle\tr X^{-k-1}\right\rangle z^{k}=\mathscr F_{1,0}(z),
\ee
and using Proposition \ref{Rexpansions} (note that $(TR(z)T)_{1,1}=R_{1,1}(z)$ because $T$ is diagonal) the formula for $\mathscr F_{1,0}(z)$ is proved.

Letting instead $p\to\infty$ we have $p(1-p)\mathscr C_1(p)\sim (1-p)N-\left\langle\tr X\right\rangle+\O(1/p)$ and therefore from \eqref{zp} we have (noting that $R_{1,1}(w)=1+\O(w^{-2})$ so the integral is well defined)
\be
z(1-z)\mathscr C_1(z)=(\a+\b+2N)\int_\infty^z\left(1-R_{1,1}(w)\right)\d w+(1-z)N-\left\langle\tr X\right\rangle.
\ee
We can compute
\be
\left\langle\tr X\right\rangle=\frac{N(\a+N)}{\a+\b+2N}
\ee
by expanding the general formula $\mathscr C_1(z)=\tr\left(Y_N^{-1}(z)Y_N'(z)\s_3/2\right)$ at $z=\infty$, using \eqref{monicJacobi} and the first few terms in \eqref{CauchyInf}. We finally obtain
\be
\mathscr C_1(z)=\frac{\a+\b+2N}{z(1-z)}\int_\infty^z\left(1-R_{1,1}(w)\right)\d w+\frac Nz -\frac{N(\a+N)}{z(1-z)(\a+\b+2N)}.
\ee
Expanding this identity at $z=\infty$ we get at the left hand side
\be
\mathscr C_1(z)\sim \sum_{k\geq 0}\frac{\left\langle\tr X^{k}\right\rangle}{z^{k+1}}=\frac Nz+\mathscr F_{1,\infty}(z)
\ee
and using Proposition \ref{Rexpansions} (again note that $(TR(z)T)_{1,1}=R_{1,1}(z)$ because $T$ is diagonal) the formula for $\mathscr F_{1,\infty}(z)$ is also proved.

\subsubsection{Case \texorpdfstring{$\ell\geq 2$}{l>1}}

In this case we note that
\be
\mathscr C_\ell^{\sf c}(z_1,\dots,z_\ell)=-\sum_{(i_1,\dots,i_\ell)\in\cyc((\ell))}\frac{\tr\left(\wt R(z_{i_1})\cdots \wt R(z_{i_\ell})\right)}{(z_{i_1}-z_{i_2})\cdots(z_{i_\ell}-z_{i_1})}-\frac{\delta_{\ell,2}}{(z_1-z_2)^2},\qquad \ell\geq 2
\ee
where $\wt R(z)=TR(z)T^{-1}$ as in \eqref{eq:TRT}. We now expand both sides of this identity at $z=0,\infty$. The expansion of the right hand side follows from Proposition \ref{Rexpansions} which asserts that $\wt R(z)\sim R^{[0]}(z),R^{[\infty]}(z)$ as $z\to 0,\infty$, respectively. For the left hand side instead, at $z\to 0$ we have
\be
\mathscr C_\ell^{\sf c}(z_1,\dots,z_\ell)\sim(-1)^\ell\sum_{k_1,\dots,k_\ell\geq 0}\left\langle\tr X^{-k_1-1}\cdots\tr X^{-k_\ell-1}\right\rangle^{\sf c}z_1^{k_1}\cdots z_\ell^{k_\ell}=\mathscr F_{\ell,0}^{\sf c}(z_1,\dots,z_\ell),
\ee
while at $z\to\infty$ we have
\be
\mathscr C_\ell^{\sf c}(z_1,\dots,z_\ell)\sim \sum_{k_1,\dots,k_\ell\geq 0}\left\langle\tr X^{k_1}\cdots\tr X^{k_\ell}\right\rangle^{\sf c}z_1^{-k_1-1}\cdots z_\ell^{-k_\ell-1}=\mathscr F_{\ell,\infty}^{\sf c}(z_1,\dots,z_\ell),
\ee
where in the last identity we use that terms with $k_i=0$ for some $i$ do not contribute to the sum; indeed the connected correlator $\left\langle\tr X^{k_1}\cdots\tr X^{k_\ell}\right\rangle^{\sf c}$ vanishes whenever $k_i=0$ for some $i$. The proof is complete.\hfill$\square$

\subsection*{Acknowledgements}
We thank M. Bertola and D. Yang for valuable conversations.
This project has received funding from the European Union's H2020 research and innovation programme under the Marie Sk\l odowska--Curie grant No. 778010 {\em  IPaDEGAN}. The research of G.R. is supported by the Fonds de la Recherche Scientifique-FNRS under EOS project O013018F.


\begin{thebibliography}{10}

\bibitem{AvM1995}
M.~Adler \& P.~van Moerbeke.
\newblock Matrix integrals, {T}oda symmetries, {V}irasoro constraints and orthogonal polynomials.
\newblock {\em Duke Math. J.}~80 (1995), no.~3, 863--911.  

\bibitem{AvM2001}
M.~Adler \& P.~van Moerbeke.
\newblock Integrals over classical groups, random permutations, {T}oda and {T}oeplitz lattices.
\newblock {\em Comm. Pure Appl. Math.}~54 (2001), no.~2, 153--205.

\bibitem{ACEH2020}
A.~Alexandrov, G.~Chapuy, B.~Eynard, \& J.~Harnad.
\newblock Weighted Hurwitz numbers and topological recursion. 
\newblock {\em Comm. Math. Phys.} 375 (2020), no. 1, 237--305.

\bibitem{BBE2015}
M.~Berg\`{e}re, G.~Borot, \& B.~Eynard.
\newblock Rational differential systems, loop equations and application to the q-th reduction of KP.
\newblock{\em Ann. Henri Poincar\'e}~16 (2015), no.~12, 2713--2782. 

\bibitem{BE2009}
M.~Berg\`ere \& B.~Eynard.
\newblock Determinantal formulas and loop equations.
\href{https://arxiv.org/abs/0901.3273}{arXiv:0901.3273}

\bibitem{BDY2016}
M.~Bertola, B.~Dubrovin, \& D.~Yang.
\newblock Correlation functions of the {K}d{V} hierarchy and applications to intersection numbers over {$\overline{\mathcal{M}}_{g,n}$}.
\newblock {\em Phys. D}~327 (2016), 30--57. 

\bibitem{BDY2018}
M.~Bertola, B.~Dubrovin, \& D.~Yang.
\newblock Simple Lie algebras and topological ODEs.
\newblock {\em Int. Math. Res. Not. IMRN}~2018, no.~5, 1368--1410. 

\bibitem{BEH2006}
M.~Bertola, B.~Eynard, \& J.~Harnad.
\newblock Semiclassical orthogonal polynomials, matrix models and isomonodromic tau functions.
\newblock {\em Comm. Math. Phys.}~263 (2006), no.~2, 401--437.

\bibitem{BHR2020}
M.~Bertola, J.~Harnad, \& B.~Runov.
\newblock Generating weighted Hurwitz numbers.
\newblock {\em J. Math. Phys.}~61 (2020), no.~1, Paper No.~013506, 16~pp.

\bibitem{BC2017}
M.~Bertola \& M.~ Cafasso.
\newblock The Kontsevich matrix integral: convergence to the Painlevé hierarchy and Stokes' phenomenon.
\newblock {\em Comm. Math. Phys.}~352 (2017), no.~2, 585--619.

\bibitem{BR2019}
M.~Bertola \& G.~Ruzza.
The Kontsevich--Penner matrix integral, isomonodromic tau functions and open intersection numbers.
\newblock{\em Ann. Henri Poincar\'e}~20 (2019), no.~2, 393--443. 

\bibitem{BR2019b}
M.~Bertola \& G.~Ruzza.
\newblock Brezin--Gross--Witten tau function and isomonodromic deformations.
\newblock{\em Commun. Number Theory Phys.}~13 (2019), no.~4, 827--883.

\bibitem{BIZ}
D.~Bessis, C.~Itzykson, \& J.B.~Zuber.
\newblock Quantum field theory techniques in graphical enumeration.
\newblock {\em Adv. in Appl. Math.}~1, (1980), Issue~2 109--157.

\bibitem{CE2006}
L.~Chekhov \& B.~Eynard.
\newblock Hermitian matrix model free energy: {F}eynman graph technique for all genera.
\newblock {\em J. High Energy Phys.}~2006, no.~3, 014, 18~pp.

\bibitem{CGM2015}
T.~Claeys, T.~Grava, \& K.D.T.-R.~McLaughlin.
\newblock Asymptotics for the partition function in two-cut random matrix models.
\newblock {\em Comm. Math. Phys.}~339 (2015), no.~2, 513--587.

\bibitem{CDO2018}
F.D.~Cunden, A.~Dahlqvist, \& N.~O'Connell.
\newblock Integer moments of complex Wishart matrices and Hurwitz numbers.
\newblock {\em Ann. Inst. Henri Poincar\'e D}~8 (2021), no.~2, 243--268. 

\bibitem{CMOS2019}
F.D.~Cunden, F.~Mezzadri, N.~O'Connell, \& N.~Simm.
\newblock Moments of random matrices and hypergeometric orthogonal polynomials.
\newblock {\em Comm. Math. Phys.}~369 (2019), no.~3, 1091--1145.

\bibitem{Deift}
P.~Deift.
\newblock Orthogonal polynomials and random matrices: a Riemann--Hilbert approach.
\newblock Courant Lecture Notes in Mathematics, 3. New York University, Courant Institute of Mathematical Sciences, New York; American Mathematical Society, Providence, RI, 1999.

\bibitem{DLYZ2020}
B.~Dubrovin, S.Q. Liu, D.~Yang, \& Y.~Zhang.
\newblock Hodge-GUE correspondence and the discrete KdV equation.
\newblock {\em Comm. Math. Phys.}~379 (2020), no.~2, 461--490. 

\bibitem{DY2017}
B.~Dubrovin \& D.~Yang.
\newblock Generating series for {GUE} correlators.
\newblock {\em Lett. Math. Phys.}~107 (2017), no.~11, 1971--2012.

\bibitem{DY2017b}
B.~Dubrovin \& D.~Yang.
\newblock On cubic {H}odge integrals and random matrices.
\newblock {\em Commun. Number Theory Phys.}~11 (2017), no.~2, 311--336.

\bibitem{DYZ2020}
B.~Dubrovin, D.~Yang, \& D.~Zagier.
\newblock Gromov--Witten invariants of the Riemann sphere.
\newblock {\em Pure Appl. Math. Q.}~16 (2020), no.~1, 153--190.

\bibitem{ELSV2001}
T.~Ekedahl, S.~Lando, M.~Shapiro, \& A.~Vainshtein.
\newblock Hurwitz numbers and intersections on moduli spaces of curves.
\newblock {\em Invent. Math.}~146 (2001), no.~2, 297--327.

\bibitem{ME2003}
N.M.~Ercolani \& K.D.T.-R.~McLaughlin.
\newblock Asymptotics of the partition function for random matrices via {R}iemann--{H}ilbert techniques and applications to graphical enumeration.
\newblock {\em Int. Math. Res. Not.}~2003, no.~14, 755--820.

\bibitem{EMP2008}
N.M.~Ercolani, K.D.T.-R.~McLaughlin, \& U.V.~Pierce.
\newblock Random matrices, graphical enumeration and the continuum limit of Toda lattices. 
\newblock {\em Comm. Math. Phys.}~278 (2008), no.~1, 31--81.

\bibitem{EKR2015}
B.~Eynard, T.~Kimura, \& S.~Ribault.
\newblock Random matrices.
\href{https://arxiv.org/abs/1510.04430}{arXiv:1510.04430}

\bibitem{EO2007}
B.~Eynard \& N.~Orantin.
\newblock Invariants of algebraic curves and topological expansion.
\newblock {\em Commun. Number Theory Phys.}~1 (2007), no.~2, 347--452.

\bibitem{FIK1992}
A.S.~Fokas, A.R.~Its, \& A.V.~Kitaev.
\newblock The isomonodromy approach to matrix models in 2D quantum gravity.
\newblock {\em Comm. Math. Phys.}~147 (1992), no.~2, 395--430.

\bibitem{F2010}
P.J.~Forrester.
\newblock {\em Log-gases and random matrices}, volume~34 of {\em London Mathematical Society Monographs Series}.
\newblock Princeton University Press, Princeton, NJ, 2010.

\bibitem{GGR2020}
M.~Gisonni, T.~Grava, \& G.~Ruzza.
\newblock Laguerre Ensemble: Correlators, Hurwitz Numbers and Hodge Integrals.
\newblock {\em Ann. Henri Poincar\'e}~21 (2020), no.~10, 3285--3339. 

\bibitem{GGPN2014}
I.P.~Goulden, M.~Guay-Paquet, \& J.~Novak.
\newblock Monotone {H}urwitz numbers and the {HCIZ} integral.
\newblock {\em Ann. Math. Blaise Pascal}~21 (2014), no.~1, 71--89. 

\bibitem{GGPN2016}
I.P.~Goulden, M.~Guay-Paquet, \& J.~Novak.
\newblock Toda equations and piecewise polynomiality for mixed double {H}urwitz numbers.
\newblock {\em SIGMA Symmetry Integrability Geom. Methods Appl.}~12 (2016), Paper No.~040, 10~pp.

\bibitem{GGPN2017}
I.P.~Goulden, M.~Guay-Paquet, \& J.~Novak.
\newblock On the convergence of monotone {H}urwitz generating functions.
\newblock {\em Ann. Comb.}~21 (2017), no.~1, 73--81.

\bibitem{GPH2015}
M.~Guay-Paquet \& J.~Harnad.
\newblock 2{D} {T}oda {$\tau$}-functions as combinatorial generating functions.
\newblock {\em Lett. Math. Phys.}~105 (2015), no.~6, 827--852.

\bibitem{HO2015}
J.~Harnad \& A.Y.~Orlov.
\newblock Hypergeometric {$\tau$}-functions, {H}urwitz numbers and enumeration of paths.
\newblock {\em Comm. Math. Phys.}~338 (2015), no.~1, 267--284.

\bibitem{Hurwitz}
A.~Hurwitz.
\newblock Ueber die Anzahl der Riemann’schen Fl\"achen mit gegebenen Verzwei-gungspunkten.
\newblock {\em Math. Ann.}, 55:1 (1901), 53--66.

\bibitem{Mourad}
Mourad~E.H.~Ismail.
\newblock Classical and Quantum Orthogonal Polynomials in One Variable. 
\newblock Encyclopedia of Mathematics and its Applications, 98. {\em Cambridge University Press, Cambridge,} 2009.

\bibitem{JKM2020}
B.~Jonnadula, J.P.~Keating, \& F.~Mezzadri.
\newblock Symmetric Function Theory and Unitary Invariant Ensembles.
\newblock {\em J. Math. Phys.}~62 (2021), no.~9, Paper No.~093512, 34~pp

\bibitem{J1974}
A.-A.A.~Jucys.
\newblock Symmetric polynomials and the center of the symmetric group ring.
\newblock {\em Rep. Mathematical Phys.}~5 (1974), no.~1, 107--112. 

\bibitem{KS1994}
R.~Koekoek \& R.F.~Swarttouw.
\newblock The Askey-scheme of hypergeometric orthogonal polynomials and its {$q$}-analogue.
\newblock Delft University of Technology, Faculty of Technical Mathematics and Informatics, Report no.~94-05, 1994.

\bibitem{K1992}
M.~Kontsevich.
\newblock Intersection theory on the moduli space of curves and the matrix Airy function. 
\newblock {\em Comm. Math. Phys.}~147 (1992), no.~1, 1--23. 

\bibitem{Macdonald}
I.G.~Macdonald.
\newblock Symmetric functions and Hall polynomials.
\newblock Second edition. The Clarendon Press, Oxford University Press, New York, 2015.

\bibitem{M1981}
G.E.~Murphy.
\newblock A new construction of Young's seminormal representation of the symmetric group.
\newblock {\em J. Algebra}~69 (1981), no.~2, 287--297.

\bibitem{O2000}
A.~Okounkov.
\newblock Toda equations for {H}urwitz numbers.
\newblock {\em Math. Res. Lett.}~7 (2000), no.~4, 447--453.

\bibitem{Serre}
J.-P.~Serre.
\newblock Linear representations of finite groups.
\newblock Graduate Texts in Mathematics, Vol. 42. Springer-Verlag, New York-Heidelberg, 1977.

\bibitem{Stanley}
R.P.~Stanley.
\newblock Enumerative Combinatorics: Volume 2.
\newblock Cambridge University Press, First edition, 2001. 

\bibitem{Wilson} 
J.~Wilson.
\newblock Some hypergeometric orthogonal polynomials.
\newblock {\em SIAM Journ. Math. Analysis}~11 (1980), no.~4, 690--701.

\bibitem{Y2020}
D.~Yang.
\newblock On tau-functions for the Toda lattice hierarchy.
\newblock {\em Lett. Math. Phys.}~110 (2020), no.~3, 555--583.
\end{thebibliography}
\end{document}